\newtheorem{observation}[theorem]{Observation}
\DeclareMathOperator{\size}{Size}
\DeclareMathOperator{\poly}{Poly}
\DeclareMathOperator{\pr}{Pr}
\DeclareMathOperator{\supp}{Support}
\newcommand{\qspnew}{\Sigma{\Pi^{(k)}}\Sigma\Pi}
\newcommand{\qspnewn}{\Sigma{\Pi^{(n)}}\Sigma\Pi}
\newcommand{\qspnewbounded}{\Sigma{\Pi^{(k)}}\Sigma\Pi^{[d]}}
\newcommand{\qspgeneral}{\Sigma{\Gamma^{(k)}}\Sigma\Pi}
\newcommand\NW{\mathsf{NW}}
\newcommand\gG{\mathsf{NW\circ Lin_{q, n, e, p}}}
\newcommand\gH{\mathsf{NW\circ Lin}}
\renewcommand\dim{\mathsf{Dim}}
\newcommand\h{\mathsf{Hom}}
\newcommand\mult{\mathsf{Mult}}
\newcommand{\VP}{\cclass{VP}}
\newcommand{\VNP}{\cclass{VNP}}
\newcommand{\NP}{\cclass{NP}}
\newcommand{\cP}{\cclass{P}}
\newcommand\F{{\mathbb{F}}}
\newcommand{\field}[1]{\mathbb{#1}}
\mathchardef\mhyphen="2D 
\begin{document}

\begin{frontmatter}
\title{Arithmetic Circuits with\\ Locally Low Algebraic Rank\titlefootnote{A 
conference version of this paper appeared in the 
 Proceedings of the Conference on Computational Complexity, 
 2016~\cite{KS-lowalgebraicrank}.}}

\author[kumar]{Mrinal Kumar\thanks{Research supported in part by NSF grant CCF-1253886 and by a Simons Graduate Fellowship.}}
\author[saraf]{Shubhangi Saraf\thanks{Research supported by NSF grant CCF-1350572.}}

\begin{abstract}
In recent years, there has been a flurry of activity towards proving lower bounds for %
homogeneous depth-4 arithmetic circuits (Gupta et al., Fournier et al., Kayal et al., Kumar-Saraf), which has brought us very close 
 to statements that are known to imply $\cclass{VP} \neq \cclass{VNP}$. It is open if these techniques can go beyond homogeneity, and in this paper we make progress 
in this direction  %
by considering depth-4 circuits of \emph{low algebraic rank}, which 
are a natural extension of homogeneous depth-4 arithmetic circuits. 

A  depth-4 circuit is a representation of an $N$-variate, 
degree-$n$   %
polynomial $P$  as 
\[
P = \sum_{i = 1}^T  Q_{i1}\cdot  Q_{i2}\cdot  \cdots Q_{it} \; ,
\]
where the $Q_{ij}$ are given by their monomial expansion. Homogeneity adds the constraint that for every $i \in [T]$, $\sum_{j} \deg(Q_{ij}) = n$. We study an extension, %
 where, for every $i \in [T]$, the \emph{algebraic rank} of the set
$\{Q_{i1},  Q_{i2}, \ldots ,Q_{it}\}$ of polynomials %
is at most some parameter  $k$.  We call this the class of $\qspnew$ circuits. Already for $k = n$, these circuits are a strong generalization of the class of homogeneous depth-4 circuits, where in particular $t \leq n$ (and hence $k \leq n$).

We study lower bounds and polynomial identity tests for such circuits and prove the following results. 

\begin{enumerate}
\item {\bf Lower bounds.}  %
We give an explicit family of polynomials $\{P_n\}$ of
degree~$n$
in $N = n^{O(1)}$ variables in $\VNP$, such that any $\qspnewn$
circuit computing $P_n$ has size at least $\exp{(\Omega(\sqrt{n}\log N))}$.
This strengthens and unifies two lines of work: it generalizes the recent exponential lower bounds for \emph{homogeneous} depth-4 circuits~(Kayal et al.\ and Kumar-Saraf) as well as the Jacobian based lower bounds of  Agrawal et al.\ which worked for $\qspnew$ circuits in the restricted setting where  $T\cdot k \leq n$.  

\item {\bf Hitting sets.}  %
  Let $\qspnewbounded$ be the class of $\qspnew$ circuits with bottom fan-in at most $d$. We show that if $d$ and $k$ are at most $\poly(\log N)$, then there is an explicit hitting set for $\qspnewbounded$ circuits of size quasipolynomially bounded in $N$ and the size of the circuit. This strengthens a result of Forbes
who constructed  %
such
quasipolynomial-size   %
hitting sets in the setting where $d$ and $t$ are at most  $\poly(\log N)$. 
\end{enumerate}

A key technical ingredient of the proofs is a result which states that over any field of characteristic zero (or sufficiently large characteristic),
up to  %
a translation, every polynomial in a set of
polynomials can be written as a function of the polynomials
in a transcendence basis of the set.
We believe this may be of independent interest. We combine this with
methods based on shifted partial derivatives %
to obtain our final results.
\end{abstract}

\iffalse %
%
%
%
%
%
%
%
%
%
%
%
%
%
%
%
%
%
%
%
%
%

\tocarxivcategory{cs.CC}

\fi %

\end{frontmatter}

\section{Introduction}

Arithmetic circuits are natural algebraic analogues of Boolean circuits, with the logical operations being replaced by sum and product operations over the underlying field. Valiant~\cite{Valiant79} developed the complexity theory for algebraic computation via arithmetic circuits and defined the complexity classes $\VP$ and $\VNP$ as the algebraic analogs of complexity classes $\cP$ and $\NP$ respectively. We refer the interested reader to the survey by Shpilka and Yehudayoff~\cite{SY10} for more on arithmetic circuits. 

Two of the most fundamental questions in the study of algebraic computation are the questions of \emph{polynomial identity testing}(PIT)\footnote{Given an arithmetic circuit, the problem is to decide if it computes the identically zero polynomial. In the whitebox
setting  %
we are allowed to look inside the wirings of the circuit, while in the blackbox setting, we can only query the circuit at some points.} and the question of proving \emph{lower bounds} for explicit polynomials. It was shown by structural results known as \emph{depth reductions}~\cite{AV08,  Koiran-depthreduction, Tavenas13} that strong enough lower bounds or PIT results for just (homogeneous) depth-4 circuits, would lead to superpolynomial lower bounds and derandomized PIT for general circuits too. Consequently, depth-4 arithmetic circuits have been the focus of much investigation in the last few years.   

Just in the last few years, we have seen rapid progress in proving lower bounds for homogeneous depth-4 arithmetic circuits, starting with the work of Gupta et al.~\cite{GKKS12} who proved exponential lower bounds for homogeneous depth-4 circuits with bounded bottom fan-in and terminating with the results of Kayal et al.~\cite{KLSS14} and of the authors of this paper~\cite{KS-full}, which showed exponential lower bounds for general homogeneous depth-4 circuits. Any asymptotic improvement in the exponent of these lower bounds would lead to superpolynomial lower bounds for general arithmetic circuits.\footnote{We refer the interested reader to the surveys of recent lower bounds results by Saptharishi~\cite{Saptharishi-survey1, Saptharishi-survey2}.} Most of this progress was based on an understanding of the complexity measure of the family of \emph{shifted partial derivatives} of a polynomial (this measure was introduced by Kayal~\cite{Kayal12}), and other closely related measures.

Although we now know how to use these measure to prove such strong lower bounds for homogeneous depth 4 circuits, the best known lower bounds for non-homogeneous depth three circuits over fields of characteristic zero are just cubic~\cite{SW01, shp01, KST16}, and those for non-homogeneous depth-4 circuits over any field except $\F_2$ are just about superlinear~\cite{Raz10b}. It remains an extremely interesting question to get improved lower bounds for these circuit classes.

In sharp contrast to this state of knowledge on lower bounds, the problem of polynomial identity testing is very poorly understood even for depth three circuits. Till a few years ago, almost all the PIT algorithms known were for extremely restricted classes of circuits and were based on diverse proof techniques (for instance, \cite{DS06, KayalSaxena07, karninshpilka08, KayalSaraf09, KMSV10, SaxenaSeshadhri10, SaxenaSeshadhri10b,  SarafV11, ASSS12, ForbesS13,
OSV14}). The
paper by
Agrawal et al.~\cite{ASSS12} gave a unified proof of several of them.

It is a big question to go \emph{beyond homogeneity} (especially for proving lower bounds) and in this paper we make progress towards this question by considering depth-4 circuits of \emph{low algebraic rank},\footnote{The algebraic rank of a set of polynomials is the size of the maximal subset of this set which are algebraically independent. See \expref{Section}{sec:prelims} for formal definitions.} which are a natural extension of homogeneous depth-4 arithmetic circuits.

A  depth-4 circuit is a representation of an $N$-variate, degree-$n$ polynomial $P$  as 
\[
P = \sum_{i = 1}^T  Q_{i1}\cdot  Q_{i2}\cdot  \cdots Q_{it}
\]
where the $Q_{ij}$ are given by their monomial expansion. Homogeneity adds the constraint that for every $i \in [T]$, $\sum_{j} \deg(Q_{ij}) = n$. 
We study an extension where, for every $i \in [T]$, the \emph{algebraic rank} of the set
$\{Q_{i1},  Q_{i2}, \ldots ,Q_{it}\}$ of polynomials %
is at most some parameter  $k$.  We call this the class of $\qspnew$ circuits. Already for $k = n$, these circuits are a strong generalization of the class of homogeneous depth-4 circuits, where in particular $t \leq n$ (and hence $k \leq n$).

We prove exponential lower bounds for $\qspnew$ circuits for $k \leq n$ and give quasipolynomial time deterministic  polynomial identity tests for $\qspnew$ circuits when $k$ and the bottom fan-in are bounded by $\poly(\log N)$. 
All our results actually hold for a more general class of circuits, where the product gates at the second level can be replaced by an arbitrary circuits whose inputs are polynomials of algebraic rank at most $k$. 
In particular, our results hold for representations of a polynomial $P$ as 
\[
P = \sum_{i = 1}^T  C_i\left(Q_{i1}, Q_{i2}, \ldots, Q_{it}\right)
\]
where, for every $i \in [T]$, $C_i$ is an arbitrary polynomial function of $t$ inputs, and the algebraic rank of the set
$\{Q_{i1},  Q_{i2}, \ldots ,Q_{it}\}$ of polynomials %
is at most some parameter  $k$.

\subsection{Some background and motivation}
Before we more formally define the model and state our results, we give some background and motivation for studying this class of circuits. 
\paragraph{Strengthening of the model of homogeneous depth-4 circuits.}
As already mentioned, we know very strong exponential lower bounds for homogeneous depth-4 arithmetic circuits. In contrast, for general (non-homogeneous) depth-4 circuits, we know only barely superlinear lower bounds, and it is a challenge to obtain improved bounds. $\qspnew$ circuits with $k$ as large as $n$ (the degree of the polynomial being computed), which is the class we study in this paper, is already a significant strengthening of the model of homogeneous depth-4 circuits (since the intermediate degrees could be exponentially large). We provide exponential lower bounds for this model. Note that when $k = N$, $\qspnew$ circuits would capture general depth-4 arithmetic circuits. 

\paragraph{Low algebraic rank and lower bounds.}
In a recent
paper,  %
Agrawal et al.~\cite{ASSS12} studied the notion of circuits of low algebraic rank and by using the Jacobian to capture the notion of algebraic independence, they were able to
prove %
exponential lower bounds for a certain class of arithmetic circuits.\footnote{Even more significantly they also give efficient PIT algorithms for the same class of circuits.} They showed that over fields of characteristic zero, for any set  $\{Q_1, Q_2, \ldots, Q_t\}$ of polynomials of sparsity at most $s$ and algebraic rank $k$, any arithmetic circuit of the form $C(Q_1, Q_2, \ldots, Q_t)$  which computes the determinant polynomial for an $n \times n$ symbolic matrix must
have
$s \geq \exp{(n/k)}$.
Note that  %
if $k = \Omega(n)$, then the lower bound becomes trivial. The lower bounds in this paper strengthen
these results %
in two ways. 

\begin{enumerate}
\item Our lower bounds hold for a
(potentially) richer class of circuits. In the model considered by~\cite{ASSS12}, one imposes a global upper bound $k$ on the rank of
all the $Q_i$   %
feeding into some polynomial $C$. In our model, we can take exponentially many different
sets of polynomials $Q_i$,
each with bounded rank, and apply some polynomial function to each of them and then take a sum.
\item Our lower bounds are stronger---we obtain exponential lower bounds even when $k$ is as large as the degree of the polynomial being computed. 
\end{enumerate}

\paragraph{Algebraic rank and going beyond homogeneity.}
Even though we know exponential lower bounds for homogeneous\footnote{These results,
in fact,
hold for depth-4 circuits with not-too-large formal degree.} depth-4 circuits, the best known lower bounds for non-homogeneous depth-4 circuits are  
barely superlinear~\cite{Raz10b}. 

Grigoriev-Karpinski~\cite{GK98}, Grigoriev-Razborov~\cite{GR00} and Shpilka-Wigderson~\cite{SW01} outlined a program based on ``rank'' to prove lower bounds for arithmetic circuits. They used the notion of ``linear rank'' and used it to prove lower bounds for depth-3 arithmetic circuits in
the following way.  %
Let $C = \sum_{i = 1}^T\prod_{j = 1}^t L_{ij}$ be a depth three (possibly nonhomogeneous) circuit computing a polynomial $P$ of degree-$n$. Now, partition the inputs to the top sum gate to two halves, $C_1$ and $C_2$ based on the rank of the inputs feeding into it in the following way. For each $i \in [T]$, if the linear rank of the set  $\{L_{ij} : j \in [t]\}$ of polynomials is at most $k$ (for some threshold $k$), then include the gate $i$ into the sum $C_1$, else include it into $C_2$. Therefore, $$C = C_1 + C_2\,.$$
Their program had two steps.
\begin{enumerate}
  \item Show that the subcircuit $C_1$ is \emph{weak} with respect to some complexity measure, and thus
prove  %
a lower bound for $C_1$ (and hence $C$) when $C_2$ is trivial. 
\item Also since $C_2$ is ``high rank,'' show that there are many inputs for which $C_2$ is identically zero. Then try to look at restrictions over which $C_2$ is identically zero, and show that the lower bounds for $C_1$ continue to hold.
\end{enumerate}

The following is the natural generalization of this approach to proving lower bounds for depth-4 circuits. 
Let $C = \sum_{i = 1}^T\prod_{j = 1}^t Q_{ij}$ be a depth-4 circuit computing a polynomial $P$ of degree-$n$. 
Note that in general, the formal degree of $C$ could be much larger than $n$. Now, we partition the inputs to the top sum gate to two halves, $C_1$ and $C_2$ based on the \emph{algebraic rank} of the inputs feeding into it in the following way. For each $i \in [T]$, if the algebraic rank of the set  $\{Q_{ij} : j \in [t]\}$ of polynomials is at most $k$ (for some threshold $k$), then we include the gate $i$ into the sum $C_1$ else we include it into $C_2$. Therefore, $$C = C_1 + C_2\,.$$
To implement the G-K, G-R and S-W program, as a first step one would show that the subcircuit $C_1$ is \emph{weak} with respect to some complexity measure,
and thus
prove  %
a lower bound for $C_1$ (and hence $C$) when $C_2$ is trivial. The second step would be to try to look at restrictions over which $C_2$ is identically zero, and show that the lower bounds for $C_1$ continue to hold.

For the case of depth-4 circuits, even the first step of
proving  %
lower bounds when $C_2$ is trivial was not known prior to this work (even for $k = 2$). Our results in this paper are an implementation of this first step, as we
prove   %
exponential lower bounds when the algebraic rank of inputs into each of the product gates is at most $n$ (the degree of the polynomial being computed).

\paragraph{Connections to divisibility testing.}
Recently, Forbes~\cite{Forbes-personal} showed that given two sparse multivariate polynomials $P$ and $Q$, the question of deciding if $P$ divides $Q$ can be reduced to the question of polynomial identity testing for $\Sigma\Pi^{(2)}\Sigma\Pi$ circuits. This question was one of the original motivations for this paper. Although we are unable to answer this question in general,  we make some progress towards it by giving a quasipolynomial identity tests for $\qspnew$ circuits when the various $Q_{ij}$ feeding into the circuit have degree bounded by $\poly(\log N)$ (and we are also able to handle $k$ as large as $\poly(\log N)$).

\paragraph{Low algebraic rank and PIT.} 
Two very interesting PIT results which are also very relevant to the results in this paper are those of Beecken et al.~\cite{BMS11} and those of Agrawal et al.~\cite{ASSS12}. The key idea explored in both these papers is that of algebraic independence. Together, they imply efficient deterministic PIT for polynomials which can be expressed in the form $C(Q_1, Q_2, \ldots, Q_t)$, where $C$ is a circuit of polynomial degree  and $Q_i's$ are either sparse polynomials or product of linear forms, such that the algebraic rank of $\{Q_1, Q_2, \ldots, Q_t\}$ is bounded.\footnote{See \expref{Section}{sec:prelims} for definitions.} This approach was extremely powerful as Agrawal et al.~\cite{ASSS12}
demonstrate   %
that they can use this approach to recover many of the known PIT results, which otherwise had very different proofs techniques. The PIT results of this paper hold for a variation of the model just described and we describe it in more detail in \expref{Section}{section:pitresults}.

\paragraph{Polynomials with low algebraic rank.}
In addition to potential applications to arithmetic circuit complexity, it seems an interesting mathematical question to understand the structure of a set of algebraically dependent polynomials.  In general, our understanding of algebraic dependence is not as clear as our understanding of linear dependence. For instance, we know that if a set of polynomials is linearly dependent, then every polynomial in the set can be written as a linear combination of the polynomials in the basis. However, for higher degree dependencies (linear dependence is dependency of degree-$1$), we do not know any such clean statement. As a significant core of our proofs, we prove a statement of this 
flavor in \expref{Lemma}{lem:using algebraic dependence-intro}.

We now formally define the model of computation studied in this paper, and then state and discuss our results. 
\subsection{Model of computation}
We start with the definition of algebraic dependence. See \expref{Section}{sec:prelims} for more details. 
\begin{definition}[Algebraic independence and algebraic rank]\label{def:alg-indepence}
  Let $\F$ be any field. A set
  \[
    {\cal Q} = \{Q_1, Q_2, \ldots, Q_t\} \subseteq \F[X_1, X_2, \ldots, X_N]
  \]
  of polynomials is said to be algebraically independent over $\F$ if there is no nonzero polynomial $R \in \F[Y_1, Y_2, \ldots, Y_t]$ such that $R(Q_1, Q_2, \ldots, Q_t)$ is identically zero. 

A maximal subset of $\cal Q$ which is algebraically independent is said to be a transcendence basis of $\cal Q$ and the size of such a set is said to be the algebraic rank of $\cal Q$.
\end{definition}

It is known that algebraic independence satisfies the Matroid
property~\cite{Oxley06}, and therefore
the algebraic rank %
is well defined. We are now ready to define the model of computation. 

\begin{definition}~\label{def:lb-model}
Let $\F$ be any field. A $\qspnew$ circuit $C$ in $N$ variables over
$\F$ is a representation of an
$N$-variate %
polynomial as 
\[C =  \sum_{i = 1}^T  Q_{i1}\cdot  Q_{i2}\cdots Q_{it} 
\]  
for some $t, T$ %
such that for each $i \in [T]$, the algebraic rank of the set
$\{Q_{ij} : j \in [t]\}$ of polynomials %
is at most $k$. Additionally, if for every $i \in [T]$ and $j \in [t]$, the degree of $Q_{ij}$ is at most $d$, we say that $C$ is a $\qspnewbounded$ circuit. 
\end{definition}

We will state all our results for $\qspnew$ and $\qspnewbounded$ circuits. However, the results in this paper hold for a more general class of circuits where the product gates at the second level can be replaced by
arbitrary polynomials. This larger class of circuits will be crucially used in our proofs and we
define it formally below.  %
\begin{definition}~\label{def:lb-modelnew}
  Let $\F$ be any field. A $\qspgeneral$ circuit $C$ in $N$ variables over $\F$ is a representation of an
$N$-variate %
polynomial %
as 
\[C =  \sum_{i = 1}^T  \Gamma_i(Q_{i1}, Q_{i2},  \ldots, Q_{it}) \]  
for some $t, T$ %
such that $\Gamma_i$ is an arbitrary polynomial in $t$ variables, and for each $i \in [T]$, the algebraic rank of the set
$\{Q_{ij} : j \in [t]\}$ of polynomials %
is at most $k$. Additionally, if for every $i \in [T]$ and $j \in [t]$, the degree of $Q_{ij}$ is at most $d$, we say that $C$ is a $\qspgeneral^{[d]}$ circuit. 
\end{definition}
\begin{definition}[Size of a circuit]
  The \emph{size} of  a $\qspnew$ or a $\qspgeneral$ circuit $C$ is defined as the maximum of $T$ and the number of monomials in the set  \[
    \left(\bigcup_{i \in [T], j \in [t]}\supp(Q_{ij})\right)\,.
  \]
  Here for a polynomial $Q$, $\supp(Q)$ is the set of all monomials which appear with a non-zero coefficient in $Q$. 
\end{definition}

A $\qspnew$ circuit $C$ for which the polynomials $\{Q_{ij} : i \in [T], j \in [t]\}$ are homogeneous polynomials such that for every $i \in [T]$, \[\sum_{j \in [t]} \deg(Q_{ij}) = \deg(P)\] (where $P$ is the polynomial being computed)\footnote{Observe that in this case, $k \leq t \leq \deg(P)$.} is the class of homogeneous depth-4 circuits. 
If we drop the condition of homogeneity, then in general the value of $t$ could be much larger than $\deg(P)$
and %
the degrees of the $Q_{ij}$ could be much larger than $\deg(P)$. 
Thus, the class of $\qspnew$ circuits with $k$ equaling the degree of the polynomial being computed could potentially be a larger class of circuits compared to that of homogeneous depth-4 circuits. 

Also note that in the definition of $\qspnew$ circuits, the bound on the algebraic rank is local for each $i \in [T]$, and in general, the algebraic rank of the entire set $\{Q_{ij} : i \in [T], j \in [t]\}$ can be as large as $N$.

\subsection{Our results}
We now state our results and discuss how they relate to other known results.

\subsubsection{Lower bounds}
As our first  result, we
give  %
exponential lower bounds on the size of $\qspnew$ circuits computing an explicit polynomial when the algebraic rank ($k$)  is at most the degree ($n$) of the polynomial being computed. 
\begin{theorem}~\label{thm:lower bound}
Let $\F$ be any field of characteristic zero.\footnote{Sufficiently large characteristic suffices.} There exists a family $\{P_n\}$ of polynomials in $\VNP$, such that $P_n$ is a polynomial of degree-$n$ in $N = n^{O(1)}$ variables with $0,1$ coefficients, and for any $\qspnew$ circuit $C$, if $k \leq n$ and if $C$ computes $P_n$ over $\F$, then $$ \size(C) \geq N^{\Omega(\sqrt{n})}\,.$$
\end{theorem}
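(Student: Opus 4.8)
The plan is to run the ``complexity measure'' strategy behind the recent homogeneous depth-$4$ lower bounds, but to replace the use of homogeneity by \expref{Lemma}{lem:using algebraic dependence-intro}. For the hard polynomial I would take $P_n := \nwd$, the Nisan--Wigderson design polynomial with parameters tuned so that $P_n$ is a multilinear, degree-$n$, $0/1$-coefficient polynomial in $N = n^{O(1)}$ variables lying in $\VNP$. As the measure I would use a \emph{projected shifted partial derivatives} measure: for parameters $\ell, m$, set $\Gamma_{\ell,m}(f) := \dim_{\F}\!\big(\mult\big(\partial^{=\ell} f\cdot \F[\mathbf x]^{=m}\big)\big)$, i.e.\ the dimension of the span of all order-$\ell$ partial derivatives of $f$, each multiplied by every degree-$m$ monomial, projected onto its multilinear part. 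Since $\Gamma_{\ell,m}$ is subadditive, a $\qspgeneral$ circuit $C = \sum_{i=1}^T \Gamma_i(Q_{i1},\dots,Q_{it})$ computing $P_n$ satisfies $\Gamma_{\ell,m}(P_n)\le T\cdot\max_i\Gamma_{\ell,m}\big((\Gamma_i(Q_{i1},\dots,Q_{it}))_{\le n}\big)$ --- note we may replace each gate by its degree-$\le n$ truncation since $P_n = \sum_i (\Gamma_i(\mathbf Q_i))_{\le n}$ --- and since $\size(C)\ge T$, it suffices to upper bound the measure of a single truncated gate and to lower bound the measure of $P_n$.

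The crux is the single-gate bound. Fix $i$ and write $g := \Gamma_i(Q_{i1},\dots,Q_{it})$, where the set $\{Q_{ij}:j\in[t]\}$ has algebraic rank at most $k\le n$. By \expref{Lemma}{lem:using algebraic dependence-intro}, after a translation $\mathbf x\mapsto \mathbf x + \mathbf a_i$ each $Q_{ij}$ can be expressed in terms of a fixed transcendence basis $B_1,\dots,B_{k'}$ of the (translated) set, with $k'\le k$; consequently $g$ itself equals $H(B_1,\dots,B_{k'})$ for some polynomial $H$. By the chain rule, every order-$\ell$ partial derivative of $H(B_1,\dots,B_{k'})$ lies in the $\F[\mathbf x]$-span of $\{(\partial^{\beta}_{\mathbf Y}H)(B_1,\dots,B_{k'}) : |\beta|\le\ell\}$, with coefficients that are sums of products of at most $\ell$ first-order partials of the $B_r$'s. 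After truncating to degree $n$, shifting by degree-$m$ monomials, and projecting to the multilinear part, this gives, schematically,
\[
\Gamma_{\ell,m}(g_{\le n})\ \le\ \binom{k'+\ell}{\ell}\cdot\operatorname{poly}(N)\cdot\big(\#\text{ multilinear monomials of degree }\le m + n\big),
\]
where the $\operatorname{poly}(N)$ factor absorbs the bounded window of homogeneous components that survive the truncation. The role of \expref{Lemma}{lem:using algebraic dependence-intro} is exactly to make the arity of the composition at most $k$, so that the only ``$\exp(\tilde\Omega(\sqrt n))$''-type factor is $\binom{k+\ell}{\ell}$, precisely as if $g$ were a product of $\le k$ polynomials.

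For the lower bound on $\Gamma_{\ell,m}(P_n)$ I would invoke the standard estimate for the Nisan--Wigderson polynomial, whose monomials form a combinatorial design: order-$\ell$ derivatives have ``spread-out'' leading monomials, and a rank/counting argument in the spirit of Kayal et al.\ and Kumar--Saraf yields a gain of $N^{\Omega(\sqrt n)}$ over the trivial count of multilinear monomials of degree $\le m$. Dividing by the single-gate bound and optimizing $\ell$ of order $\sqrt n$ and $m$ near $N/2$ then forces $T\ge N^{\Omega(\sqrt n)}$, and hence $\size(C)\ge N^{\Omega(\sqrt n)}$. I expect the main obstacles to be the two ways this departs from the homogeneous case. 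First, the $Q_{ij}$, the basis polynomials $B_r$, and the gate $g$ can have degree far exceeding $n$, so passing to the truncation $g_{\le n}$, and checking that the chain-rule span survives truncation with only a $\operatorname{poly}(N)$ loss, must be done carefully --- this is where the fact that the circuit computes a degree-$n$ polynomial is really used. Second, the translation $\mathbf a_i$ in \expref{Lemma}{lem:using algebraic dependence-intro} is gate-dependent, so one cannot translate $P_n$ once and for all; this must be handled either by translating the basis polynomials back (so that $g(\mathbf x)$ is a genuine function of the $N$-variate polynomials $B_r(\mathbf x - \mathbf a_i)$, leaving no translation in the final statement) or by showing $\Gamma_{\ell,m}$ is robust under affine changes of variables. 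Making the quantitative trade-off between $k$, $\ell$, $m$ and $\size(C)$ close with $k$ as large as $n$ --- so that $\binom{k+\ell}{\ell}$ stays strictly below the $N^{\Omega(\sqrt n)}$ gain from the design property --- is the delicate part of the calculation.
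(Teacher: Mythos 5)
The overall strategy is right---projected shifted partial derivatives, a Nisan--Wigderson-style hard polynomial, and \expref{Lemma}{lem:using algebraic dependence-intro} to cut the effective arity of each gate down to roughly $k$---but there is a genuine gap in the single-gate upper bound, and it is exactly the step where the paper has to work hardest. In your chain-rule accounting the only control you have on the coefficient terms (products of at most $\ell$ first-order partials of the basis polynomials) is that, after truncating to degree $n$, they have \emph{degree} at most $n$. Consequently the surviving multilinear monomials after shifting by degree $m$ can have support up to $m+n$, and your single-gate bound has the shape $\binom{k+\ell}{\ell}\cdot\poly(N)\cdot\binom{N}{\le m+n}$. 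This cannot close the argument: the measure of the target polynomial is lower bounded by roughly $\binom{N}{m+n-r}\cdot\exp(-O(\log^2 n))$ (\expref{Lemma}{lem:KS-tight-bound}), and with $m\approx (N/2)(1-\epsilon)$ the quantity $\binom{N}{m+n}$ \emph{exceeds} $\binom{N}{m+n-r}$ by a factor $(1+\epsilon)^{2r}=\poly(n)$, so the resulting inequality on $T$ is vacuous. The $\binom{k+\ell}{\ell}$ saving that \expref{Lemma}{lem:using algebraic dependence-intro} buys you is real but is overwhelmed by the loss in the binomial coefficient.

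What is missing is a bound on the \emph{support} (not degree) of the monomials of the $Q_{ij}$, and the paper obtains it by a random-restriction step that does not appear in your sketch. If every monomial of every $Q_{ij}$ has support at most $s$, then a product of $r$ first-order partials of the basis polynomials has monomials of support at most $rs$, and the shifted-projected objects live in degree at most $m+rs$ (\expref{Lemma}{lem:proj shifted partials upper bound for functions of polynomials of low support}); with $r=\sqrt{n}$ and $s=\sqrt{n}/100$, the ratio $\binom{N}{m+n-r}/\binom{N}{m+rs}=(1+\epsilon)^{\Theta(n)}=N^{\Omega(\sqrt{n})}$ is exactly the desired gain. The support bound is enforced by noting that a circuit of size at most $N^{\delta s/2}$ has at most that many monomials across all $Q_{ij}$, so a random restriction keeping each variable alive with probability $N^{-\delta}$ kills every surviving monomial of support $>s$ with probability $1-o(1)$ (\expref{Lemma}{lem: circuit complexity bound}). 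This forces two further departures from your plan: the hard polynomial cannot be the bare $\NW$ polynomial, which would be killed by the same restriction with constant probability---the paper uses $\gH$, obtained by composing $\NW$ coordinatewise with disjoint sums of $N^{1+\delta}$ fresh variables, so each coordinate survives (\expref{Lemma}{lem: robustness under random restrictions}); and \expref{Lemma}{lem:using algebraic dependence-intro} is applied \emph{after} the restriction, relying on the observation that translating $\overline{X}\mapsto\overline{X}+\overline{a}$ never increases monomial support. The truncation bookkeeping and the gate-dependent translation that you flag are indeed present but are handled by routine steps (\expref{Lemma}{lem:measure of homogeneous components}, \expref{Lemma}{lem:interpolation nonhomogeneous}); it is the missing random-restriction/support-bound step that makes the argument, as written, fail to yield any nontrivial bound on $T$.
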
 

\begin{remark}
{}From our proofs it follows that our lower bounds hold for the more general class of $\qspgeneral$ circuits, but for the sake of simplicity, we state our results in terms of $\qspnew$ circuits. We believe it is likely that the lower bounds also hold for a polynomial in $\VP$ and it would be interesting to know if this is indeed true.\footnote{More on this in \expref{Section}{sec:open questions}.}
\end{remark}

\begin{remark}
Even though we state \expref{Theorem}{thm:lower bound} for $k \leq n$, the proof goes through as long as $k$ is any polynomial in $n$ and $N$ is chosen to be an appropriately large polynomial in $n$. 
\end{remark}
\subsubsection{Comparison to known results}\label{sec:known results lb}
As we alluded to in the introduction, $\qspnew$ circuits for $k \geq n$ subsume the class of homogeneous depth-4 circuits. Therefore, 
 \expref{Theorem}{thm:lower bound} subsumes the lower bounds for homogeneous depth-4 circuits~\cite{KLSS14, KS-full} for sufficiently large characteristic. Moreover, it also subsumes and generalizes the lower bounds of Agrawal et al.~\cite{ASSS12} since their lower bounds hold only if the algebraic rank of the entire set  $\{Q_{ij} : i \in [T], j \in [t]\}$ of polynomials is bounded, while for \expref{Theorem}{thm:lower bound}, we only need upper  bounds on the algebraic rank separately for every $i \in [T]$.

\subsubsection{Polynomial identity tests}\label{section:pitresults}

We show that there is a quasipolynomial size hitting set for all polynomials $P \in \qspnew^{[d]}$ for \emph{bounded} $d$ and $k$. More formally, we prove the following theorem.
\begin{theorem}~\label{thm:PIT}
Let $\F$ be any field of characteristic zero.\footnote{Sufficiently large characteristic suffices.}  Then, for every $N$, there exists  a set ${\cal H} \subseteq \F^N$ such that 
\[
\left|{\cal H} \right| \leq \exp(O(\log^{O(1)} N))
\]
and for every nonzero $N$-variate polynomial $P$ over $\F$ which is computable by a $\qspnewbounded$ circuit with $d, k \leq \log N$ and size $\poly(N)$,   there exists an $h \in {\cal H}$ such that $P(h) \neq 0$. Moreover, the set ${\cal H}$ can be explicitly constructed in time  \[
\exp(O(\log^{O(1)} N))\,.
\]
\end{theorem}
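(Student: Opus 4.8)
The plan is to reduce PIT for a $\qspnewbounded$ circuit $P = \sum_{i=1}^T Q_{i1}\cdots Q_{it}$ (where each bracket has algebraic rank at most $k$) to PIT for a $\sop$-type circuit with only $\poly(\log N)$ many product gates at the top, and then invoke the hitting set of Forbes. The key leverage is \expref{Lemma}{lem:using algebraic dependence-intro}: after a generic translation of the variables, every polynomial $Q_{ij}$ in the $i$-th bracket can be written as a polynomial function of some transcendence basis $\{Q_{i,j_1},\dots,Q_{i,j_{k_i}}\}$ of that bracket, with $k_i \le k$. First I would fix such a translation simultaneously for all $T$ brackets (a single generic shift works for all of them, since "generic" is an intersection of finitely many Zariski-open conditions), so that each product $Q_{i1}\cdots Q_{it}$ becomes $\Gamma_i(R_{i1},\dots,R_{ik})$ for some polynomial $\Gamma_i$ and at most $k$ polynomials $R_{i\ell}$ each of degree at most $d$; thus $P$ (translated) is a $\qspgeneral^{[d]}$ circuit with the number of "essential" bottom polynomials per top gate cut down from $t$ to $k$.

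Next I would control the \emph{degree} of the outer functions $\Gamma_i$ and the individual products. The product $Q_{i1}\cdots Q_{it}$ has degree at most $td \le \mathrm{size}(C)\cdot d = \poly(N)$, so $\Gamma_i$ composed with the $R_{i\ell}$'s has polynomially bounded degree; moreover, since $\deg R_{i\ell}\le d \le \log N$ and the composed polynomial has degree $\le \poly(N)$, the individual degree of $\Gamma_i$ in each argument is at most $\poly(N)$, and $\Gamma_i$ has at most $(\poly(N))^{k} = \exp(O(\log^{O(1)} N))$ monomials when $k \le \log N$. Hence after translation each top gate is a $\Sigma\Pi$ of at most $\exp(O(\log^{O(1)}N))$ products, each product being a product of at most $k \le \log N$ of the $R_{i\ell}$, which are sparse polynomials of degree $\le \log N$. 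So the translated $P$ is computed by a circuit of the form $\sum_{i=1}^{T}\sum_{m} c_{im}\prod_{\ell=1}^{k} R_{i\ell}^{a_{im\ell}}$, i.e., a $\Sigma\Pi\Sigma\Pi$ circuit with top fan-in $T\cdot\exp(O(\log^{O(1)}N)) = \exp(O(\log^{O(1)}N))$, bottom fan-in $k\cdot d\cdot(\text{individual degree}) = \poly(\log N)\cdot\poly(N)$ — actually I would instead keep it as a depth-$4$ circuit where each second-level product has fan-in at most $k\cdot(\text{degree of }\Gamma_i)$ over \emph{sparse} bottom polynomials, which is exactly the regime $d' := $ (bottom fan-in) and $t' :=$ (number of distinct sparse factors) are both $\poly(\log N)$, so Forbes's quasipolynomial hitting set for $\Sigma\Pi\Sigma\Pi^{[d']}$ with $d', t' \le \poly(\log N)$ applies.

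Finally, I would handle the translation cleanly: a Zariski-generic shift $X_j \mapsto X_j + \alpha_j$ works, but for an \emph{explicit} hitting set one cannot afford a generic real point, so instead I would substitute $\alpha_j = y^{j}$ for a fresh variable $y$ (or use a small explicit Cauchy–Vandermonde-type set of shifts), which makes the "genericity" conditions — finitely many nonzero polynomials in the $\alpha_j$ of degree $\poly(N)$ — nonvanishing for all but $\poly(N)$ values of $y$; the hitting set then becomes the union over these $\poly(N)$ shifts of the (shifted copies of the) Forbes hitting set, keeping the total size $\exp(O(\log^{O(1)}N))$ and the construction time within the stated bound. The main obstacle I expect is the \emph{degree blow-up} of the outer functions $\Gamma_i$: a priori $\Gamma_i$ could have individual degree as large as the formal degree $td$ of the product, and one must argue that after the translation the relevant $\Gamma_i$ really is realized with individual degree only $\poly(N)$ (equivalently, that \expref{Lemma}{lem:using algebraic dependence-intro} yields an outer function whose degree is bounded by the degree of the original product and not by something larger), since otherwise $\Gamma_i$ has $\exp(\Omega(N))$ monomials and the reduction to bounded-$t'$ circuits fails. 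Controlling this — via the explicit bound on the representing function supplied by the algebraic-dependence lemma, together with the homogeneous-component / degree-truncation argument that discards the high-degree parts of $\Gamma_i$ that cannot contribute to a degree-$\le td$ polynomial — is the crux, and everything else (simultaneity of the shift, sparsity bookkeeping, invoking Forbes) is routine.
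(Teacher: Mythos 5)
Your starting point is the same as the paper's---apply \expref{Lemma}{lem:using algebraic dependence-intro} (equivalently \expref{Lemma}{lem:expressing as functions of the basis}) to replace each bracket by a function of its transcendence basis---but from there the paper takes a cleaner route that sidesteps exactly the two obstacles you flag.

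First, you try to land in a concrete circuit class $\Sigma\Pi\Sigma\Pi^{[d']}$ with $d',t'\le\poly(\log N)$ so that Forbes's \emph{hitting set} applies, and you correctly identify that this forces you to control the degree of the outer function $\Gamma_i$. The paper never does this. It instead uses Forbes's trailing-monomial statement (\expref{Lemma}{lem:Forbes small support monomial}, Proposition~4.18 of~\cite{Forbes-personal}): for any $R=\sum_i F_i(Q_{i1},\ldots,Q_{it})$ with $\deg Q_{ij}\le d$, the trailing monomial of $R$ has support $O(d(\log T + t\log t))$, \emph{with no dependence whatsoever on the degree or size of the $F_i$}. So the "crux" you worry about simply does not arise; the only quantities that matter are $T$, the arity $t$ of the outer function (which becomes $k(d+1)$ after passing to homogeneous components), and $d$.

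Second, and this is where your plan has a genuine gap: you want to reduce to a circuit with \emph{sparse} bottom polynomials and then invoke Forbes's hitting set, but after the translation the polynomials $Q_{ij}(\overline X + \overline a)$ are no longer sparse---a degree-$d$ polynomial in $N$ variables can have $\binom{N+d}{d}\approx N^{\log N}$ monomials after a shift, which is not $\poly(N)$-sparse, so the Forbes hitting set you cite doesn't directly apply. (Proposition~4.18, by contrast, needs only a degree bound, not a sparsity bound.) Moreover, you then propose unioning over explicit shifts, which costs a factor and raises further bookkeeping issues about how many shifts are needed to avoid all the bad conditions. The paper avoids any explicit enumeration of shifts by a homogeneity trick: if $P$ is homogeneous of degree $\Delta$, then $\h^{\Delta}[P(\overline X + \overline a)]\equiv P(\overline X)$. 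Applying \expref{Lemma}{lem:interpolation nonhomogeneous} to extract this homogeneous component shows that $P(\overline X)$ itself (not a shifted copy) admits a representation as a sum of $T(\Delta+1)$ functions of the $k(d+1)$ homogeneous components of the translated basis polynomials. So the translation is used purely as a structural fact to establish that $P(\overline X)$ has a small-support trailing monomial, and the final hitting set is a single Shpilka--Volkovich set (\expref{Lemma}{lem:SV gen}) for $P$ with no shifting. Non-homogeneous $P$ is handled by noting that $P=0$ iff all homogeneous components vanish, and those components have circuits of the same shape by another application of \expref{Lemma}{lem:interpolation nonhomogeneous}.

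In short: right lemma, but you try to reach the finish line through a hitting set for a syntactic circuit class and thus need to tame the outer-function degree and the shift. The paper reaches it through a semantic trailing-monomial bound that is oblivious to both, which is what makes the argument go through cleanly.
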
 
We now mention some  remarks about \expref{Theorem}{thm:PIT}. 

\begin{remark}
It follows from our proof that the hitting set works for the more general class of $\qspgeneral^{[d]}$ circuits with $d, k \leq \log N$, size $\poly(N)$ and formal degree at most $\poly(N)$. 
\end{remark}

\subsubsection{Comparison to known results}\label{sec:known results pit}
The two known results closest to our PIT result are the results of Forbes~\cite{Forbes-personal} and the results of Agrawal et al.~\cite{ASSS12}. Forbes~\cite{Forbes-personal} studies PIT for the case where the number of \emph{distinct inputs} to the second level product gates in a depth-4 circuit with bounded bottom fan-in also bounded (which naturally also bounds the algebraic rank of the inputs), and
constructs  %
quasipolynomial-size
hitting sets for this case. On the other hand, we handle the case where there is no restriction on the number of distinct inputs feeding into the second level product gates, but we need to bound the bottom fan-in as well as the algebraic rank. In this sense, the results in this paper are a generalization of the results of Forbes~\cite{Forbes-personal}. 

Agrawal et al.~\cite{ASSS12} give a construction of
polynomial-size
hitting sets in the case when the total algebraic rank of the set $\{Q_{ij} : i \in [T], j \in [t]\}$ is bounded, but they can work with unbounded $d$. On the other hand, the size of our hitting set  depends exponentially on $d$, but requires only local algebraic dependencies for every $i \in [T]$. So, these two results are not comparable, although there are similarities in the sense that both of them aim to use the algebraic dependencies in the circuit. In general, summation is a tricky operation with respect to designing PIT algorithms (as opposed to multiplication), so it is not clear if the ideas in the work of Agrawal et al.~\cite{ASSS12} can be somehow adapted to prove \expref{Theorem}{thm:PIT}.  

\subsubsection{From algebraic dependence to functional dependence}
Our lower bounds and PIT results crucially use the following lemma, which (informally) shows that over fields of characteristic zero,
up to  %
a translation, every polynomial in a set of
polynomials can be written as a \emph{function} of the polynomials in \emph{transcendence basis}.\footnote{A transcendence basis of a set of polynomials is a maximal subset of the polynomials with the property that its elements are algebraically independent. For more on this see \expref{Section}{sec:prelims}.} We now state the lemma precisely. 
\begin{lemma}[Algebraic dependence to functional dependence]~\label{lem:using algebraic dependence-intro}
  Let $\F$ be any  field of characteristic zero or sufficiently large
positive characteristic.  %
Let ${\cal Q} = \{Q_1, Q_2, \ldots, Q_t\}$ be a set of polynomials in $N$ variables such that the algebraic rank of ${\cal Q}$ equals $k$. Let $d_i=\deg(Q_i)$ ($i\in [t]$) and let  ${\cal B} = \{Q_1, Q_2, \ldots, Q_k\}$ be a maximal algebraically independent subset of ${\cal Q}$. Then, there exists an $\overline{a} = (a_1, a_2, \ldots, a_N)$ in $\F^N$ and polynomials $F_{k+1}, F_{k+2}, \ldots, F_{t}$  in $k$ variables such that $\forall i \in \{k+1, k+2, \ldots, t\}$
$$Q_i(\overline{X} + \overline{a}) = \h^{\leq d_i}\left[F_i(Q_1(\overline{X} + \overline{a}), Q_2(\overline{X} + \overline{a}), \ldots, Q_k(\overline{X} + \overline{a})) \right]\,.$$
Here, for any  polynomial $P$, we use $\h^{\leq i}[P]$
to refer to the sum of homogeneous components of $P$ of degree at most $i$.\footnote{For a more precise definition see \expref{Definition}{def:homog components}.}
\end{lemma}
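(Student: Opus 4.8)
\medskip
\noindent\textbf{Proof plan.}
The plan is to turn each algebraic dependence into a \emph{separable} polynomial relation, translate the variables so that a certain $Z$-partial-derivative becomes a unit in the formal power series ring, invert that relation by Hensel's lemma, and finally truncate the resulting power series back to a polynomial. Fix $i\in\{k+1,\dots,t\}$. Since $\{Q_1,\dots,Q_k\}$ is a transcendence basis of $\mathcal Q$, the element $Q_i$ is algebraic over $K:=\F(Q_1,\dots,Q_k)\subseteq\F(X_1,\dots,X_N)$, and in characteristic zero --- or in positive characteristic larger than an appropriate bound in terms of the degrees $d_j$ --- its minimal polynomial $m_i(Z)\in K[Z]$ is separable, so $m_i'(Q_i)\neq 0$. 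Clearing denominators yields a nonzero $A_i\in\F[Y_1,\dots,Y_k,Z]$ with $A_i(Q_1,\dots,Q_k,Q_i)=0$ whose image in $K[Z]$ is a nonzero $K$-multiple of $m_i$; hence $\frac{\partial A_i}{\partial Z}(Q_1,\dots,Q_k,Q_i)$, viewed as an element of $\F[X_1,\dots,X_N]$, is a nonzero polynomial.

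Next I would pick the translation. The product $\prod_{i>k}\frac{\partial A_i}{\partial Z}(Q_1,\dots,Q_k,Q_i)$ is a nonzero polynomial, so (using that $\F$ is infinite; one may always reduce to this case) there is some $\overline a\in\F^N$ at which it does not vanish. Put $\widetilde Q_j:=Q_j(\overline X+\overline a)$ and $c_j:=Q_j(\overline a)$; because a translation leaves the top homogeneous part unchanged, $\deg\widetilde Q_j=d_j$. For each $i>k$ we then have $A_i(\widetilde Q_1,\dots,\widetilde Q_k,\widetilde Q_i)=0$, while $\frac{\partial A_i}{\partial Z}(\widetilde Q_1,\dots,\widetilde Q_k,\widetilde Q_i)$ has nonzero constant term, i.e.\ is a unit of $\F[[X_1,\dots,X_N]]$.

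Now I would invert and truncate. By the formal implicit function theorem (Hensel's lemma) applied to $A_i(y_1,\dots,y_k,Z)=0$ at the point $(c_1,\dots,c_k,c_i)$, there is a power series $G_i$ in $k$ variables, centered at $\overline c:=(c_1,\dots,c_k)$, with $G_i(\overline c)=c_i$ and $A_i(y,G_i(y))\equiv 0$. Substituting $y_j=\widetilde Q_j$ makes sense because $\widetilde Q_j-c_j$ has no constant term; both $\widetilde Q_i$ and $G_i(\widetilde Q_1,\dots,\widetilde Q_k)$ are then roots of $A_i(\widetilde Q_1,\dots,\widetilde Q_k,Z)\in\F[[X_1,\dots,X_N]][Z]$ congruent to $c_i$ modulo the maximal ideal, and since the $Z$-derivative of this polynomial is a unit at such a root, the root is unique --- so $\widetilde Q_i=G_i(\widetilde Q_1,\dots,\widetilde Q_k)$ as power series. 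Writing $G_i(y)=\sum_{\alpha\in\mathbb N^k} g_\alpha\,(y-\overline c)^\alpha$ and setting the \emph{polynomial} $F_i(y):=\sum_{|\alpha|\le d_i}g_\alpha\,(y-\overline c)^\alpha$, the key point is that each $\widetilde Q_j-c_j$ has order at least $1$, so $\prod_j(\widetilde Q_j-c_j)^{\alpha_j}$ has order at least $|\alpha|$; hence only the terms with $|\alpha|\le d_i$ affect the homogeneous parts of degree at most $d_i$ of $G_i(\widetilde Q_1,\dots,\widetilde Q_k)$, and therefore
\[
\h^{\le d_i}\!\big[F_i(\widetilde Q_1,\dots,\widetilde Q_k)\big]=\h^{\le d_i}\!\big[G_i(\widetilde Q_1,\dots,\widetilde Q_k)\big]=\h^{\le d_i}[\widetilde Q_i]=\widetilde Q_i ,
\]
the last equality using $\deg\widetilde Q_i=d_i$. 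Unwinding $\widetilde Q_j(\overline X)=Q_j(\overline X+\overline a)$, this is exactly the claimed identity.

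I expect the main obstacle to be the first step: securing an annihilating polynomial whose $Z$-derivative does not die upon evaluation at the $Q_j$'s --- equivalently, the separability of $Q_i$ over $\F(Q_1,\dots,Q_k)$, which genuinely fails in small positive characteristic --- and pinning down what ``sufficiently large characteristic'' must mean (roughly, larger than a B\'ezout-type bound on the $Z$-degree of $m_i$). The rest is a standard Hensel-lifting argument; the only real subtlety is that the inverse is a priori only a formal power series, and the degree count above is what lets us replace it by the genuine polynomial $F_i$.
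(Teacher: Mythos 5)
Your proposal is correct, and it reaches the same conclusion as the paper's proof by the same overall strategy (find an annihilating polynomial with non\-vanishing $Z$-derivative at the point of interest, translate so the derivative is a unit at the origin, locally invert, truncate). The one genuine difference is the tool used in the inversion step: the paper invokes Lemma~3.1 of Dvir--Shpilka--Yehudayoff (\expref{Lemma}{lem:DSY main}) as a black box, which directly gives a degree-$\le t$ polynomial $R_t$ in the coefficients $C_0,\dots,C_k$ of the annihilator with $\h^{\le t}[f]=\h^{\le t}[R_t(C_0,\dots,C_k)]$, and then observes that those coefficients are themselves polynomials in $Q_1,\dots,Q_k$; you instead re-derive the needed local inverse from scratch via Hensel's lemma in $\F[[\overline X]]$ and then truncate the resulting power series by a support/order-count argument. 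Both routes buy the same thing (your truncation at degree $d_i$ matches the degree bound DSY provides), but the paper's version is more modular while yours is more self-contained and makes the ``why truncation works'' step explicit. One smaller cosmetic difference: you argue that the $Z$-derivative does not die by appealing to separability of $Q_i$ over $\F(Q_1,\dots,Q_k)$ (automatic in characteristic zero, and forced in characteristic larger than the degree of the minimal polynomial), whereas the paper chooses a minimum-degree annihilator $A_i$ and argues that $\partial A_i/\partial Y$ cannot vanish at $(Q_1,\dots,Q_k,Q_i)$ without contradicting minimality; these are essentially equivalent and both break down in the same way in small characteristic. Finally, note that the translation $\overline a$ must be chosen once and for all to work simultaneously for all $i>k$; you handle this by taking the product $\prod_{i>k}\frac{\partial A_i}{\partial Z}(Q_1,\dots,Q_k,Q_i)$ and avoiding its zero set, which matches the paper's union-bound argument over a large grid.
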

Even though the lemma seems a very basic statement about the structure of algebraically dependent polynomials, to the best of our knowledge this was not known before. The proof builds upon a result on the structure of roots of multivariate polynomials by Dvir et al.~\cite{DSY09}. Observe that for  \emph{linear} dependence, the statement analogous to that of \expref{Lemma}{lem:using algebraic dependence-intro} is trivially true. We believe that this lemma might be of independent interest (in addition to its applications in this paper). 

In fact, the lemma holds for a random choice of the vector $\overline{a}$ chosen uniformly from a large enough grid  in $\F^N$. 
\begin{remark}\label{rmk:subsequent work}
In a recent result, Pandey et al.~\cite{PSS16} show that this connection between algebraic dependence and functional dependence continues to hold over fields of small characteristic. Consequently, they show that the results of this paper also hold over fields of small characteristic.
\end{remark}

\subsection{Proof overview}

Even though the results in this paper seem related to the results in~\cite{ASSS12} (both exploiting some notion of low algebraic rank), the proof strategy and the way algebraic rank is used are quite different. We now briefly outline our proof strategy.

We first discuss the overview of proof for our lower bound. 

Let $P_n$ be the degree-$n$ polynomial we want to compute, and let $C$ be a $\qspnew$ circuit computing it,  with $k = n$.  Then $C$ can be represented as 
$$C = \sum_{i = 1}^T \prod_{j=1}^t Q_{ij}\,.$$
{}From definitions, we know that for every $i \in [T]$, 
the algebraic rank of the set
$\{Q_{i1},  Q_{i2}, \ldots ,Q_{it}\}$ of polynomials %
is at most $k (=n)$. 
We want to
give  %
a lower bound on the size of $C$.

Instead of proving our result directly for $\qspnew$ circuits, it will be very useful for us to go to the significantly strengthened class of $\qspgeneral$ circuits and prove our result for that class. Thus we think of our circuit $C$ as being expressed as
$$C = \sum_{i = 1}^T  C_{i}(Q_{i1}, Q_{i2}, \ldots, Q_{it})$$  where the $C_i$ can be arbitrary polynomial functions of the inputs feeding into them.
Note that we define the size of a $\qspgeneral$ circuit to be the maximum of the top fan-in $T$, and the maximum of the number of monomials in any of the polynomials $Q_{ij}$ feeding into the circuit. Thus we completely disregard the complexities of the various polynomial function gates at the second level. If we are able to prove a lower bound for this notion of size, then if the original circuit is actually a $\qspnew$ circuit then it will also be as good a lower bound for the usual notion of size. 

Our lower bound has two key steps. In the first step we prove the result in the special case where $t \leq n^2$. In the second step we show how to ``almost'' reduce to the case of $t\leq n^2$. 

\paragraph{Step (1) : $t \leq n^2$. }

In the representation of $C$ as a $\qspgeneral$ circuit, the value of $t$ is at most $n^2$. Lower bounds for this case turn out to be similar to lower bounds for homogeneous depth-4 circuits. In this case we borrow ideas from prior works~\cite{GKKS12, KLSS14, KS-full} and show that the \emph{dimension of projected shifted partial derivatives of $C$} is not too large. Most importantly, we can use the chain rule for partial derivatives to obtain good bounds for this complexity measure, independent of the complexity of the various $C_i$.

Recall however that in our final result, $t$ can be actually much larger than $n^2$. Indeed the circuit $C$ can be very far from being homogeneous, and for general depth-4 circuits,  we do not know good upper bounds on the complexity of shifted partial derivatives or projected shifted partial derivatives. Also, in general, it is not clear if these measures are really small for general depth-4 circuits.\footnote{Indeed, as an earlier result of the authors~\cite{KS-formula} shows, even homogeneous depth-4 circuits can have very large shifted partial derivative complexity.}
It is here that the low algebraic rank of  $\{Q_{i1}, Q_{i2}, \ldots, Q_{it}\}$ proves to be useful, and that brings us to the crux of our argument.

\paragraph{Step (2) : Reducing to the case where $t \leq n^2$. } 

A key component of our proof, which is formalized in \expref{Lemma}{lem:expressing as functions of the basis} shows that over any field of characteristic zero (or sufficiently large characteristic),
up to  %
a translation, every polynomial in a set of
polynomials can be written as a function of the homogeneous components of the polynomials in the transcendence basis. 

More formally, there exists an $\overline{a} \in \F^{N}$ such that $C(\overline{X} + \overline{a})$ can be expressed as  
\[
C(\overline{X} + \overline{a}) = \sum_{i = 1}^T  C_i'(\h[Q_{i1}(\overline{X} + \overline{a})], \h[Q_{i2}(\overline{X} + \overline{a})], \ldots, \h[Q_{ik}(\overline{X} + \overline{a})])
\]
where for a degree-$d$ polynomial $F$, $\h[F]$ denotes the $d+1$-tuple of homogeneous components of $F$. Moreover, $Q_{i1}, Q_{i2}, \ldots, Q_{ik}$ are the polynomials in the transcendence basis. 

The crucial gain in the above transformation is that  the arity of each of the polynomials  $C_i'$ is $(d+1) \times k$ and not $t$ (where $d$ is an upper bound on the degrees of the $Q_{ij}$). Now by assumption $k \leq n$, and moreover without loss of generality we can assume $d \leq n$ since homogeneous components of $Q_{ij}$ of degree larger than $n$ can be dropped since they do not contribute to the computation of a degree-$n$ polynomial. Thus we have essentially reduced to the case where $t \leq n^2$. 

One loss by this transformation is that the polynomials $\{C_i'\}$ might be much more complex and with much higher degrees than the original polynomials $\{C_i\}$. However this will not affect the computation of our complexity measure. Another loss is that we have to deal with the translated polynomial $C(\overline{X} + \overline{a})$. This introduces some subtleties into our computation as it could be that $Q_{ij}(\overline{X})$ is a sparse polynomial but $Q_{ij}(\overline{X}+ \overline{a})$ is far from being sparse. Neither of these issues is very difficult to deal with, and we are able to get strong bounds for the
measure, based on projected shifted partial derivatives, %
for such circuits. The proof of \expref{Lemma}{lem:expressing as functions of the basis} essentially follows from \expref{Lemma}{lem:using algebraic dependence-intro}.

The proof of \expref{Lemma}{lem:using algebraic dependence-intro}  crucially uses a result of Dvir, Shpilka and Yehudayoff~\cite{DSY09} which shows that
up to  %
some minor technical conditions (which are not very hard to satisfy), factors of a polynomial $f \in \F[X_1, X_2, \ldots, X_N, Y]$ of the form $Y-p(X_1, X_2, \ldots, X_N)$ where $p \in \F[X_1, X_2, \ldots, X_N]$ can be expressed as polynomials in the coefficients when viewing $f$ as an element of $\F[X_1, X_2, \ldots, X_N][Y]$. This is relevant since if a
set of $t$
polynomials
is  %
algebraically dependent, then
there is a non-zero $t$-variate polynomial which vanishes when composed with this tuple. We use this \emph{vanishing} to prove the lemma.

The PIT results follows a similar initial setup and use of  \expref{Lemma}{lem:using algebraic dependence-intro}. We then use a result of Forbes~\cite{Forbes-personal} to show that the polynomial computed by $C$ has a  monomial of small support, which is then detected using the standard idea of using Shpilka-Volkovich generators~\cite{SV09}. 

\subsection{Organization of the paper}
The rest of the paper is organized
as follows.
In \expref{Section}{sec:prelims}, we state some preliminary definitions and results that are used elsewhere in the paper. In \expref{Section}{sec: alg dep}, we describe our use of low algebraic rank and prove \expref{Lemma}{lem:expressing as functions of the basis}. We prove \expref{Theorem}{thm:lower bound} in \expref{Section}{sec:lower bounds} and \expref{Theorem}{thm:PIT} in \expref{Section}{sec:PIT}. We end with some open questions in \expref{Section}{sec:open questions}.

\section{Preliminaries}~\label{sec:prelims}
In this section we
introduce  %
some
notation  %
and definitions for the rest of the paper.
\subsection{Notation}\label{sec:notation}
\begin{enumerate}
\item For an integer $i$, we denote the set $\{1, 2, \ldots, i\}$ by $[i]$.
\item By $\overline{X}$, we mean the set
$\{X_1, X_2, \ldots, X_N\}$ of variables.  %
\item For a field $\F$, we use $\F[\overline{X}]$ to denote the ring of all polynomials in $X_1, X_2, \ldots, X_N$ over the field $\F$. For brevity, we denote a polynomial $P(X_1, X_2, \ldots, X_N) \in \F[\overline{X}]$ by $P(\overline{X})$.
\item The support of a monomial $\alpha$ is the set of variables which appear with a non-zero exponent in $\alpha$. %
\item  We say that a function $f(N)$ is quasipolynomially bounded in $N$ if there exists a positive absolute constant $c$, such that  for all $N$ sufficiently large, $f(N) < \exp(\log^c N)$. For brevity, if $f$ is quasipolynomially bounded in $N$, we say that $f$ is quasipolynomial in $N$. 
\item In this paper, unless otherwise stated, $\F$ is a field of characteristic zero. 
\item Given a polynomial $P$ and a valid monomial ordering $\Pi$, the leading monomial of $P$ is the monomial with a nonzero coefficient in $P$ which is maximal according to $\Pi$. Similarly, the trailing monomial in $P$ is the monomial which is minimal among all monomials in $P$ according to $\Pi$.
\item All our logarithms are to the base $\eee$.
\end{enumerate}

\subsection{Algebraic independence}
We formally defined the notion of algebraic independence and algebraic rank in \expref{Definition}{def:alg-indepence}. For more on algebraic independence and related discussions, we refer the reader to the excellent survey by Chen, Kayal and Wigderson~\cite{CKW11} and earlier papers~\cite{BMS11, ASSS12}. %

For a tuple ${\cal Q} = (Q_1, Q_2, \ldots, Q_t)$ of algebraically dependent polynomials, we know that there is a nonzero
$t$-variate %
polynomial $R$ (called a $\cal Q$-annihilating polynomial) such that 
$R(Q_1, Q_2, \ldots, Q_t)$ is identically zero. A natural question is to ask, what kind of bounds on the degree of $R$ can we show, in terms of the degrees of $Q_i$. The following lemma of Kayal~\cite{Kayal09}
gives  %
an upper bound on the degree of annihilating polynomials of a set of degree-$d$ polynomials. The bound is useful to us in our proof. 
 
\begin{lemma}[Kayal~\cite{Kayal09}]~\label{lem:degree upper bound for annihilating poly}
Let $\F$ be a field and let ${\cal Q} = \{Q_1, Q_2, \ldots, Q_t\}$ be a set of polynomials of degree-$d$ in $N$ variables over the field $\F$ having algebraic rank $k$. Then there exists a $\cal Q$-annihilating polynomial
of degree at most $(k+1)\cdot d^k$.
\end{lemma}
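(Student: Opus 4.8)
**Proof proposal for Lemma \ref{lem:degree upper bound for annihilating poly} (Kayal's degree bound on annihilating polynomials).**

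The plan is to produce the annihilating polynomial $R$ by a linear-algebraic counting argument: set up a generic polynomial of bounded degree in $t$ variables, substitute the $Q_i$'s, and show that the number of free coefficients exceeds the number of linear constraints imposed by demanding $R(Q_1,\dots,Q_t)\equiv 0$, so a nonzero solution must exist. Concretely, first reorder so that $\mathcal{B} = \{Q_1,\dots,Q_k\}$ is a transcendence basis, and observe that it suffices to find a nonzero annihilating polynomial for the tuple $(Q_1,\dots,Q_k,Q_{k+1})$ for some fixed $j = k+1$ — more precisely, since $\{Q_1,\dots,Q_k,Q_j\}$ is algebraically dependent for every $j > k$, each such $(k{+}1)$-subset has an annihilator, and taking $j=k+1$ (or the worst case over $j$) and then padding with dummy variables gives a $\mathcal{Q}$-annihilating polynomial; so I reduce to the case $t = k+1$.

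Next I would do the counting. Let $R \in \F[Y_1,\dots,Y_{k+1}]$ be a generic polynomial of total degree $D$ (to be chosen), with $\binom{D+k+1}{k+1}$ unknown coefficients. Substituting $Y_i \mapsto Q_i$ yields a polynomial in $\F[\overline X]$ of degree at most $Dd$ (since each $Q_i$ has degree $d$), hence lying in a space of dimension at most $\binom{Dd + N}{N}$. Requiring this to vanish identically is a homogeneous linear system in the coefficients of $R$; a nonzero solution exists as soon as $\binom{D+k+1}{k+1} > \binom{Dd+N}{N}$. This naive bound, however, gives a degree bound depending on $N$, which is too weak. The fix — and the step I expect to be the main obstacle — is to replace the ambient space $\F[\overline X]$ by a much smaller space tailored to $\mathcal{B}$: since $Q_1,\dots,Q_k$ are algebraically independent, any $\F$-polynomial expression in them of the form $R(Q_1,\dots,Q_k,Q_{k+1})$ can be analyzed via the $\F$-algebra $\F[Q_1,\dots,Q_k][Q_{k+1}]$, whose relevant graded piece has dimension controlled by $k$ and $d$, not $N$. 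The cleanest route is: view things over the field $\K = \F(Q_1,\dots,Q_k)$, so that $Q_{k+1}$ is algebraic over $\K$, and bound the degree of its minimal polynomial over $\K$; clearing denominators then gives the bound $(k+1)d^k$. Estimating $[\K(Q_{k+1}):\K]$ is where the $d^k$ factor comes from — one bounds it by a resultant/elimination argument or by a dimension count in the coordinate ring, using that $k$ polynomials of degree $d$ cut out a variety whose generic fibers have size $O(d^k)$ (Bézout-type bound), and the extra factor $(k+1)$ accounts for the degree in $Q_{k+1}$ itself after homogenizing the minimal polynomial.

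So the key steps, in order, are: (i) reduce to $t = k+1$ with $\{Q_1,\dots,Q_k\}$ a transcendence basis; (ii) pass to the field $\K = \F(Q_1,\dots,Q_k)$ and identify the annihilator with (a homogenization/clearing-denominators form of) the minimal polynomial of $Q_{k+1}$ over $\K$; (iii) bound that minimal polynomial's degree by a Bézout/elimination estimate, getting degree $\le d^k$ in the $Y_1,\dots,Y_k$ variables after clearing denominators and $\le$ a small function of $d$ in $Y_{k+1}$, combining to $(k+1)d^k$; (iv) pad with dummy variables $Y_{k+2},\dots,Y_t$ to get a genuine $\mathcal{Q}$-annihilating polynomial of the same degree. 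The main obstacle is step (iii): making the Bézout-type estimate precise and extracting exactly the claimed $(k+1)d^k$ bound rather than something larger; one must be careful about the transcendence-degree-$k$ parameter entering as an exponent, which is the whole point of the lemma. I would cite Kayal~\cite{Kayal09} for the tight constant and only sketch the elimination-theoretic heart of the argument.
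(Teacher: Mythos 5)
The paper does not prove this lemma; it is stated purely as a cited result of Kayal~\cite{Kayal09}, so there is no in-paper proof to compare your approach against. What one can say is that your sketch captures a standard high-level shape for such an argument: reduce to $t=k+1$ by a transcendence basis and padding with dummy variables, pass to $\K = \F(Q_1,\dots,Q_k)$, view the annihilator as a cleared-denominator form of the minimal polynomial of $Q_{k+1}$ over $\K$, and bound its degree by an elimination or Perron/B\'ezout-type estimate.

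As a proof, however, your proposal has a genuine gap exactly where you flag it: step (iii) is the entire quantitative content of the lemma, and you leave it unresolved, stating that you ``would cite Kayal for the tight constant.'' You correctly observe that the naive linear-algebraic count (comparing $\binom{D+k+1}{k+1}$ unknowns against $\binom{Dd+N}{N}$ constraints) yields a bound depending on $N$ and is therefore useless; the work is precisely in replacing the ambient $N$-variable count with a $k$-variable one (typically via a generic linear substitution that reduces to $k$ essential variables, after which a Perron-style weighted-degree bound yields $d^k$), and gesturing at ``a resultant/elimination argument or a dimension count in the coordinate ring'' is an aspiration, not an argument---in particular it does not account for how the factor $(k+1)$ arises. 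Since you are in effect delegating the bound back to the same reference the paper already cites, the proposal gives a plausible roadmap but does not establish the claimed bound $(k+1)\,d^k$.
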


\subsection{Complexity of homogeneous components}      
We start by defining the homogeneous components of a polynomial. 
\begin{definition}\label{def:homog components}
 For a polynomial $P$ and a positive integer $i$, we represent by $\h^i[P]$, the homogeneous component of $P$ of degree equal to $i$. 
By extension, we define $\h^{\leq i}[P]$ and $\h^{\geq i}[P]$ as follows. 
\begin{align*}
\h^{\leq i}[P] &\equiv \sum_{j = 0}^i \h^{j}[P]\,.\\
\h^{\geq i}[P] &\equiv \sum_{j = i}^{\deg(P)} \h^{j}[P]\,.
\end{align*}
We define $\h[P]$ as the ordered tuple of homogeneous components of $P$, \ie, 
\[\h[P] \equiv \left(\h^{d}[P], \h^{d-1}[P], \ldots, \h^{0}[P]\right)\, ,\] where $d$ is the degree of $P$. %
\end{definition}
We will use the following simple lemma
whose proof is fairly standard using interpolation,
and can be found in the paper~\cite{KS15-lowarity},
for instance. We sketch the proof here for completeness. 

\begin{lemma}~\label{lem:interpolation nonhomogeneous}
Let $\F$ be a field of characteristic zero, and let $P \in \F[X_1, X_2, \ldots, X_N]$ be a polynomial of degree at most $d$, in $N$ variables, such that
$P$ can be represented as 
 $$P = C(Q_{1}, Q_{2}, \ldots, Q_{t}) \, ,$$
where for every $j\in [t]$, $Q_{j}$ is a polynomial in $N$ variables, and $C$ is an arbitrary polynomial in $t$ variables. Then, there exist polynomials 
$\{Q'_{ij} : i \in [d+1], j \in [t]\}$, and for every $\ell$ such that $0\leq \ell \leq d$, there exist polynomials $C'_{\ell,1}, C'_{\ell,2}, \ldots, C'_{\ell,d+1}$ satisfying  
 $$\h^{\ell}[P] = \sum_{i = 1}^{(d+1)}  C'_{\ell,i}(Q'_{i1}, Q'_{i2}, \ldots, Q'_{it})\,.$$
Moreover, 
\begin{itemize}

\item if each of the polynomials in the set $\{Q_{j} :  j \in [t]\}$ is of degree at most $\Delta$, then every polynomial in the set  $\{Q'_{ij} : i \in [d+1], j \in [t]\}$ is also of degree at most $\Delta$;

\item  if the algebraic rank of the set %
$\{Q_{j} :  j \in [t]\}$
of polynomials %
is at most $k$, then for every $i \in [d+1]$, the algebraic rank
of the set $\{Q_{ij}' : j \in [t]\}$ of polynomials %
is also at most $k$.
\end{itemize}
\end{lemma}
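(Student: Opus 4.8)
The plan is to use the classical Lagrange interpolation trick that extracts homogeneous components of a polynomial by scaling the variables by a formal parameter and sampling at distinct field points. First I would introduce a fresh variable $z$ and consider the polynomial $P(zX_1, zX_2, \ldots, zX_N)$, which (since $\mathrm{char}(\F) = 0$ and $\deg(P) \leq d$) can be written as $\sum_{\ell=0}^{d} z^\ell \cdot \h^\ell[P](\overline X)$. The key observation is that $P(z\overline X) = C(Q_1(z\overline X), Q_2(z\overline X), \ldots, Q_t(z\overline X))$, so if we substitute $z = \beta_0, \beta_1, \ldots, \beta_d$ for $d+1$ distinct nonzero elements of $\F$ (which exist since the field is infinite, being of characteristic zero), we obtain for each $m \in \{0,1,\ldots,d\}$ the identity $\sum_{\ell=0}^d \beta_m^\ell \cdot \h^\ell[P] = C\bigl(Q_1(\beta_m \overline X), \ldots, Q_t(\beta_m \overline X)\bigr)$. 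Setting $Q'_{(m+1)j}(\overline X) := Q_j(\beta_m \overline X)$, the right-hand side is a polynomial function (namely $C$ itself) of the $t$ polynomials $Q'_{(m+1)1}, \ldots, Q'_{(m+1)t}$.

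Next I would invert the Vandermonde system. The matrix $\bigl(\beta_m^\ell\bigr)_{0 \le m, \ell \le d}$ is an invertible Vandermonde matrix over $\F$ because the $\beta_m$ are distinct; let $\bigl(\gamma_{\ell, m}\bigr)$ denote the entries of its inverse. Then for each $\ell \in \{0, \ldots, d\}$ we get
\[
\h^\ell[P] \;=\; \sum_{m=0}^{d} \gamma_{\ell, m}\, C\bigl(Q'_{(m+1)1}, Q'_{(m+1)2}, \ldots, Q'_{(m+1)t}\bigr)\,.
\]
So defining $C'_{\ell, i}(y_1, \ldots, y_t) := \gamma_{\ell, i-1} \cdot C(y_1, \ldots, y_t)$ for $i \in [d+1]$ gives exactly the claimed form $\h^\ell[P] = \sum_{i=1}^{d+1} C'_{\ell, i}(Q'_{i1}, \ldots, Q'_{it})$, reindexing $m \mapsto i = m+1$.

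Finally I would check the two moreover-clauses, both of which are immediate from the fact that $Q'_{ij}(\overline X) = Q_j(\beta_{i-1}\overline X)$ is obtained from $Q_j$ by the invertible linear change of variables $X_\ell \mapsto \beta_{i-1} X_\ell$. Such a scaling does not increase degree, so if $\deg(Q_j) \le \Delta$ for all $j$ then $\deg(Q'_{ij}) \le \Delta$ for all $i, j$; in fact the degree is preserved exactly since $\beta_{i-1} \neq 0$. Likewise, an invertible linear substitution preserves algebraic rank: a nonzero annihilating polynomial $R$ for $\{Q_j\}$ yields a nonzero annihilating polynomial for $\{Q'_{ij} : j \in [t]\}$ by composing with the inverse substitution (and conversely), so for each fixed $i$ the set $\{Q'_{ij} : j \in [t]\}$ has the same algebraic rank as $\{Q_j : j \in [t]\}$, hence at most $k$. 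I do not anticipate a genuine obstacle here; the only mild point of care is making sure all $d+1$ evaluation points $\beta_m$ are nonzero (so that the scaling is invertible and rank/degree are preserved) while still distinct, which is trivially arrangeable over an infinite field.
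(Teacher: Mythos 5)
Your proposal is correct and takes essentially the same route as the paper: substitute $X_j \mapsto Z X_j$ so that the homogeneous components appear as $Z$-coefficients, then recover them by evaluating at $d+1$ distinct field scalars and taking a linear (Vandermonde-inverse) combination, which makes each $\h^\ell[P]$ a sum of scalar multiples of $C$ applied to scaled versions $Q_j(\beta\overline X)$ of the original inner polynomials. Your extra care about choosing the $\beta_m$ nonzero is unnecessary for the stated ``moreover'' clauses (a scaling by $0$ still gives degree $0 \le \Delta$ and rank $0 \le k$), but it does no harm.
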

\begin{proof}
The key idea is to start from $P \in \F[\overline{X}]$ and obtain a new polynomial $P' \in \F[\overline{X}][Z]$ such that for every $\ell$ such that $0 \leq \ell \leq d$, the coefficient of $Z^{\ell}$ in $P'$ equals $\h^{\ell}[P]$. Here, $Z$ is a new variable. Such a $P'$ is obtained by replacing every occurrence of the variable $X_j$ (for each $j \in [N]$) in $P$ by $Z\cdot X_j$. It is not hard to verify that such a $P'$ has the stated property. We now view $P'$
as a univariate polynomial in $Z$ with the coefficients coming from $\F(\overline{X})$. Notice that the degree of $P'$ in $Z$ is at most $d$. 
So, to recover the coefficients of a univariate polynomial of degree at most $d$, we can evaluate $P'$ at $d+1$ distinct values of $Z$ over $\F(\overline{X})$ and take an $\F(\overline{X})$ linear combination. In fact, if the field $\F$ is large enough, we can assume that all these distinct values of $Z$ lie in the base field $\F$ and we only take an $\F$ linear combination.  The properties in the  ``moreover'' part of the lemma immediately follow from this construction, and we skip the details. 
\end{proof}

\subsection{Roots of  polynomials}
We will crucially use the following result of Dvir, Shpilka, Yehudayoff~\cite{DSY09}.

\begin{lemma}[Lemma 3.1 in Dvir, Shpilka, Yehudayoff~\cite{DSY09}]~\label{lem:DSY main}
For a field $\F$, let $P \in \F[X_1, X_2, \ldots, X_N, Y ]$ be a non-zero polynomial of degree at most $k$ in $Y$. Let $f \in \F[X_1, X_2, \ldots, X_N]$ be a polynomial such that $P(X_1, X_2, \ldots, X_N, f) = 0$ and $\frac{\partial P}{\partial Y} (0, 0, \ldots, 0, f(0, 0, \ldots, 0))\neq 0$. Let 
$$P = \sum_{i = 0}^k C_i(X_1, X_2, \ldots, X_N)\cdot Y^i\,.$$ Then, for every $t \geq 0$, there exists a polynomial $R_t \in \F[Z_1, Z_2, \ldots, Z_{k+1} ]$ of degree at most $t$ such that 
\begin{equation}
\h^{\leq t}[f(X_1, X_2, \ldots, X_N)] = \h^{\leq t}[R_t(C_0, C_1, \ldots, C_k)]\,.
\end{equation}
\end{lemma}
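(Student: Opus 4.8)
The plan is to solve the equation $P(X_1,\dots,X_N,Y)=0$ for $Y$ as a power series in the $X_i$ by a Hensel/Newton‑type iteration carried out inside the local ring $\F[X_1,\dots,X_N]$ at the maximal ideal $\mathfrak m=(X_1,\dots,X_N)$, while keeping track of the fact that every iterate is a \emph{polynomial} in the coefficients $C_0,\dots,C_k$. Write $c_0:=f(0,\dots,0)$, which lies in $\F$ since $f\in\F[\overline X]$; put $P_0(Y):=P(0,\dots,0,Y)=\sum_i C_i(0)\,Y^i\in\F[Y]$ and $\lambda:=\frac{\partial P}{\partial Y}(0,\dots,0,c_0)=P_0'(c_0)$. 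Evaluating $P(X,f)=0$ at $X=0$ gives $P_0(c_0)=0$, and by hypothesis $\lambda\neq 0$, so $c_0$ is a simple root of $P_0$. Set $h:=f-c_0$ and $\delta_i:=C_i-C_i(0)$ for $i=0,\dots,k$; then $h\in\mathfrak m$ and each $\delta_i\in\mathfrak m$.

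First I would derive a fixed‑point identity for $h$. Substituting $f=c_0+h$ and $C_i=C_i(0)+\delta_i$ into $0=\sum_i C_i f^i$ and Taylor‑expanding $P_0$ around $c_0$ yields $\lambda h=-\sum_{j\ge 2}\frac{P_0^{(j)}(c_0)}{j!}h^j-\sum_i\delta_i(c_0+h)^i$, where the Taylor coefficients $\frac{P_0^{(j)}(c_0)}{j!}=\sum_{i\ge j}\binom{i}{j}C_i(0)c_0^{\,i-j}$ are integer combinations of the $C_i(0)$, so no division other than by $\lambda$ is ever needed (this is why the statement is characteristic‑free). Dividing by $\lambda$ and collecting powers of $h$ writes this as $h=A(\delta_0,\dots,\delta_k,h)$, where $A=\sum_{l\ge 0}B_l(\delta)\,h^l$, each $B_l$ is affine‑linear in the $\delta_i$, and crucially $B_0$ and $B_1$ have no constant term (moving $\lambda h$ to the left is exactly what clears the constant term of $B_1$). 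I would then define $h_0:=0$ and $h_{m+1}:=A(\delta_0,\dots,\delta_k,h_m)$, each of which is visibly a polynomial in $\delta_0,\dots,\delta_k$, and prove by induction that $h\equiv h_m\pmod{\mathfrak m^{m+1}}$: the inductive step writes $h-h_{m+1}=B_1(\delta)(h-h_m)+\sum_{l\ge 2}B_l(\delta)(h^l-h_m^l)$ and observes that $B_1(\delta)\in\mathfrak m$ supplies one extra order in the first term, while $h^l-h_m^l=(h-h_m)\sum_{a<l}h^a h_m^{\,l-1-a}$ already gains an order from the factor in $\mathfrak m^{\,l-1}\subseteq\mathfrak m$ in the second.

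The final step extracts the degree bound. Since $h\equiv h_t\pmod{\mathfrak m^{t+1}}$ and every $\delta_i$ lies in $\mathfrak m$, any monomial of $h_t$ — viewed as a polynomial in $\delta_0,\dots,\delta_k$ — of degree $\ge t+1$ is sent into $\mathfrak m^{t+1}$ under the substitution $\delta_i\mapsto C_i-C_i(0)$ and may be discarded. Hence, letting $\tilde h_t$ be the truncation of $h_t$ to $\delta$‑degree at most $t$, we still have $h\equiv\tilde h_t\pmod{\mathfrak m^{t+1}}$ and $\deg\tilde h_t\le t$. Undoing the two translations, set $R_t(Z_1,\dots,Z_{k+1}):=c_0+\tilde h_t\bigl(Z_1-C_0(0),\dots,Z_{k+1}-C_k(0)\bigr)$; this has degree at most $t$ in $\F[Z_1,\dots,Z_{k+1}]$, and $R_t(C_0,\dots,C_k)=c_0+\tilde h_t(\delta_0,\dots,\delta_k)\equiv c_0+h=f\pmod{\mathfrak m^{t+1}}$, which is precisely $\h^{\le t}[f]=\h^{\le t}[R_t(C_0,\dots,C_k)]$. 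The case $t=0$ is the trivial base case $R_0\equiv c_0$.

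I expect the main obstacle to be the careful $\mathfrak m$‑adic bookkeeping in the convergence argument — in particular, verifying that $B_0$ and $B_1$ have vanishing constant term, which is what forces the iteration to gain exactly one order of accuracy per step and is the precise point where the non‑degeneracy hypothesis $\frac{\partial P}{\partial Y}(0,\dots,0,f(0,\dots,0))\neq 0$ is used; the degree truncation and the passage back and forth between $(h,\delta)$ and $(f,C)$ are then routine manipulations of homogeneous components. An alternative would be to run the iteration in Newton‑doubling form, which converges in $O(\log t)$ steps, but the linear iteration above is cleaner to analyze and already gives the optimal degree bound $t$.
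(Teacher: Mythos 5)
Your proof is correct. Note that the paper itself does not prove this statement --- it imports it verbatim as Lemma 3.1 of Dvir--Shpilka--Yehudayoff and uses it as a black box --- so the only meaningful comparison is with the original source, whose argument is the same Hensel/Newton lifting you carry out: DSY solve for the homogeneous components of $f$ degree by degree from $P(X,f)=0$, using that the coefficient of the unknown top component is $\frac{\partial P}{\partial Y}(0,f(0))\neq 0$, whereas you package the same mechanism as a fixed-point iteration $h_{m+1}=A(\delta,h_m)$ in the $\mathfrak m$-adic topology. All the delicate points check out: $B_0$ and $B_1$ indeed have no constant term (the constant part of the $h^1$-coefficient is exactly the $\lambda h$ you moved to the left), the Hasse-derivative coefficients $\sum_{i\ge j}\binom{i}{j}C_i(0)c_0^{i-j}$ make the argument characteristic-free with $\lambda$ the only division, the induction $h\equiv h_m \pmod{\mathfrak m^{m+1}}$ is sound (including the implicit fact $h_m\in\mathfrak m$), and the truncation to $\delta$-degree $\le t$ followed by the back-translation $Z_i\mapsto Z_i-C_{i-1}(0)$ yields the stated degree bound, since congruence mod $\mathfrak m^{t+1}$ is exactly agreement of $\h^{\le t}$.
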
 
We also use the following standard result about zeroes of polynomials. 
\begin{lemma}[Schwartz, Zippel, DeMillo, Lipton~\cite{DL78}]~\label{lem: comb nulls}
Let $P$ be a non-zero polynomial of degree-$d$ in $N$ variables over a field $\F$. Let $S$ be an arbitrary subset of $\F$, and let $x_1, x_2, \ldots, x_N$ be random elements from $S$ chosen independently and uniformly at random. Then $$\pr[P(x_1, x_2, \ldots, x_N) = 0] \leq \frac{d}{|S|}\,.$$
\end{lemma}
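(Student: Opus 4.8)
The plan is to prove this by induction on the number of variables $N$; this is the standard route to the Schwartz--Zippel--DeMillo--Lipton bound, and no assumption on the characteristic of $\F$ is needed. Note first that if $|S| < d$ the claimed bound is vacuous, so we may assume $|S| \ge d$ throughout. For the base case $N = 1$, a nonzero univariate polynomial of degree at most $d$ over a field has at most $d$ roots, so among the $|S|$ equally likely choices of $x_1 \in S$ at most $d$ of them make $P$ vanish, giving probability at most $d/|S|$.

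For the inductive step, I would single out the variable $X_N$ and write
\[
P = \sum_{i=0}^{k} X_N^i \cdot P_i(X_1, \ldots, X_{N-1}),
\]
where $k \le d$ is the largest exponent of $X_N$ occurring in $P$, so that $P_k$ is a nonzero polynomial in $N-1$ variables of degree at most $d - k$. Now condition on the outcome of $(x_1, \ldots, x_{N-1})$, chosen independently of $x_N$. Let $A$ be the event that $P_k(x_1, \ldots, x_{N-1}) = 0$; by the induction hypothesis applied to $P_k$ (with degree bound $d-k$), $\pr[A] \le (d-k)/|S|$. On the complement of $A$, the univariate polynomial $P(x_1, \ldots, x_{N-1}, X_N) \in \F[X_N]$ is nonzero of degree exactly $k$, so the base case gives $\pr[P(x_1,\ldots,x_{N-1},x_N) = 0 \mid \neg A] \le k/|S|$ using the independence of $x_N$. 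Combining via a union bound,
\[
\pr[P(x_1, \ldots, x_N) = 0] \le \pr[A] + \pr[\neg A]\cdot \frac{k}{|S|} \le \frac{d-k}{|S|} + \frac{k}{|S|} = \frac{d}{|S|}.
\]

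There is no genuine obstacle in this argument; it is a classical fact. The only point requiring a bit of care is the degree bookkeeping in the induction: one must apply the induction hypothesis to $P_k$ with the degree bound $d - k$ rather than $d$, since it is exactly this matching that makes the two contributions telescope to $d/|S|$. One should also be explicit that $x_N$ is drawn independently of $(x_1,\dots,x_{N-1})$, which is what licenses bounding the conditional failure probability by $k/|S|$ and multiplying by $\pr[\neg A] \le 1$.
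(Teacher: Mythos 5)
Your proof is correct and is the standard inductive argument for the Schwartz--Zippel--DeMillo--Lipton bound. Note, however, that the paper does not prove this lemma at all: it is imported as a classical result with a citation to DeMillo--Lipton, so there is no paper proof to compare against. Your degree bookkeeping (applying the inductive hypothesis to $P_k$ with bound $d-k$ and using degree exactly $k$ on the complement event so the two terms telescope) is exactly right; the only cosmetic point is that the opening assumption $|S| \ge d$ is not actually used anywhere in the argument, which goes through unconditionally.
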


The following corollary easily follows from the lemma above. 
\begin{corollary}~\label{cor: schwartz zippel many}
Let $P_1, P_2, \ldots, P_t$ be non-zero polynomials of degree-$d$ in $N$ variables over a field $\F$. Let $S$ be an arbitrary subset of $\F$ of size at least $2td$, and let $x_1, x_2, \ldots, x_N$ be random elements from $S$ chosen independently and uniformly at random. Then $$\pr[\forall i \in [t], P_i(x_1, x_2, \ldots, x_N) \neq 0] \geq \frac{1}{2}\,.$$ 
\end{corollary}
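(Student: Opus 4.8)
The plan is to derive Corollary~\ref{cor: schwartz zippel many} from Lemma~\ref{lem: comb nulls} by a union bound. First I would define the product polynomial $P = P_1 \cdot P_2 \cdots P_t$. Since each $P_i$ is a nonzero polynomial over the field $\F$ (in particular over an integral domain), the product $P$ is also nonzero, and its total degree is $\deg(P) = \sum_{i \in [t]} \deg(P_i) = td$.

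Next I would apply Lemma~\ref{lem: comb nulls} to $P$ with the given set $S$ of size at least $2td$: choosing $x_1, \ldots, x_N$ independently and uniformly from $S$, we get
\[
\pr\bigl[P(x_1, \ldots, x_N) = 0\bigr] \;\leq\; \frac{\deg(P)}{|S|} \;=\; \frac{td}{|S|} \;\leq\; \frac{td}{2td} \;=\; \frac12\,.
\]
Now the key observation is that $P(x_1, \ldots, x_N) \neq 0$ holds if and only if $P_i(x_1, \ldots, x_N) \neq 0$ for every $i \in [t]$, again because $\F$ has no zero divisors. Hence
\[
\pr\bigl[\forall i \in [t],\ P_i(x_1, \ldots, x_N) \neq 0\bigr] \;=\; \pr\bigl[P(x_1, \ldots, x_N) \neq 0\bigr] \;=\; 1 - \pr\bigl[P(x_1, \ldots, x_N) = 0\bigr] \;\geq\; \frac12\,,
\]
which is exactly the claimed bound.

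There is essentially no obstacle here — the statement is a routine corollary. The only point to be slightly careful about is that multiplying the $P_i$ together (rather than, say, directly union-bounding over the events $\{P_i = 0\}$, which would also work and give the same $\tfrac12$ after noting $\sum_i \tfrac{d}{|S|} = \tfrac{td}{|S|} \le \tfrac12$) requires that we work over a field (or at least an integral domain) so that a product of nonzero polynomials is nonzero and so that a nonzero product value certifies each factor is nonzero; this is guaranteed by the hypothesis that $\F$ is a field. Either route — the product polynomial or the direct union bound over $t$ events each of failure probability at most $d/|S| \le 1/(2t)$ — yields the corollary in one line, so I would present whichever is cleaner, likely the union bound since it avoids invoking the integral-domain property twice.
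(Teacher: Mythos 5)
Your proof is correct and is exactly the routine argument the paper has in mind — the paper itself gives no proof beyond stating that the corollary ``easily follows'' from Lemma~\ref{lem: comb nulls}, and either of your two one-line derivations (union bound over the $t$ events, or applying the lemma to the product $\prod_i P_i$ of degree $td$) supplies that missing step correctly.
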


\subsection{Approximations}
We will use the following lemma of Saptharishi~\cite{Saptharishi-survey1} for numerical approximations in our calculations. 

\begin{lemma}[Saptharishi~\cite{Saptharishi-survey1}]~\label{lem:approx-new}
Let $n$ and $\ell$ be parameters such that $\ell = ({n}/2)(1-\epsilon)$ for some $\epsilon = o(1)$. For any $a, b$ such that $a, b = O(\sqrt{n})$, 
\[
\binom{n-a}{\ell - b} = \binom{n}{\ell}\cdot 2^{-a} \cdot (1 + \epsilon)^{a-2b} \cdot \exp(O(b\cdot \epsilon^2))\,.
\]
\end{lemma}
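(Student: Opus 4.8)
The plan is to prove Lemma~\ref{lem:approx-new} (the approximation lemma of Saptharishi) by taking logarithms of the ratio $\binom{n-a}{\ell-b}/\binom{n}{\ell}$ and carefully estimating each piece using the standard first-order expansions $\log(1+x) = x - x^2/2 + O(x^3)$ and $\log(1-x) = -x - x^2/2 + O(x^3)$, valid since every quantity appearing will be $O(1/\sqrt{n})$ or smaller. First I would write
\[
\frac{\binom{n-a}{\ell-b}}{\binom{n}{\ell}} = \frac{(n-a)!}{(\ell-b)!\,(n-a-\ell+b)!} \cdot \frac{\ell!\,(n-\ell)!}{n!},
\]
and group the factorials into three telescoping products: a ``numerator top'' factor $\prod_{i=0}^{a-1}\frac{1}{n-i}$ coming from $(n-a)!/n!$, a ``lower index'' factor $\prod_{i=0}^{b-1}(\ell - i)$ from $\ell!/(\ell-b)!$, and a ``third factorial'' factor $\prod_{i=0}^{a-b-1}(n-\ell - i)$ from $(n-\ell)!/(n-a-\ell+b)!$. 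With $\ell = (n/2)(1-\epsilon)$ we have $n - \ell = (n/2)(1+\epsilon)$, so each of these products is a product of $O(\sqrt n)$ terms each equal to $(n/2)(1\pm\epsilon)(1 + O(1/\sqrt n))$.

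Next I would take logarithms. The dominant contribution is the power of two: the $a$ factors of $\frac1n$ against $(a-b)$ factors of $(n-\ell) \approx n/2$ and $b$ factors of $\ell \approx n/2$ combine to give $2^{-a}$ exactly at leading order (the $n/2$'s cancel the $1/n$'s, leaving $a$ factors of $1/2$). The factors of $(1+\epsilon)$ come from $(n-\ell) = (n/2)(1+\epsilon)$ appearing $a-b$ times and $\ell = (n/2)(1-\epsilon)$ appearing $b$ times, giving $(1+\epsilon)^{a-b}(1-\epsilon)^{b}$; since $\epsilon = o(1)$, $\log[(1+\epsilon)^{a-b}(1-\epsilon)^b] = (a-b)(\epsilon - \epsilon^2/2) + b(-\epsilon - \epsilon^2/2) + O((a+b)\epsilon^3) = (a-2b)\epsilon - (a/2)\epsilon^2 \cdot(\text{const}) + \dots$. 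Exponentiating and absorbing the quadratic-in-$\epsilon$ terms — noting $a,b = O(\sqrt n)$ so $a\epsilon^2, b\epsilon^2 = O(b\epsilon^2)$ when we are careful about which is larger, or more safely $O(\sqrt n\,\epsilon^2)$; I would match the paper's stated $\exp(O(b\epsilon^2))$ form by tracking that the $a\epsilon^2$ contributions actually cancel at this order against similar terms, leaving only $O(b\epsilon^2)$ — yields the claimed $(1+\epsilon)^{a-2b}\cdot\exp(O(b\epsilon^2))$. Finally, the lower-order $O(1/\sqrt n)$ corrections inside each of the $O(\sqrt n)$ factors contribute $O(\sqrt n)\cdot O(1/\sqrt n) = O(1)$ to the log a priori, so I would need to show these actually sum to something that folds into the error term; the standard trick is that they telescope (sums of the form $\sum_{i<a} i/n = O(a^2/n) = O(1)$), and one checks the $O(1)$ pieces from the three products cancel to leave only the stated factors — or, if the lemma tolerates it, one shows the residual is $1 + o(1)$ and absorbs it.

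\textbf{The main obstacle} will be bookkeeping the second-order terms precisely enough to land exactly on the form $2^{-a}(1+\epsilon)^{a-2b}\exp(O(b\epsilon^2))$ rather than a slightly weaker bound with an extra $\exp(O(a^2/n))$ or $\exp(O(a\epsilon^2))$ factor. This requires being honest about the cancellation between the three telescoping products: the naive estimate loses a multiplicative $\exp(O(1))$, and recovering the sharp constant $1$ in front demands pairing up terms across the products (e.g.\ the $i$-th term of $\prod \frac1{n-i}$ with the $i$-th terms of the other products) before expanding, so that the $O(i/n)$ errors appear with cancelling signs. Since this is a quoted lemma from~\cite{Saptharishi-survey1}, an acceptable alternative is simply to cite the computation there; but if a self-contained proof is wanted, the careful path is: (i) reduce to the three telescoping products, (ii) pair terms index-by-index, (iii) take logs and expand to second order, (iv) verify the leading constant and the $(1+\epsilon)$ and $2^{-a}$ powers exactly, and (v) bound the residual by $O(b\epsilon^2 + a^2/n)$, observing $a^2/n = O(1)$ is harmless only if it too cancels — which it does, because the $X_j \mapsto ZX_j$-type symmetry of the three products forces the $1/n$-order terms to sum to zero.
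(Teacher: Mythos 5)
The paper does not give a proof of this lemma: it is cited from Saptharishi's survey~\cite{Saptharishi-survey1} and used as a black box, so there is no in-paper argument to compare against. The route you sketch is nevertheless the natural one, and its first stages are correct: writing the ratio $\binom{n-a}{\ell-b}/\binom{n}{\ell}$ as three telescoping products, extracting the leading factor $\ell^b(n-\ell)^{a-b}/n^a = 2^{-a}(1-\epsilon)^b(1+\epsilon)^{a-b}$, and then using $1-\epsilon = (1-\epsilon^2)/(1+\epsilon)$ to rewrite this as $2^{-a}(1+\epsilon)^{a-2b}(1-\epsilon^2)^b = 2^{-a}(1+\epsilon)^{a-2b}\exp(O(b\epsilon^2))$ is exactly the right algebra.

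The gap is in your final paragraph. The claim that the $1/n$-order corrections from the three products ``sum to zero'' by some ``$X_j\mapsto ZX_j$-type symmetry'' is false, and the analogy to the interpolation trick in \expref{Corollary}{cor:using algebraic independence new} has no bearing on a binomial-coefficient estimate. Expanding $\log(1-j/\ell)$, $\log(1-j/(n-\ell))$ and $\log(1-j/n)$ to first order, the contribution of these corrections to the log-ratio is
\[
-\sum_{j=0}^{b-1}\frac{j}{\ell}\;-\;\sum_{j=0}^{a-b-1}\frac{j}{n-\ell}\;+\;\sum_{j=0}^{a-1}\frac{j}{n}
\;=\;-\frac{(a-2b)^2}{2n}\bigl(1+o(1)\bigr)\,,
\]
using $\ell,\,n-\ell = (n/2)(1+o(1))$ and $\sum_{j<m}j\approx m^2/2$. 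This is the familiar Gaussian-width term and it does not vanish: for $a,b=\Theta(\sqrt n)$ with $a\neq 2b$ it is a genuine $\Theta(1)$ constant in the exponent, and when $b\epsilon^2=o(1)$ it is not absorbed by $\exp(O(b\epsilon^2))$. A self-contained proof must either carry the extra factor $\exp\!\bigl(-(a-2b)^2/(2n)+o(1)\bigr)$ through explicitly, or impose the side condition $(a-2b)^2/n = O(b\epsilon^2)$. In the paper's actual application the lemma is invoked with $N=qn$ in place of ``$n$'' while $a,|b|=O(n)\ll \sqrt N$, so $(a-2b)^2/N=o(1)$ and the extra factor is $1+o(1)$; but your proposal, which asserts the cancellation without computing it, does not establish this and would not survive scrutiny in the regime the lemma claims to cover.
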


\section{Utilizing low algebraic rank}~\label{sec: alg dep}
Let ${\cal Q} = \{Q_1, Q_2, \ldots, Q_t\}$ be a set of polynomials in $N$ variables and degree at most $d$ such that the algebraic rank of ${\cal Q}$ equals $k$. Without loss of generality, let us assume that ${\cal B} = \{Q_1, Q_2, \ldots, Q_k\}$ are an algebraically independent subset of $\cal C$ of maximal size. We now show that, in some sense, this implies that all the polynomials in ${\cal Q}$ can be represented as functions of  polynomials in the set ${\cal B}$. We make this notion formal in the lemma below, which is a restatement of \expref{Lemma}{lem:using algebraic dependence-intro}. 
\begin{lemma}[\expref{Lemma}{lem:using algebraic dependence-intro} restated]~\label{lem:using algebraic dependence}
Let $\F$ be any  field of characteristic zero or sufficiently large. Let ${\cal Q} = \{Q_1, Q_2, \ldots, Q_t\}$ be a set of polynomials in $N$ variables such that the algebraic rank of ${\cal Q}$ equals $k$. Let $d_i=\deg(Q_i)$ ($i\in [t]$) and let  ${\cal B} = \{Q_1, Q_2, \ldots, Q_k\}$ be a maximal algebraically independent subset of ${\cal Q}$. Then, there exists an $\overline{a} = (a_1, a_2, \ldots, a_N)$ in $\F^N$ and polynomials $F_{k+1}, F_{k+2}, \ldots, F_{t}$  in $k$ variables such that $\forall i \in \{k+1, k+2, \ldots, t\}$
$$Q_i(\overline{X} + \overline{a}) = \h^{\leq d_i}\left[F_i(Q_1(\overline{X} + \overline{a}), Q_2(\overline{X} + \overline{a}), \ldots, Q_k(\overline{X} + \overline{a})) \right]\,.$$
\end{lemma}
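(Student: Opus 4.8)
The plan is to prove \expref{Lemma}{lem:using algebraic dependence} one dependency at a time: fix $i \in \{k+1, \ldots, t\}$ and produce the translation $\overline{a}$ and the polynomial $F_i$ so that $Q_i(\overline{X}+\overline{a})$ equals the low-degree part of $F_i$ applied to the translated basis polynomials. The starting point is that $\{Q_1, \ldots, Q_k, Q_i\}$ is algebraically \emph{dependent} (since $\cal B$ is maximal), so by \expref{Lemma}{lem:degree upper bound for annihilating poly} there is a nonzero annihilating polynomial $R_i \in \F[Z_1, \ldots, Z_k, Y]$ with $R_i(Q_1, \ldots, Q_k, Q_i) \equiv 0$ and with $\deg R_i$ bounded in terms of $k$ and $d$. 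I want to view $R_i$ as a polynomial in the distinguished variable $Y$ (playing the role of $Q_i$) with coefficients in $\F[Z_1, \ldots, Z_k]$, and then invoke \expref{Lemma}{lem:DSY main} with the substitution $X_j \mapsto $ (the $Z$'s get further substituted by the $Q$'s). Concretely, set $P(Z_1, \ldots, Z_k, Y) := R_i / Y^m$ after dividing out the largest power $Y^m$ dividing $R_i$ — so that $P$ has a nonzero constant-in-$Y$ term — and note $P(Q_1, \ldots, Q_k, Q_i) \equiv 0$ still, provided $Q_i$ is not itself identically zero (which we may assume, else the statement is trivial or handled separately).

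The key technical point is arranging the hypothesis of \expref{Lemma}{lem:DSY main}, namely that $\frac{\partial P}{\partial Y}$ does not vanish at the relevant evaluation point. This is where the translation $\overline{a}$ enters. After translating all polynomials by $\overline{a}$, i.e. replacing $Q_\ell(\overline{X})$ by $\widetilde{Q}_\ell(\overline{X}) := Q_\ell(\overline{X}+\overline{a})$, the evaluation point $(0, \ldots, 0)$ for the $\overline{X}$-variables corresponds to the point $\overline{a}$ for the original variables. I would choose $\overline{a}$ so that $\frac{\partial P}{\partial Y}(\widetilde{Q}_1(\overline{0}), \ldots, \widetilde{Q}_k(\overline{0}), \widetilde{Q}_i(\overline{0})) = \frac{\partial P}{\partial Y}(Q_1(\overline{a}), \ldots, Q_k(\overline{a}), Q_i(\overline{a})) \neq 0$. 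The polynomial $\overline{a} \mapsto \frac{\partial P}{\partial Y}(Q_1(\overline{a}), \ldots, Q_i(\overline{a}))$ is a polynomial in $\overline{a}$; I must argue it is not identically zero. This is precisely where characteristic zero (or large characteristic) is used, together with the fact that $R_i$ was chosen to be a nonzero annihilator of \emph{minimal} $Y$-degree (or that $P$ is irreducible / the minimal-degree relation): if $\frac{\partial P}{\partial Y}(Q_1, \ldots, Q_i)$ were identically zero as a polynomial in $\overline{X}$, then $\frac{\partial P}{\partial Y}$ would be another annihilating polynomial of strictly smaller $Y$-degree, contradicting minimality — unless $\frac{\partial P}{\partial Y}$ is identically zero as a polynomial, which in characteristic zero forces $P$ to be independent of $Y$, contradicting that $P$ genuinely involves $Y$ (it does, since $Q_i$ is algebraically dependent on $\cal B$ but $\cal B$ itself is independent). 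Once we know this polynomial in $\overline{a}$ is nonzero, by \expref{Lemma}{lem: comb nulls} a random $\overline{a}$ from a large enough grid works, which also gives the ``random choice of $\overline{a}$'' strengthening mentioned after the lemma.

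With the hypothesis of \expref{Lemma}{lem:DSY main} in place, apply it with $t = d_i = \deg(Q_i)$: it yields a polynomial $R_t$ (rename it $F_i$) in $k+1$ variables — but note the ``$Y$-coefficient'' variable structure; since $P$'s coefficients $C_0, \ldots$ are themselves polynomials in $Z_1, \ldots, Z_k$, after substituting $Z_\ell \mapsto \widetilde{Q}_\ell$ the conclusion reads $\h^{\leq d_i}[\widetilde{Q}_i] = \h^{\leq d_i}[F_i(\widetilde{Q}_1, \ldots, \widetilde{Q}_k)]$ for a $k$-variable polynomial $F_i$ (after composing $R_t$ with the coefficient map, which is a polynomial map from $\F^k$ to $\F^{k+1}$). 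Finally, since $\deg(Q_i) = d_i$, the polynomial $\widetilde{Q}_i$ has degree $d_i$ (translation preserves degree in characteristic zero), so $\h^{\leq d_i}[\widetilde{Q}_i] = \widetilde{Q}_i$, giving exactly $Q_i(\overline{X}+\overline{a}) = \h^{\leq d_i}[F_i(Q_1(\overline{X}+\overline{a}), \ldots, Q_k(\overline{X}+\overline{a}))]$. The main obstacle I anticipate is the nonvanishing argument for $\frac{\partial P}{\partial Y}$ under a suitable translation: one has to be careful to pick the annihilator correctly (minimal $Y$-degree, constant-in-$Y$ coefficient nonzero after dividing out $Y^m$) so that the partial derivative in $Y$ is not forced to vanish on the image of the $Q$'s, and to package this as a single nonzero polynomial in the translation parameters $\overline{a}$ so that Schwartz--Zippel applies uniformly for all $i$ simultaneously (take the product of the finitely many nonzero polynomials, one per $i$, and pick $\overline{a}$ avoiding all their zeros).
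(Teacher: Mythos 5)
Your proposal is correct and follows essentially the same route as the paper: take a minimal-degree annihilator of $\{Q_1,\ldots,Q_k,Q_i\}$, argue via minimality (and characteristic zero) that the $Y$-partial of the annihilator does not vanish identically when composed with the $Q$'s, use Schwartz--Zippel to pick a single translation $\overline{a}$ making the DSY hypothesis hold for all $i$ at once, then apply the DSY lemma and substitute the coefficient map to land on a $k$-variate $F_i$. The only cosmetic difference is that you strip off $Y^m$ by hand to ensure a nonzero constant-in-$Y$ term; this is harmless but unnecessary, since the DSY lemma imposes no such condition and minimality of the total degree of the annihilator already rules out $Y \mid A_i$ once $Q_i \not\equiv 0$.
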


\begin{proof}
Let $d$ be defined as $\max_i\{d_i\}$. 
Let us consider any $i$ such that $i \in \{k+1, k+2, \ldots, t\}$. {}From the statement of the lemma, 
it follows that the set of polynomials in the set ${\cal B} \cup \{Q_i\}$ are algebraically dependent. Therefore, there exists a nonzero polynomial $A_i$ in $k+1$ variables such that $A_i(Q_1, Q_2, \ldots, Q_k, Q_i) \equiv 0$. Without loss of generality, we choose such a polynomial with the smallest total degree. {}From the upper bound on the degree of the annihilating polynomial from \expref{Lemma}{lem:degree upper bound for annihilating poly}, we can assume that the degree of $A_i$ is at most $(k+1)d^k$. Consider the polynomial $A'_i(\overline{X}, Y)$ defined by $$A'_i(\overline{X}, Y) = A_i(Q_1(\overline{X}), Q_2(\overline{X}), \ldots, Q_k(\overline{X}), Y)\,.$$
We have the following observation about properties of $A'_i$.
\begin{observation}~\label{obs:internal1}
$A'_i$ satisfies the following
conditions.
\begin{itemize}
\item $A'_i$ is not identically zero.
\item The $Y$ degree of $A'_i$ is at least one. 
\item $Q_i(\overline{X})$ is a root of the polynomial $A'_i$, when viewing it as a polynomial in the $Y$ variable with coefficients coming from $\F(\overline{X})$.
\end{itemize}
\end{observation}
\begin{proof}
We prove the items in
sequence.  %
\begin{itemize}
\item If $A'_i$ is identically zero, then it follows that $Q_1, Q_2, \ldots, Q_k$ are algebraically dependent, which is a contradiction. 
\item If $A'_i(\overline{X} , Y)$ does not depend on the variable $Y$, then by definition, it follows that $A_i(Q_1, Q_2, \ldots, Q_k, Y)$ does not depend on $Y$. Hence, $A_i(Q_1, Q_2, \ldots, Q_k, Q_{i})$ does not depend on $Q_{i}$ but is identically zero. This contradicts the algebraic independence of $Q_1, Q_2, \ldots, Q_k$.
\item This item follows from the fact that the polynomial obtained by substituting $Y$ by $Q_i$ in $A'_i$ equals $A_i(Q_1, Q_2, \ldots, Q_k, Q_i)$, which is identically zero. \qedhere
\end{itemize}
\end{proof}

Our aim now is to invoke \expref{Lemma}{lem:DSY main} for the polynomial $A'_i$, but first, we need to verify that the conditions in the hypothesis of \expref{Lemma}{lem:DSY main} are satisfied. Let the polynomial $A''_i$ be defined as the first order derivative of $A'_i$ with respect to $Y$. Formally, 
$$A''_i = \frac{\partial{A'_i}}{\partial Y}\,.$$
We proceed with the following claim, the proof of which we defer to the end. 
\begin{claim}~\label{clm:derivative nonzero}
The polynomial $A''_i$ is not an identically zero polynomial and ${A''_i|}_{Y = Q_i}$ is not identically zero. 
\end{claim}
For the ease of notation, we define 
\[
{L_i}(\overline{X}) = {A''_i|}_{Y = Q_i}\,.
\]
Observe that $L_i$ is a polynomial in the variables $\overline{X}$ which is not identically zero and is of degree at most $(k+1)d^{k+1}$. Let $H$ be a subset of $\F$ of size $2t(k+1)d^{k+1}$.  Then, for a uniformly random point $\overline{a}_i$ picked from $H^N$, the probability that $L_i$ vanishes at $\overline{a}_i$ is at most $1/2t$. We call the set of all points $\overline{a}_i\in H^N$ where $L_i$  vanishes as bad. Then, with a probability at least $1-1/2t$, a uniformly random element of $H^N$ is not bad. Let $\overline{a}_i \in \F^N$ be a ``not bad'' element. 
We can replace $X_j$ by $X_j + \gamma$, where $\gamma$ is the $j^{th}$ coordinate of $a_i$ and then for the resulting polynomial $L_i(\overline{X} + \overline{a}_i)$, the point $(0, 0, \ldots, 0)$ is not bad. 

We are now ready to apply \expref{Lemma}{lem:DSY main}. Let 
$$A'_i(\overline{X} , Y) = \sum_{j = 0}^{(k+1)d^k} C_j(\overline{X})\cdot Y^j\,.$$ 
Here, for every $j$, $C_j(\overline{X}) = C_j'\left(Q_1(\overline{X}),Q_2(\overline{X}), \ldots, Q_k(\overline{X} )\right)$ is a polynomial in the $\overline{X}$ variables and is the coefficient of $Y^j$ in  $A'_i(\overline{X}, Y)$  when viewed as an element of $\F[\overline{X}][Y]$. 
{}From the discussion above, we know that the following are true. 
\begin{enumerate}
\item The polynomial $A'_i(\overline{X} + \overline{a}_i, Q_i(\overline{X} + \overline{a}_i))$ is identically zero. 
\item The first derivative of $A'_i(\overline{X} + \overline{a}_i, Y)$ with respect to $Y$ does not vanish at  $(0, 0, \ldots, 0, Q_i(0, 0, \ldots, 0))$. 
\end{enumerate}
Therefore, by \expref{Lemma}{lem:DSY main}, it follows that there is a polynomial $G_i$ such that 
$$Q_i(\overline{X} + \overline{a}_i) = \h^{\leq d_i}\left[G_i(C_0(\overline{X} + \overline{a}_i), C_1(\overline{X} + \overline{a}_i), \ldots, C_{(k+1)d^k}(\overline{X} + \overline{a}_i)) \right]\,.$$
We also know that for every $j \in \{0, 1, \ldots, (k+1)d^k\}$, $C_j(\overline{X} + \overline{a}_i)$ is a polynomial in the polynomials $Q_1(\overline{X} + \overline{a}_i),Q_2(\overline{X} + \overline{a}_i), \ldots, Q_k(\overline{X} + \overline{a}_i)$. In other words, 
$$Q_i(\overline{X} + \overline{a}_i) = \h^{\leq d_i}\left[F_i(Q_1(\overline{X} + \overline{a}_i), Q_2(\overline{X} + \overline{a}_i), \ldots, Q_{k}(\overline{X} + \overline{a}_i)) \right] $$
for  a polynomial $F_i$. 

In order to prove the lemma for all values of $i \in \{k+1, k+2, \ldots, t\}$, we observe that we can pick a single value of the translation $\overline{a}$, which works for every $i\in \{k+1, k+2, \ldots, t\}$. Such an $\overline{a}$ exists because the probability that a uniformly random $p \in H^N$ is bad for some $i$ is at most $t\cdot 1/2t = 1/2$ and the translation corresponding to any such element $\overline{a}$ in $H^N$ which is not bad for every $i$ will work. The statement of the lemma then immediately follows. 
\end{proof}

We now prove \expref{Claim}{clm:derivative nonzero}.
\begin{proof}[Proof of \expref{Claim}{clm:derivative nonzero}]
We observed from the second item in \expref{Observation}{obs:internal1} that the degree of $Y$ in $A'_i$ is at least $1$. Hence, $A''_i$ is not identically zero.  
If $A''_i|_{Y = Q_i}$ is identically zero, then it follows that $\{Q_1, Q_2, \ldots, Q_k, Q_i\}$ have an annihilating polynomial of degree smaller than the degree of $A_i$, which is a contradiction to the choice of $A_i$, as a minimum degree annihilating polynomial. 
\end{proof}

\expref{Lemma}{lem:using algebraic dependence} lets us express all polynomials in a set of polynomials as a function of the polynomials in the transcendence basis. However, the functional form obtained is slightly cumbersome for us to use in our applications. We now derive the following corollary, which is easier to use in our applications. 

\begin{corollary}~\label{cor:using algebraic independence new}
Let $\F$ be any  field of characteristic zero or sufficiently large. Let ${\cal Q} = \{Q_1, Q_2, \ldots, Q_t\}$ be a set of polynomials in $N$ variables such that the for every $i \in [t]$, the degree of  $Q_i$ is equal to $d_i < d$ and  the algebraic rank of ${\cal Q}$ equals $k$. Let  ${\cal B} = \{Q_1, Q_2, \ldots, Q_k\}$ be a maximal algebraically independent subset of ${\cal Q}$. Then, there exists an $\overline{a} = (a_1, a_2, \ldots, a_N)$ in $\F^N$ and polynomials $F_{k+1}, F_{k+2}, \ldots, F_{t}$  in at most $k(d+1)$ variables such that $\forall i \in \{k+1, k+2, \ldots, t\}$
$$Q_i(\overline{X} + \overline{a}) = F_i(\h[Q_1(\overline{X} + \overline{a})], \h[Q_2(\overline{X} + \overline{a})], \ldots, \h[Q_k(\overline{X} + \overline{a})])\,.$$
\end{corollary}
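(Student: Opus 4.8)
The plan is to derive this corollary directly from \expref{Lemma}{lem:using algebraic dependence} by a bookkeeping argument: the only difference between the two statements is that the lemma produces $Q_i(\overline{X}+\overline{a}) = \h^{\leq d_i}[F_i(Q_1(\overline{X}+\overline{a}),\dots,Q_k(\overline{X}+\overline{a}))]$, with an outer truncation operator, whereas the corollary wants an honest polynomial identity whose arguments are the \emph{homogeneous components} of the $Q_j(\overline{X}+\overline{a})$. So the job is to show that $\h^{\leq d_i}[F_i(g_1,\dots,g_k)]$ — where $g_j := Q_j(\overline{X}+\overline{a})$ — is itself a polynomial expression in the homogeneous components of the $g_j$ of bounded degree.

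First I would apply \expref{Lemma}{lem:using algebraic dependence} to $\cal Q$ with the given basis $\cal B$, obtaining a single $\overline{a}\in\F^N$ and polynomials $F_{k+1},\dots,F_t$, each in $k$ variables, with $Q_i(\overline{X}+\overline{a}) = \h^{\leq d_i}[F_i(g_1,\dots,g_k)]$ for all $i\in\{k+1,\dots,t\}$, where $g_j := Q_j(\overline{X}+\overline{a})$. Fix such an $i$. Since translation preserves degree, $\deg(g_j)=d_j$, so $g_j=\sum_{\ell=0}^{d_j}g_j^{(\ell)}$ with $g_j^{(\ell)}:=\h^{\ell}[g_j]$. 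Expanding $F_i=\sum_{\overline{e}}c_{\overline{e}}\,Y_1^{e_1}\cdots Y_k^{e_k}$ and substituting, I would use the elementary rule for the homogeneous decomposition of a product, namely $\h^{m}[\prod_r h_r]=\sum_{m_1+\cdots=m}\prod_r\h^{m_r}[h_r]$, applied to each $\h^{m}[g_1^{e_1}\cdots g_k^{e_k}]$ for $0\le m\le d_i$. This exhibits $\h^{m}[g_1^{e_1}\cdots g_k^{e_k}]$ as a fixed polynomial in the variables $g_j^{(\ell)}$ with $j\in[k]$ and $0\le \ell\le d_i$ — components of degree exceeding $m\le d_i$ simply cannot appear. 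Summing over $m\le d_i$ and over $\overline{e}$ and absorbing the $c_{\overline{e}}$, I obtain a polynomial $F_i'$ over $\F$ in the variables $\{Z_{j,\ell}: j\in[k],\,0\le\ell\le\min(d_i,d_j)\}$ with
\[
Q_i(\overline{X}+\overline{a})=\h^{\leq d_i}\bigl[F_i(g_1,\dots,g_k)\bigr]=F_i'\bigl(\dots,\h^{\ell}[Q_j(\overline{X}+\overline{a})],\dots\bigr).
\]
Finally I would extend $F_i'$ to a polynomial that formally takes as input the full tuples $\h[Q_1(\overline{X}+\overline{a})],\dots,\h[Q_k(\overline{X}+\overline{a})]$ by simply ignoring the finitely many components of degree greater than $d_i$; this yields the identity in the exact form stated, the number of arguments being at most $\sum_{j=1}^k(d_j+1)\le k(d+1)$ because $d_j<d$. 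The vector $\overline{a}$ is the same for all $i$, inherited from \expref{Lemma}{lem:using algebraic dependence}.

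I do not expect a genuine obstacle: all the real content lives in \expref{Lemma}{lem:using algebraic dependence}, and the corollary is a change of normal form. The one point to state carefully is the claim that truncating a polynomial expression in $g_1,\dots,g_k$ at total degree $d_i$ produces something depending only on the homogeneous components of the $g_j$ of degree at most $d_i$ — but this is immediate from the product rule for homogeneous decomposition quoted above, together with linearity of $\h^{m}[\cdot]$.
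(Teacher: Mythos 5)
Your proof is correct and follows the same two-step route as the paper: first invoke \expref{Lemma}{lem:using algebraic dependence} to obtain $Q_i(\overline{X}+\overline{a})=\h^{\leq d_i}[F_i(g_1,\dots,g_k)]$ for a single translation $\overline{a}$ (with $g_j := Q_j(\overline{X}+\overline{a})$), and then show that $\h^{\leq d_i}$ of a polynomial in $g_1,\dots,g_k$ is itself a polynomial in the homogeneous components $\h^\ell[g_j]$. The only difference is the mechanism for that second step. The paper uses an interpolation argument: substitute $X_j \mapsto Z\cdot X_j$, view the result as a univariate polynomial in the auxiliary variable $Z$, recover each $\h^\ell[\cdot]$ as an $\F$-linear combination of evaluations at distinct $Z\in\F$, and note that for each fixed $Z$ the polynomial $g_j(Z\overline{X})=\sum_\ell Z^\ell\,\h^\ell[g_j]$ lies in the $\F$-span of the homogeneous components of $g_j$. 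You instead expand $F_i$ monomial by monomial and apply linearity together with the product rule $\h^m[\prod_r h_r]=\sum_{m_1+\cdots+m_r=m}\prod_r \h^{m_r}[h_r]$, which makes it immediately visible that no component of degree exceeding $d_i$ can contribute. The two arguments establish the same elementary algebraic fact; yours avoids the auxiliary variable and is somewhat more explicit, while the paper's mirrors the interpolation trick it also uses in \expref{Lemma}{lem:interpolation nonhomogeneous}. Your count of the number of arguments (at most $\sum_{j=1}^k(d_j+1)\leq k(d+1)$ since $d_j<d$) matches the statement.
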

\begin{proof}
Let $i$ be such that $ i \in \{k+1, k+2, \ldots, t\}$. {}From \expref{Lemma}{lem:using algebraic dependence}, we know that there exists an $\overline{a} \in \F^N$ and a polynomial $W_{i}$ such that 
\begin{equation}
Q_{i}(\overline{X} + \overline{a}) = \h^{\leq d_{i}}\left[W_{i}(Q_{1}(\overline{X} + \overline{a}), Q_{2}(\overline{X} + \overline{a}), \ldots, Q_{k}(\overline{X} + \overline{a})) \right]\,.
\end{equation}
We will now show that $\h^{\leq d_{i}}\left[W_{i}(Q_{1}(\overline{X} + \overline{a}), Q_{2}(\overline{X} + \overline{a}), \ldots, Q_{k}(\overline{X} + \overline{a})) \right]$ is actually a polynomial in the homogeneous components of the various $Q_{j}(\overline{X} + \overline{a})$ by the following procedure, which is essentially univariate polynomial interpolation. 
\begin{itemize}
\item Let $R(\overline{X}) = W_{i}(Q_{1}(\overline{X} + \overline{a}), Q_{2}(\overline{X} + \overline{a}), \ldots, Q_{k}(\overline{X} + \overline{a}))$. We replace every variable $X_j$ in $R$ by $Z\cdot X_j$ for a new variable $Z$. We view the resulting polynomial $R'$ as an element of $\F(\overline{X})[Z]$, \ie, a univariate polynomial in $Z$ with coefficients coming from the field of rational functions in the $\overline{X}$ variables. 
\item Now, observe that for any $\ell$, the homogeneous component of degree-$\ell$ of  $R$ is precisely the coefficient of $Z^{\ell}$ in $R'$. Hence, we can evaluate $R'$ for sufficiently many distinct values of $Z$ in $\F(\overline{X})$, and then take an $\F(\overline{X})$ linear combination of these evaluations to express the homogeneous components. Moreover, since $\F$ is an infinite field, without loss of generality, we can pick the values of $Z$ to be scalars in $\F$, and in this case, we will just be taking an $\F$ linear combination. 
\end{itemize}
The catch here is that after replacing $X_j$ by $Z\cdot X_j$ and substituting different values of $Z \in \F$, the polynomials $Q_{i'}(\overline{X} + \overline{a})$ could possibly lead to distinct polynomials. In general, this is bad, since our goal is to show that every polynomial in a set of algebraically dependent polynomials in a function of \emph{few} polynomials. However, the following observation comes to our rescue. Let $P$ be any polynomial in $\F[\overline{X}]$ of degree-$\Delta$ and let $P'$ be the polynomial obtained from $P$ by replacing $X_j$ by $Z\cdot X_j$. Then, 
\begin{equation}
P'(\overline{X} + \overline{a}) = \sum_{\ell = 0}^{\Delta} Z^{\ell}\cdot\h^{\ell}[P(\overline{X} + \overline{a})] \,.
\end{equation}
In particular, the set of polynomials obtained from $P'$ for different values of $Z$ are all in the linear span of homogeneous components of $P$. 

Therefore, any homogeneous component of $R$ can be expressed as a function of the set  
 \[\bigcup_{i = 1}^k \h\left[Q_i(\overline{X} + \overline{a}) \right]\] 
of polynomials.
This completes the proof of the corollary. 
\end{proof}

We now prove the following lemma, which will be directly useful in the our applications to polynomial identity testing and lower bounds in the following sections.  
\begin{lemma}~\label{lem:expressing as functions of the basis}
Let $\F$ be any  field of characteristic zero or sufficiently large. Let $P \in \F[\overline{X}]$ be a polynomial in $N$ variables, of degree equal to $n$, such that  $P$ can be represented as $$P = \sum_{i = 1}^T  F_{i}(Q_{i1}, Q_{i2}, \ldots, Q_{it})    $$
and such that the following are true.  %
\begin{itemize}
\item For each $i \in[T]$, $F_i$ is a polynomial in $t$ variables. 
\item For each $i \in [T]$ and $j \in [t]$, $Q_{ij}$ is a polynomial in $N$ variables of degree at most $d$. 
\item For each $i \in [T]$, the algebraic rank of the set 
$\{Q_{ij} : j \in [t]\}$ 
of polynomials  %
is at most $k$ and  ${\cal B}_i = \{Q_{i1}, Q_{i2}, \ldots, Q_{ik}\}$ is a maximal algebraically independent subset of $\{Q_{ij} : j \in [t]\}$. 
\end{itemize}
Then, there exists an $\overline{a} \in \F^{N}$ and polynomials $F_i'$ in at most $k(d+1)$ variables such that 
\[
P(\overline{X} + \overline{a}) = \sum_{i = 1}^T  F_i'(\h[Q_{i1}(\overline{X} + \overline{a})], \h[Q_{i2}(\overline{X} + \overline{a})], \ldots, \h[Q_{ik}(\overline{X} + \overline{a})])\,.
\]
\end{lemma}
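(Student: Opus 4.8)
The plan is to reduce \expref{Lemma}{lem:expressing as functions of the basis} to \expref{Corollary}{cor:using algebraic independence new} applied term by term, and then to observe that the basis polynomials $Q_{ij}$ for $j \in [k]$ are themselves trivially functions of their own homogeneous components (indeed $Q = \h[Q]$ read off as a sum of its components). First I would fix, for each $i \in [T]$, the translation vector $\overline{a}_i$ guaranteed by \expref{Corollary}{cor:using algebraic independence new} applied to the set ${\cal Q}_i = \{Q_{ij} : j \in [t]\}$ with maximal algebraically independent subset ${\cal B}_i$. This yields polynomials $F_{i,k+1}, \ldots, F_{i,t}$ in at most $k(d+1)$ variables such that for $j \in \{k+1,\ldots,t\}$,
\[
Q_{ij}(\overline{X} + \overline{a}_i) = F_{i,j}(\h[Q_{i1}(\overline{X} + \overline{a}_i)], \ldots, \h[Q_{ik}(\overline{X} + \overline{a}_i)])\,.
\]
For $j \in [k]$ set $F_{i,j}$ to be the polynomial that reconstructs $Q_{ij}(\overline{X}+\overline{a}_i)$ as the sum of the coordinates of its homogeneous-components tuple; this is also a polynomial (in fact linear) in at most $k(d+1)$ variables. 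Substituting all of these into $F_i$ gives a single polynomial $F_i'$ in the $k(d+1)$ quantities $\h[Q_{i1}(\overline{X}+\overline{a}_i)], \ldots, \h[Q_{ik}(\overline{X}+\overline{a}_i)]$ with
\[
F_i(Q_{i1}(\overline{X}+\overline{a}_i), \ldots, Q_{it}(\overline{X}+\overline{a}_i)) = F_i'(\h[Q_{i1}(\overline{X}+\overline{a}_i)], \ldots, \h[Q_{ik}(\overline{X}+\overline{a}_i)])\,.
\]

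The one genuine issue is that \textit{a priori} the translation $\overline{a}_i$ depends on the index $i$, whereas the lemma demands a single common $\overline{a}$. This is exactly the same obstruction that arose at the end of the proof of \expref{Lemma}{lem:using algebraic dependence}, and it is handled the same way: each $\overline{a}_i$ is not merely some fixed vector but may be taken to be any point of a grid $H^N$ (with $|H|$ a sufficiently large polynomial in $t$, $k$, $d$) avoiding a certain ``bad'' set $B_i$ on which some nonzero polynomial $L_{ij}$ of bounded degree vanishes. The union $\bigcup_{i\in[T]} B_i$ still misses at least half of $H^N$ provided $|H|$ is enlarged by a factor of $T$ (or one simply takes $|H|$ large enough that each $B_i$ excludes at most a $\tfrac{1}{2T}$ fraction), so a single $\overline{a} \in H^N$ works simultaneously for all $i \in [T]$. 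I would state this explicitly: rerun the probabilistic argument of \expref{Lemma}{lem:using algebraic dependence} with the failure probability per term driven below $\tfrac{1}{2T}$, and take a common good $\overline{a}$.

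Finally I would sum over $i \in [T]$. Since $P(\overline{X}) = \sum_{i=1}^T F_i(Q_{i1}(\overline{X}), \ldots, Q_{it}(\overline{X}))$ is an identity of polynomials in $\overline{X}$, replacing $\overline{X}$ by $\overline{X} + \overline{a}$ throughout gives
\[
P(\overline{X} + \overline{a}) = \sum_{i=1}^T F_i(Q_{i1}(\overline{X}+\overline{a}), \ldots, Q_{it}(\overline{X}+\overline{a})) = \sum_{i=1}^T F_i'(\h[Q_{i1}(\overline{X}+\overline{a})], \ldots, \h[Q_{ik}(\overline{X}+\overline{a})])\,,
\]
which is precisely the claimed form, with each $F_i'$ a polynomial in at most $k(d+1)$ variables. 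I expect no serious obstacle beyond the bookkeeping of the common-translation argument; everything else is a direct composition of \expref{Corollary}{cor:using algebraic independence new} with the trivial observation that a polynomial is a (linear) function of its own homogeneous components, followed by summation, which is why I would keep the write-up short and point back to the proof of \expref{Lemma}{lem:using algebraic dependence} for the grid argument rather than repeating it.
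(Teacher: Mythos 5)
Your proposal is correct and follows essentially the same route as the paper's proof, which is given there in only a few lines: apply \expref{Corollary}{cor:using algebraic independence new} to each summand, then observe that over an infinite field a single common translation $\overline{a}$ can be chosen. You usefully spell out two points the paper leaves implicit — that the basis polynomials $Q_{ij}$ ($j\in[k]$) are trivially (linear) functions of their own homogeneous-components tuples, and that the common $\overline{a}$ is obtained by driving the per-summand failure probability below $1/(2T)$ in the grid argument from the proof of \expref{Lemma}{lem:using algebraic dependence} — but there is no substantive difference in approach.
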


\begin{proof}
The proof would essentially follow from the application of \expref{Corollary}{cor:using algebraic independence new} to each of the summands on the right hand side. The only catch is that the translations $\overline{a}$ could be different for each one of them. Since we are working over infinite fields, without loss of generality, we can assume that there is a good translation $\overline{a}$ which works for all the summands. 
\end{proof}

\section{Application to lower bounds}\label{sec:lower bounds}
In this section , we prove \expref{Theorem}{thm:lower bound}. But, first we discuss the definitions of the complexity measure used in the proof, the notion of random restrictions and the family of hard polynomials that we work with. 
\subsection{Projected shifted partial derivatives}~\label{sec:shifted partials prelims}
The complexity measure that we use to prove the lower bounds in this paper is the notion of \emph{projected shifted partial derivatives} of a polynomial introduced by Kayal et al.\ in~\cite{KLSS14} and subsequently used in a number of following papers~\cite{KS-full, KayalSaha14, KS15-lowarity}. 

For a polynomial $P$ and a monomial $\gamma$,  ${\frac{\partial P}{\partial \gamma}}$ is the partial derivative of $P$ with respect to $\gamma$ and for a set of monomials ${\cal M}$,  $\partial_{\cal M} (P)$ is the set of partial derivatives of $P$ with respect to monomials in ${\cal M}$. The space of $({\cal M}, m)\mhyphen$projected shifted partial derivatives of a polynomial $P$ is defined below. 
\begin{definition}[$({\cal M}, m)\mhyphen$projected shifted partial derivatives]\label{def:shiftedderivative}
For an
$N$-variate %
polynomial
\[
  P \in {\field{F}}[X_1, X_2, \ldots, X_{N}]\,,
\]
set of monomials ${\cal M}$ and a positive integer $m\geq 0$, the space of $({\cal M}, m)$-projected shifted partial derivatives of $P$ is defined as
\begin{align}
 \langle \partial_{\cal M} (P)\rangle_{m} \stackrel{\operatorname{def}}{=} \field{F}\mhyphen\operatorname{span}\left\{\mult\left[\prod_{i\in S}{X_i}\cdot g\right]  :  g \in \partial_{\cal M} (P), S\subseteq [N], |S| = m\right\}\,.
\end{align}
\end{definition}
Here, $\mult[P]$ of a polynomial $P$ is the projection of $P$ on the multilinear monomials in its support. We use the dimension of projected shifted partial derivative space of $P$ with respect to some set of monomials ${\cal M}$ and a parameter $m$ as a measure of the complexity of a polynomial. Formally, 
$$\Phi_{{\cal M}, m} (P) = \dim( \langle \partial_{\cal M} (P)\rangle_{m})\,.$$
{}From the definitions, it is straightforward to see that the measure is subadditive.
\begin{lemma}[Subadditivity]~\label{lem:subadditive}
Let  $P$ and $Q$ be any two multivariate polynomials in $\F[X_1, X_2, \ldots, X_{N}]$.  Let ${\cal M}$ be any set of monomials and $m$ be any positive integer. Then, for all scalars $\alpha$ and $\beta$
$$\Phi_{{\cal M}, m} (\alpha\cdot P + \beta\cdot Q) \leq \Phi_{{\cal M}, m} (P) + \Phi_{{\cal M}, m} (Q)\,.$$
\end{lemma}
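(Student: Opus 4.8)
The plan is to unwind the definitions and observe that the measure $\Phi_{\mathcal M, m}$ is nothing but the dimension of an $\F$-linear span of a set of polynomials obtained by applying a fixed list of $\F$-linear operations to the input polynomial. First I would note that taking a partial derivative $g \mapsto \partial_\gamma g$ is $\F$-linear, multiplying by a fixed monomial $\prod_{i\in S} X_i$ is $\F$-linear, and the operator $\mult[\cdot]$ (projection onto the multilinear monomials) is $\F$-linear. Hence for any fixed monomial $\gamma \in \mathcal M$ and any fixed $S \subseteq [N]$ with $|S| = m$, the map $T_{\gamma, S} \colon P \mapsto \mult\bigl[\prod_{i\in S} X_i \cdot \partial_\gamma P\bigr]$ is an $\F$-linear map on $\F[X_1,\ldots,X_N]$. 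Thus $\langle \partial_{\mathcal M}(P)\rangle_m$ is precisely the $\F$-span of the finite set $\{ T_{\gamma,S}(P) : \gamma \in \mathcal M,\ S\subseteq[N],\ |S|=m\}$.

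Next I would use the fact that each $T_{\gamma,S}$ is linear to conclude $T_{\gamma,S}(\alpha P + \beta Q) = \alpha\, T_{\gamma,S}(P) + \beta\, T_{\gamma,S}(Q)$, so every generator of $\langle \partial_{\mathcal M}(\alpha P + \beta Q)\rangle_m$ lies in the vector space $\langle \partial_{\mathcal M}(P)\rangle_m + \langle \partial_{\mathcal M}(Q)\rangle_m$. Therefore $\langle \partial_{\mathcal M}(\alpha P + \beta Q)\rangle_m \subseteq \langle \partial_{\mathcal M}(P)\rangle_m + \langle \partial_{\mathcal M}(Q)\rangle_m$, and taking dimensions gives
\[
\Phi_{\mathcal M, m}(\alpha P + \beta Q) \;\leq\; \dim\bigl(\langle \partial_{\mathcal M}(P)\rangle_m + \langle \partial_{\mathcal M}(Q)\rangle_m\bigr) \;\leq\; \Phi_{\mathcal M, m}(P) + \Phi_{\mathcal M, m}(Q),
\]
where the last inequality is the standard bound $\dim(U+V) \le \dim U + \dim V$ for subspaces of a vector space.

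There is essentially no obstacle here: the only thing to be slightly careful about is confirming that each of the three building-block operations really is $\F$-linear — differentiation with respect to a monomial is linear over the coefficient field, multiplication by a fixed monomial is linear, and the coordinate projection $\mult[\cdot]$ onto a fixed subset of basis monomials (the multilinear ones) is linear. Once these are in hand the result is immediate from linearity and the dimension-of-a-sum inequality; this is why the excerpt calls it "straightforward to see." I would present the argument in the two or three lines sketched above.
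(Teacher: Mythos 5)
Your argument is correct and is exactly the standard justification the paper has in mind; the paper itself offers no explicit proof, merely remarking that subadditivity is ``straightforward to see'' from the definitions, and your observation that each of the three building-block operations is $\F$-linear followed by $\dim(U+V)\le\dim U+\dim V$ is precisely the intended reasoning.
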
 
In the proof of \expref{Theorem}{thm:lower bound}, we need to upper bound the dimension of the span of projected shifted partial derivatives of the homogeneous component of a fixed degree of polynomials. The following lemma comes to our rescue there. 

\begin{lemma}~\label{lem:measure of homogeneous components}
Let $P$ be a polynomial of degree at most $d$. Then for every $0 \leq i \leq d$, and for every choice of parameters $m, r$ and a set $\cal M$ of monomials of degree equal to $r$, the following inequality is true.
$$\phi_{{\cal M}, m}(P) \geq  \phi_{{\cal M}, m}(\h^i[P])\,.$$
\end{lemma}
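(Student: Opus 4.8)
The plan is to reduce $\h^i[P]$ to a linear combination of shifted-and-translated copies of $P$, and then invoke subadditivity together with the observation that the measure $\phi_{\mathcal{M},m}$ is translation-invariant (or, more carefully, behaves well under the specific substitution we use). Concretely, first I would use the same trick as in \expref{Lemma}{lem:interpolation nonhomogeneous}: introduce a fresh variable $Z$ and replace each $X_j$ by $Z \cdot X_j$ in $P$, obtaining a polynomial $P'(\overline X, Z) = \sum_{\ell=0}^d Z^\ell \cdot \h^\ell[P]$. Since $\deg_Z(P') \le d$, for any $d+1$ distinct scalars $z_0, \ldots, z_d \in \F$ there are field constants $c_0, \ldots, c_d$ (the Lagrange interpolation coefficients for extracting the coefficient of $Z^i$) such that $\h^i[P] = \sum_{\ell=0}^d c_\ell \cdot P(z_\ell \cdot X_1, z_\ell \cdot X_2, \ldots, z_\ell \cdot X_N)$.

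Next I would argue that scaling the variables by a nonzero scalar $z$ does not change the dimension of the projected-shifted-partial-derivative space. The point is that the map $X_j \mapsto z X_j$ (for $z \ne 0$) is an invertible linear change of variables that is diagonal in the monomial basis: it sends a monomial of degree $e$ to $z^e$ times itself, so it maps multilinear monomials to (scalar multiples of) multilinear monomials, commutes appropriately with taking partial derivatives with respect to monomials in $\mathcal{M}$, and commutes with multiplication by a squarefree degree-$m$ monomial $\prod_{i \in S} X_i$. Hence it induces an isomorphism of the space $\langle \partial_{\mathcal M}(P)\rangle_m$ onto $\langle \partial_{\mathcal M}(P(z\overline X))\rangle_m$, so $\phi_{\mathcal M, m}(P(z \overline X)) = \phi_{\mathcal M, m}(P)$ for every nonzero $z$. (If some $z_\ell$ is zero, $P(0 \cdot \overline X)$ is a constant and contributes a space of dimension at most the dimension from a single monomial, which is harmless; alternatively just choose all $z_\ell$ nonzero, which is possible since $\F$ is infinite.)

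Finally I would put the pieces together: by subadditivity (\expref{Lemma}{lem:subadditive}) applied to the linear combination $\h^i[P] = \sum_\ell c_\ell \cdot P(z_\ell \overline X)$,
\[
\phi_{\mathcal M, m}(\h^i[P]) \le \sum_{\ell=0}^d \phi_{\mathcal M, m}(P(z_\ell \overline X)) = (d+1) \cdot \phi_{\mathcal M, m}(P)\,.
\]
Wait — this only gives a bound that is off by a factor of $d+1$, whereas the lemma claims $\phi_{\mathcal M, m}(P) \ge \phi_{\mathcal M, m}(\h^i[P])$ with no loss. So the cleaner route, which I would actually take, is to avoid subadditivity entirely: observe directly that each homogeneous component $\h^i[P]$ lies in the $\F$-linear span of the polynomials $\{P(z_\ell \overline X) : \ell = 0, \ldots, d\}$, and more is true — taking a partial derivative with respect to a monomial $\gamma$ and then applying the substitution $X_j \mapsto z X_j$ gives exactly $z^{\deg(\gamma)}$ times the substituted derivative, so $\partial_{\mathcal M}(\h^i[P])$ lies in the span of $\{\partial_{\mathcal M}(P(z_\ell \overline X)) : \ell\}$ after grading by derivative degree. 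The crux, and the step I expect to be the only real subtlety, is that I want a single space containing all the shifted projected partials of $\h^i[P]$; the right statement is that $\langle \partial_{\mathcal M}(P)\rangle_m$, viewed after the $Z$-substitution, decomposes by $Z$-degree, and the degree-$i$ graded piece of $\langle \partial_{\mathcal M}(P')\rangle_m$ is isomorphic to $\langle \partial_{\mathcal M}(\h^i[P])\rangle_m$ while being a subspace of (a space isomorphic to) $\langle \partial_{\mathcal M}(P)\rangle_m$. Making this grading argument precise — checking that $\mult[\cdot]$, partial derivatives, and shifts all respect the $Z$-grading coming from the homogenizing substitution — is the main obstacle, but it is routine bookkeeping rather than a conceptual difficulty; everything commutes with the diagonal scaling action because that action is diagonal on monomials.
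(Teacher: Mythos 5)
Your final grading argument is conceptually the same as the paper's proof, but the $Z$-homogenization detour makes it heavier and introduces one imprecision worth flagging. The paper works with the intrinsic $\overline X$-degree grading directly: since every $\gamma \in \cal M$ has degree exactly $r$, we have $\partial_\gamma \h^i[P] = \h^{i-r}[\partial_\gamma P]$; multiplying by the degree-$m$ monomial $\prod_{j\in S}X_j$ raises degree by exactly $m$; and $\mult[\cdot]$ either kills a monomial or preserves it, so it commutes with $\h^e[\cdot]$. Hence every generator of $\langle \partial_{\cal M}(\h^i[P])\rangle_m$ equals $\h^{\,i-r+m}$ applied to the corresponding generator of $\langle \partial_{\cal M}(P)\rangle_m$, i.e., $\langle \partial_{\cal M}(\h^i[P])\rangle_m = \h^{\,i-r+m}\bigl[\langle \partial_{\cal M}(P)\rangle_m\bigr]$, and the image of a vector space under a linear map has dimension at most that of the domain. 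That is the whole proof; no factor of $d+1$ is lost because we never invoke subadditivity. Two notes on your write-up: first, your ``cleaner route'' via $\h^i[P]\in\operatorname{span}\{P(z_\ell\overline X)\}$ does not by itself escape the $(d+1)$ loss, since $\langle\partial_{\cal M}(\cdot)\rangle_m$ of a linear combination is only contained in the sum of the individual spaces. Second, in the $Z$-version the object you want is the \emph{image} of $\langle \partial_{\cal M}(P')\rangle_m$ under the ``coefficient of $Z^i$'' projection, not a subspace of it --- the spanning vectors $\mult[\prod_{j\in S}X_j\cdot\partial_\gamma P']$ are not $Z$-homogeneous, so the $Z^i$-homogeneous \emph{subspace} is generally strictly smaller than $\langle \partial_{\cal M}(\h^i[P])\rangle_m$. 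Once you replace ``subspace'' by ``image under a linear projection'' and verify that $\dim\langle \partial_{\cal M}(P')\rangle_m = \dim\langle \partial_{\cal M}(P)\rangle_m$ (the generator-to-generator correspondence respects linear relations precisely because linear relations among the $g_\alpha$ hold degree-by-degree), your argument closes; but this is exactly the bookkeeping the intrinsic phrasing avoids.
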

\begin{proof}
Since $\cal M$ is a subset of monomials of degree equal to $r$, all the partials derivatives are shifted by monomials of degree equal to $m$ and the operation $\mult[]$ either sets a monomial to zero or leaves it unchanged, it follows that the span of projected shifted partial derivatives of $\h^i[P]$ coincides with the span of the homogeneous components of degree-$(i-r)m$ in the space of span of projected shifted partial derivatives of $P$ itself.
The lemma then follows from the fact that dimension of a linear space of polynomials is at least as large as the dimension of the space obtained by restricting all polynomials to some fixed homogeneous component.
\end{proof}
In the next lemma, we prove an upper bound on the polynomials which are obtained by a composition of low arity polynomials with polynomials of small support. Gupta et al.~\cite{GKKS12} first proved such a bound for homogeneous depth-$4$ circuit with bounded bottom fan-in.

\begin{lemma}~\label{lem:proj shifted partials upper bound for functions of polynomials of low support}
Let $s$ be a parameter and  $Q_1, Q_2, \ldots, Q_t$ be  polynomials in $\F[\overline{X}]$ such that for every $i \in [t]$,  the support of every monomial in $Q_i$ is of size at most $s$. 
Then, for every polynomial $F$ in $t$ variables, every choice of parameters $r, m$ such that $m + rs \leq N/2$, and every set $\cal M$ of monomials of degree equal to $r$, 
$$\Phi_{{\cal M}, m}(F(Q_1, Q_2, \ldots, Q_t)) \leq N\cdot \binom{t + r}{r}\cdot \binom{N}{m + rs}\,.$$ 
\end{lemma}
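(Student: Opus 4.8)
The plan is to bound the dimension of the projected shifted partial derivative space of $F(Q_1,\dots,Q_t)$ by controlling two things separately: the number of distinct partial derivatives (up to linear combinations) and the support size of every such derivative, and then apply a standard counting bound on shifted multilinear monomials. First I would invoke the chain rule: any first-order partial derivative $\partial_{X_j} F(Q_1,\dots,Q_t)$ equals $\sum_{i=1}^t (\partial_i F)(Q_1,\dots,Q_t)\cdot \partial_{X_j}Q_i$, and iterating this for a derivative with respect to a monomial $\gamma$ of degree $r$ shows that $\partial_\gamma\big(F(Q_1,\dots,Q_t)\big)$ lies in the $\F$-span of terms of the form $\big(\partial^{(e)}F\big)(Q_1,\dots,Q_t)\cdot M$, where $\partial^{(e)}F$ ranges over iterated partial derivatives of $F$ of order at most $r$ (there are at most $\binom{t+r}{r}$ such "monomial-order" derivative patterns $e$), and $M$ is a product of at most $r$ partial derivatives of the $Q_i$'s — hence $M$ is a polynomial each of whose monomials has support at most $rs$ (each $\partial_{X_j}Q_i$ has monomials of support $\le s$, and a product of $r$ of them has support $\le rs$).

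Next I would observe that the factor $\big(\partial^{(e)}F\big)(Q_1,\dots,Q_t)$ can itself be messy, but we only care about its contribution modulo multiplication by $M$ and then shifting and multilinear projection. The key point is that $\partial_\gamma(F(Q_1,\dots,Q_t))$ lies in the span of at most $\binom{t+r}{r}$ polynomials, each of which is a product $G_e \cdot M_e$ where $M_e$ has all monomials of support $\le rs$. Actually the cleanest phrasing: the whole space $\partial_{\cal M}(F(Q_1,\dots,Q_t))$ is contained in the $\F$-span of $\{ G_e \cdot M : e \text{ an order-}\le r\text{ pattern}, M \text{ a monomial-of-}Q\text{-derivatives product of support} \le rs\}$. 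When we shift by a monomial $\prod_{i\in S}X_i$ with $|S|=m$ and then apply $\mult[\cdot]$, each resulting polynomial is supported on multilinear monomials; and crucially $\mult$ of a product with $G_e$ still only introduces the variables occurring in $G_e$'s monomials — but here is where I'd want to be careful, since $G_e$ is an arbitrary polynomial and need not have bounded support.

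So the main obstacle, and where I expect to spend the real effort, is handling the "outer" polynomial $G_e = (\partial^{(e)}F)(Q_1,\dots,Q_t)$, which has no support control. The resolution should be the observation used in the homogeneous depth-4 literature (Gupta et al.): after shifting by a degree-$m$ monomial and projecting to multilinear monomials, a product $G_e \cdot M \cdot \prod_{i\in S}X_i$ contributes a multilinear monomial only if that monomial is divisible by a multilinear monomial coming from $M\cdot\prod_{i\in S}X_i$ of degree $\le m + rs$ — i.e., every multilinear monomial in the output, when we strip off the part coming from $G_e$, leaves a multilinear "seed" of degree exactly the degree of (the multilinear reduction of) $M\cdot\prod X_i$, which is at most $m+rs$. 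More precisely, I would argue that $\mult[\langle \partial_{\cal M}(F)\rangle_m]$ is contained in the span of $\{x^\beta \cdot (\text{monomial of degree} \le m+rs) : \ldots\}$ — the counting then gives: the number of choices for the order-$\le r$ derivative pattern is $\binom{t+r}{r}$, the number of multilinear monomials of degree $\le m+rs$ is $\le \binom{N}{m+rs}$ (the $\binom{N}{\le m+rs} \le N\binom{N}{m+rs}$ when $m+rs \le N/2$, which is exactly the hypothesis), and this yields the claimed bound $N\cdot\binom{t+r}{r}\cdot\binom{N}{m+rs}$. The delicate step is making the reduction "every multilinear monomial in the shifted-projected space is determined by a pattern $e$ and a multilinear monomial of degree $\le m+rs$" fully rigorous; I would do this by expanding $G_e$ in monomials, noting each monomial of $G_e\cdot M\cdot\prod_{i\in S}X_i$ that survives $\mult$ factors as (squarefree part of $G_e$-monomial)$\cdot$(squarefree monomial of degree $\le m+rs$), and the $G_e$ part contributes nothing new to the dimension count beyond fixing $e$, since for fixed $e$ the polynomial $G_e$ is fixed and we are just taking its product with the bounded-degree multilinear seeds — so the span for fixed $e$ has dimension at most the number of such seeds, $\binom{N}{\le m+rs}$.
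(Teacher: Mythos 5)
Your proposal follows essentially the same route as the paper: iterate the chain rule to express each order-$r$ derivative of $F(Q_1,\ldots,Q_t)$ as a linear combination of terms $\bigl(\partial F/\partial\beta_0\bigr)\big|_{Y_i=Q_i}\cdot M$ with $\beta_0$ a degree-$\le r$ monomial in $t$ variables and $M$ a product of at most $r$ derivatives of the $Q_i$'s, observe that every monomial of $M$ has support at most $rs$, and then note that after shifting by a degree-$m$ multilinear monomial and projecting via $\mult$ the result lies in the span of $\bigl\{\mult\bigl[(\partial F/\partial\beta_0)|_{Y_i=Q_i}\cdot\alpha\bigr]\bigr\}$ over multilinear $\alpha$ of degree $\le m+rs$, giving the product bound $\binom{t+r}{r}\cdot N\binom{N}{m+rs}$. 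The "delicate step" you flag is exactly the one the paper handles in the same way, by expanding $M\cdot\prod_{i\in S}X_i$ into monomials (each of support $\le m+rs$) and absorbing them into the seed $\alpha$; this is sound.
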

\begin{proof}
By the chain rule for partial derivatives, every derivative of order $r$ of $F(Q_1, Q_2, \ldots, Q_t)$ can be written as a linear combination of products of the form 
$$\left(\frac{\partial F(Y_1, Y_2, \ldots, Y_t)}{\partial \beta_0}|_{Y_i = Q_i}\right)\cdot \prod_{1 \leq j \leq r'}\frac{\partial P_j}{\partial \beta_j} $$
where
\begin{enumerate}
\item $r'$ is at most $r$,
\item $\beta_0$ is a monomial in  variables $Y_1, Y_2, \ldots, Y_t$ of degree at most $r$,
\item for every $1 \leq j \leq r$, the polynomial $P_j$ is an element of $\{Q_1, Q_2, \ldots, Q_t\}$, and
\item for every $1 \leq j \leq r$, $\beta_j$ is a monomial in variables $X_1, X_2, \ldots, X_N$.
\end{enumerate}
Since every monomial in each $Q_i$ is of support at most $s$, every monomial in each of the products
\[
  \prod_{1 \leq j \leq r}\frac{\partial P_j}{\partial \beta_j}
\]
is of support at most $rs$. Therefore, for shifts of degree- $m$, the projected shifted partial derivatives of $F(Q_1, Q_2, \ldots, Q_t)$ (with respect to monomials in $\cal M$ which are of degree-$r$) are in the linear span of polynomials of the form
$$\mult\left[\left(\frac{\partial F(Y_1, Y_2, \ldots, Y_t)}{\partial \beta_0}|_{Y_i = Q_i}\right)\cdot \alpha\right] $$
where $\alpha$ is a multilinear monomial\footnote{If $\alpha$ is not multilinear, the term is set to zero.} of degree at most $m + rs$. 
Therefore, the dimension of this space is upper bounded by the number of possible choices of $\beta_0$ and $\alpha$. Hence
\[
  \Phi_{{\cal M}, m}(F(Q_1, Q_2, \ldots, Q_t)) \leq N\cdot \binom{t + r}{r}\cdot \binom{N}{m + rs}\,.\qedhere
\]
\end{proof}

\subsection{Target polynomials for the lower bound}\label{sec:hard poly}
In this section, we define the family of polynomials for which we
prove  %
our lower bounds. The family is a variant of the Nisan-Wigderson polynomials which were introduced by Kayal et al.\ in~\cite{KSS13}, and subsequently used in many other results~\cite{KS-full, KayalSaha14, KS15-lowarity}. We start with the following definition. 

\begin{definition}[Nisan-Wigderson polynomial families]~\label{def:NW final}
Let $n,q,e$ be arbitrary parameters with $q$ being a power of a prime, and $n,e\leq q$. We identify the set $[q]$ with the field $\F_{q}$ of $q$ elements. 
Observe that  since $n \leq q$, we have that $[n] \subseteq \F_q$. The Nisan-Wigderson polynomial with parameters $n,q,e$, denoted by $\NW_{n,q,e}$ is defined as
\[
\NW_{n,q,e}(\overline{X}) = \sum_{\substack{p(t) \in \F_q[t]\\ \deg(p) < e}} X_{1,p(1)}\dots X_{n,p(n)}\,.
\]
\end{definition}
The number of variables in $\NW_{n,q,e}$ as defined above is $N = q\cdot n$. %
The lower bounds in this paper will be proved for the polynomial $\gH$ which is a variant of the polynomial $\NW_{n,q,e}$ defined as follows. 
\begin{definition}[Hard polynomials for the lower bound]
Let $\delta \in (0,1)$ be an arbitrary constant, and let $p = N^{-\delta}$.  Let $$\gamma = \frac{N}{p} = N^{1+\delta}\,.$$
The polynomial $\gG$ is defined as 
\[
\gG = \NW_{q, n, e}\left(\sum_{i = 1}^{\gamma}X_{1,1, i}, \sum_{i = 1}^{\gamma}X_{1,2,i}, \ldots, \sum_{i = 1}^{\gamma}X_{n,q,i} \right)\,.
\]
\end{definition}

For brevity, we will denote $\gG$ by $\gH$ for the rest of the discussion. 
The advantage of using this trick\footnote{This idea came up during discussions with Ramprasad Saptharishi.} of composing with linear forms is that it becomes cleaner to show that the polynomial $\gH$ is robust under random restrictions where every variable is kept alive with a probability $p$. Since $\delta$ is an absolute constant, the number of variables in $\gH$ is at most $N^{O(1)}$. We now formally define our notion of random restrictions. 

Let $\cal V$ be the set of variables in the polynomial $\gH$. We now define a distribution ${\cal D}_p$ over the subsets of ${\cal V}$. 

\paragraph{The distribution ${\cal D}_p$:} Each variable in $\cal V$ is independently kept alive with a probability $p = N^{-\delta}$. 

The random restriction procedure  samples a $V \gets \cal D$ and then keeps only the variables in $V$ alive. The remaining variables are set to $0$. We denote the restriction of the polynomial obtained by such a restriction as $\gH|_V$. Observe that a random restriction also results in a distribution over the restrictions of a circuit computing the polynomial $\gH$. We denote by $C|_V$  the restriction of a circuit $C$ obtained by setting every input gate in $C$ which is labeled by a variable outside $V$ to $0$. 

We now show that with a high probability over restrictions sampled according to ${\cal D}_p$, the projected shifted partial derivative complexity of $\gH$ remains high. We need the following lower bound on the dimension of projected shifted partial derivatives of $\NW_{n, q, e}$.  
\begin{lemma}[\cite{KS-full, KS15-depth5}]\label{lem:KS-tight-bound}
For every $n$ and $r = O(\sqrt{n})$ there exists parameters $q,e,  \epsilon$ such that $q = \Omega(n^2)$, $N = qn$ and $\epsilon = \Theta({\log(n)}/{\sqrt{n}})$ with
\begin{align*}
  q^r & \geq  (1+\epsilon)^{2(n-r)},\\
  q^{e-r} & =   \left(\frac{2}{1+\epsilon}\right)^{n-r} \cdot \poly(q)\,. 
\end{align*}
For any $\{n,q,e,r,\epsilon \}$ satisfying the above constraints, and for $m = ({N}/2)(1 - \epsilon)$,  over any field $\F$, we have
\[
\Phi(\NW_{n,q,e}) \geq \binom{N}{m + n - r} \cdot \exp(-O(\log^2 n))\,.
\]
\end{lemma}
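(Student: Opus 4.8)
The statement asserts that, for the stated choice of $q,e,r,\epsilon,m$, the projected shifted partial derivative measure of $\NW_{n,q,e}$ (with derivative order $r$ and shift degree $m$) is within an $\exp(-O(\log^2 n))$ factor of the trivial upper bound $\binom{N}{m+n-r}$, the number of multilinear monomials of degree $m+n-r$ in $N$ variables. The plan is to produce an explicit large family of elements of $\langle \partial_{\cal M}(\NW_{n,q,e})\rangle_m$ and bound from below the dimension they span. Concretely, I take ${\cal M}$ to be the set of degree-$r$ monomials $\gamma = \prod_{i\in T}X_{i,p(i)}$ indexed by an $r$-subset $T\subseteq[n]$ and the values on $T$ of a univariate of degree $<e$; the chosen parameters force $r<e$, so each such $\gamma$ divides some monomial of $\NW_{n,q,e}$ and
\[
\frac{\partial \NW_{n,q,e}}{\partial \gamma} \;=\; \sum_{\substack{p'\,:\,\deg p'<e,\ p'|_T = \gamma|_T}}\ \prod_{i\in[n]\setminus T}X_{i,p'(i)}\,,
\]
a set-multilinear polynomial on the $n-r$ blocks outside $T$ with exactly $q^{e-r}$ monomials, any two of which share at most $e-1-r$ variables (two distinct polynomials of degree $<e$ agree on at most $e-1$ points of $\F_q$). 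Multiplying by a multilinear shift $\prod_{i\in S}X_i$ with $|S|=m$ and applying $\mult[\cdot]$ produces an element of $\langle\partial_{\cal M}(\NW_{n,q,e})\rangle_m$.

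The core is a rank lower bound. Form the incidence matrix $M$ with rows indexed by pairs $(\gamma,S)$ as above and columns by multilinear monomials $\nu$ of degree $m+n-r$, where $M[(\gamma,S),\nu]$ is the coefficient of $\nu$ in $\mult\left[\prod_{i\in S}X_i\cdot\partial_\gamma\NW_{n,q,e}\right]$ (a $0/1$ entry by set-multilinearity). Then $\Phi(\NW_{n,q,e})\ge \operatorname{rank}(M)$, and I would lower bound $\operatorname{rank}(M)$ by the Cauchy--Schwarz-type inequality $\operatorname{rank}(M)\ge \bigl(\sum_{(\gamma,S)}\|M_{(\gamma,S)}\|^2\bigr)^2\big/\sum_{(\gamma,S),(\gamma',S')}\langle M_{(\gamma,S)},M_{(\gamma',S')}\rangle^2$. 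The numerator is the square of the number of incidences, which a direct count shows equals $\binom{n}{r}q^{e}\cdot\binom{N-(n-r)}{m}$ (number of derivatives $\binom{n}{r}q^{r}$, times monomials per derivative $q^{e-r}$, times shifts $S$ avoiding a fixed $(n-r)$-element support). The denominator counts ``collisions'': configurations of $(\gamma,S)$ and $(\gamma',S')$ sharing monomials after shifting; such a collision forces the underlying $\NW$-monomials to agree on many blocks, hence the associated degree-$<e$ univariates to agree on many points of $\F_q$, which by the elementary $\le e-1$ root bound is rare --- both within a single derivative (bounded intersection $\le e-1-r$) and across distinct derivatives. The remaining step is numerical: substitute the two constraints $q^r\ge(1+\epsilon)^{2(n-r)}$ and $q^{e-r}=(2/(1+\epsilon))^{n-r}\cdot\poly(q)$, use $m=(N/2)(1-\epsilon)$, and estimate all ratios of binomial coefficients via \expref{Lemma}{lem:approx-new} with parameters of size $O(\sqrt n)$ and $\epsilon=\Theta(\log n/\sqrt n)$; the constraints are rigged precisely so that these substitutions leave only a multiplicative $\exp(-O(\log^2 n))$ loss, yielding $\binom{N}{m+n-r}\cdot\exp(-O(\log^2 n))$.

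The main obstacle is the denominator estimate: one must show $\sum\langle M_{(\gamma,S)},M_{(\gamma',S')}\rangle^2$ exceeds $(\text{incidences})^2/\binom{N}{m+n-r}$ by at most an $\exp(O(\log^2 n))$ factor. This requires carefully separating the near-diagonal contributions (same $T$ with $p,p'$ agreeing on many points, or the $q^{e-r}$ monomials inside one derivative, whose shifted spans are nearly disjoint once $m\approx N/2$) from genuinely far pairs, and bounding each regime by the corresponding count of large agreement sets of low-degree univariates --- keeping every error term inside the global $\exp(O(\log^2 n))$ budget is exactly why $q=\Omega(n^2)$, $r=O(\sqrt n)$, $\epsilon=\Theta(\log n/\sqrt n)$ are chosen as they are. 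Since $M$ is a $0/1$ matrix and the argument is purely combinatorial (counting incidences and collisions, not computing rank over a particular field), the resulting bound holds over every field $\F$, as claimed.
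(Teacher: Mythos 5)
The paper does not prove this lemma at all: it is imported as a black box from \cite{KS-full, KS15-depth5}, with the citation in the lemma header serving as the ``proof.'' So there is no internal argument to compare yours against, and the useful question is whether your reconstruction is sound on its own terms.

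As a sketch, your plan tracks the right combinatorial skeleton (derivatives along prefixes of $\NW$ monomials, the $q^{e-r}$ set-multilinear tails, collision control from the fact that two distinct degree-$<e$ univariates agree in $<e$ points, and the binomial estimates via \expref{Lemma}{lem:approx-new}). But two things would have to change before it could be called a proof. First, your final sentence is wrong: the second-moment rank inequality
$\operatorname{rank}(M)\ge \bigl(\sum_i\langle M_i,M_i\rangle\bigr)^2 / \sum_{i,i'}\langle M_i,M_{i'}\rangle^2$
is a statement about rank over $\mathbb{R}$ --- it is proved by Cauchy--Schwarz on singular values and uses the Euclidean inner product --- and the real rank of a $0/1$ matrix can strictly exceed its rank over a field of positive characteristic (already $J_3-I_3$ has rank $3$ over $\mathbb{Q}$ but rank $2$ over $\mathbb{F}_2$). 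Since the lemma is claimed ``over any field $\F$,'' you cannot obtain it this way; the cited sources get field independence by a different accounting --- they exhibit a large set of target multilinear monomials each occurring in exactly one projected shifted partial in the family, which gives an upper-triangular witness of linear independence valid over every $\F$. The second-moment count of collisions is used there to lower bound the size of that ``uniquely covered'' set, not directly as a real-rank bound. Second, the paragraph you label ``the main obstacle'' is exactly that --- the near-diagonal versus far-pair case analysis of $\sum\langle M_{(\gamma,S)},M_{(\gamma',S')}\rangle^2$ is the technical heart of the argument in \cite{KS-full, KS15-depth5}, and you defer it entirely. Until that is carried out (and the rank bound replaced by the field-independent one), the proposal is a plausible blueprint rather than a proof.
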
 
We will instantiate the lemma above with the following choice of
parameters.   %
\begin{itemize}
\item $\epsilon = \frac{4\log n}{\sqrt{n}}$,
\item $r = \sqrt{n}$,
\item $q = n^{10}$.
\item  %
  We will set the parameter $s$ to be equal to $\frac{\sqrt{n}}{100}$.
\end{itemize}
It is straightforward to check that for the above choice of parameters, there is a choice of $e$ such that 
\begin{align*}
  q^r & \geq  (1+\epsilon)^{2(n-r)}\,,\\
  q^{e-r} & =   \left(\frac{2}{1+\epsilon}\right)^{n-r} \cdot \poly(q)\,. 
\end{align*}
Therefore,  for $m = ({N}/{2})(1 - \epsilon)$,  over any field $\F$, we have
\[
\Phi(\NW_{n,q,e}) \geq \binom{N}{m + n - r} \cdot \exp(-O(\log^2 n))\,.
\]
We are now ready to prove our main lemma for this section. 
\begin{lemma}\label{lem: robustness under random restrictions}
 With a probability at least $1-o(1)$ over $V \leftarrow {\cal D}_p$, there exists   a subset of variables $V' \subseteq V$ such that $|V'| = N$ and  
\[
\Phi(\gH|_{V'}) \geq \binom{N}{m + n - r} \cdot \exp(-O(\log^2 n))\,.
\]
\end{lemma}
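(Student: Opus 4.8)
The plan is to show that random restriction from $\mathcal{D}_p$ transforms $\gH$ into (a polynomial closely related to) a copy of $\NW_{n,q,e}$ on a surviving subset of variables, and then invoke the established bound on $\Phi(\NW_{n,q,e})$. First I would recall that $\gH = \NW_{q,n,e}(\ldots, \sum_{i=1}^{\gamma} X_{j,\ell,i}, \ldots)$, where each of the $qn$ ``logical'' variables of the underlying $\NW$ polynomial has been replaced by a sum of $\gamma = N^{1+\delta}$ fresh variables. The point of this gadget is robustness: when each variable is independently kept alive with probability $p = N^{-\delta}$, each block of $\gamma$ variables sees, in expectation, $\gamma p = N$ survivors, and with very high probability at least one survivor. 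So the first step is a concentration/union-bound argument: with probability $1 - o(1)$ over $V \leftarrow \mathcal{D}_p$, \emph{every} one of the $qn$ blocks contains at least one surviving variable. (Since there are only $qn = \mathrm{poly}(N)$ blocks and each fails to have a survivor with probability $(1-p)^\gamma = (1-N^{-\delta})^{N^{1+\delta}} \leq \exp(-N)$, the union bound gives failure probability at most $qn\cdot\exp(-N) = o(1)$.)

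Conditioned on that event, I would pick for each block $(j,\ell)$ one distinguished surviving variable, call it $Y_{j,\ell}$, and let $V'$ be obtained from $V$ by additionally zeroing out all the non-distinguished survivors; then $|V'| = qn = N$ and, under the substitution sending each remaining live variable to its distinguished representative, $\gH|_{V'}$ becomes exactly $\NW_{n,q,e}(Y_{1,1}, Y_{1,2}, \ldots, Y_{n,q})$, i.e., a relabeled copy of $\NW_{n,q,e}$ on $N$ variables. The second step is then to transfer the lower bound: since $\Phi_{\mathcal{M},m}$ is defined via dimension of a span of projected shifted partials, and relabeling variables (a bijective linear substitution on variables) preserves this dimension for the corresponding relabeled choice of shift-set $\mathcal{M}$ and order parameters, we get $\Phi(\gH|_{V'}) = \Phi(\NW_{n,q,e})$ for the appropriate instantiation. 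Combining with the already-established bound $\Phi(\NW_{n,q,e}) \geq \binom{N}{m+n-r}\cdot\exp(-O(\log^2 n))$ (valid for the chosen $q = n^{10}$, $r = \sqrt{n}$, $\epsilon = 4\log n/\sqrt{n}$, $m = (N/2)(1-\epsilon)$) finishes the argument. Here $N = qn = n^{11}$, so $\gamma = N^{1+\delta}$ and the concentration bound above indeed applies with this value of $N$.

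One subtlety worth flagging: I must be careful that the parameter $N$ appearing in the $\NW$ bound (Lemma~\ref{lem:KS-tight-bound}, where $N = qn$) matches the $N$ that parametrizes $\mathcal{D}_p$ and the target size $|V'| = N$; the construction of $\gH$ is arranged precisely so that the number of logical $\NW$-variables equals this $N$, so after restriction and collapsing each block to one representative we land on exactly $N$ variables. A second point is that collapsing the extra survivors to $0$ (rather than keeping them) is legitimate because setting more variables to zero only moves us to a further restriction, and we are free to choose which sub-restriction of $V$ to analyze — the lemma only asserts \emph{existence} of a good $V' \subseteq V$.

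**Main obstacle.** The genuinely substantive content is not the probabilistic part (that is a routine Chernoff/union bound exploiting $\gamma p = N \gg \log N$), but making the ``relabeling preserves $\Phi$'' step airtight: one has to check that under the variable substitution identifying $\gH|_{V'}$ with a fresh copy of $\NW_{n,q,e}$, the multilinear-projection operator $\mult[\cdot]$ and the shift-by-monomials-of-degree-$m$ operation commute appropriately with the relabeling, so that the dimension count is literally unchanged and Lemma~\ref{lem:KS-tight-bound} can be quoted verbatim. This is standard in the shifted-partials literature but is the place where care is needed; everything else is bookkeeping.
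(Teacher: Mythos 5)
Your proof is correct and follows essentially the same route as the paper: a union bound over the $qn = N$ blocks to show that each retains at least one live variable with probability $1-o(1)$, then further zeroing out all but one survivor per block to recover $\NW_{n,q,e}$ up to relabeling, then invoking \expref{Lemma}{lem:KS-tight-bound}. The paper treats the "relabeling preserves $\Phi$" step as immediate (as is standard for variable renamings), so there is nothing extra to add beyond what you already flagged.
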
 
\begin{proof}
To prove the lemma, we first show that with a high probability over the random restrictions, the polynomial $P|_{V}$ has the polynomial $\NW_{n, q, e}$ as a projection by setting some variables to zero. Combining this with \expref{Lemma}{lem:KS-tight-bound} would complete the proof. We now fill in the details. 

Let $i \in [N]$. Then, the probability that all the variables in the set $A_{i,j} = \{X_{i, j, \ell} : \ell \in [\gamma]\}$ are set to zero by the random restrictions is equal to $(1-p)^{\gamma} \leq \exp(-\Theta(N))$. Therefore, the probability that there exists an $i \in [n], j \in [q]$ such that all the variables in the set $A_{i,j}$ are set to zero by the random restrictions, is at most $N\cdot \exp(-\Theta(N)) = o(1)$. We now argue that if this event does not happen (which is the case with probability at least $1-o(1)$), then the dimension of the projected shifted partial derivatives is large. 

For every $i, j$, let $A_{i,j}'$ be the subset of $A_{i,j}$ which has not been set to zero. We know that for every $i, j$, $A_{i,j}'$ is non-empty. Now, for every $i, j$, we set all the elements of  $A_{i,j}'$ to zero except one. Observe that the polynomial obtained from $\gH$ after this restriction is exactly the polynomial $NW_{n, q, e}$
up to  %
a relabeling of variables. 
Now, from \expref{Lemma}{lem:KS-tight-bound}, our claim follows. 
\end{proof}

\subsection{Proof of \expref{Theorem}{thm:lower bound}}~\label{sec: thm lower bounds}

To
prove  %
our lower bound, we show that under a random restriction from the distribution ${\cal D}_p$, the dimension of the linear span of  projected shifted partial derivatives of any $\qspnewn$ circuit $C$ is small with a high probability if the size of the $C$ is \emph{not too large}. Comparing this with the lower bound on the dimension of projected shifted partials of the polynomial $\gH$ under random restrictions from \expref{Lemma}{lem: robustness under random restrictions}, the lower bound follows. We now proceed along this outline and prove the following lemma.
\begin{lemma}[Upper bound on complexity of circuits]~\label{lem: circuit complexity bound}
Let $m, r, s$ be parameters such that $m + rs \leq N/2$. Let ${\cal M}$ be any set of multilinear monomials of degree-$r$. Let $C $ be an arithmetic circuit computing a homogeneous polynomial of degree-$n$ such that 
$$C = \sum_{i = 1}^T  C_{i}(Q_{i1}, Q_{i2}, \ldots, Q_{it})$$  where
\begin{itemize}
\item for each $i \in[T]$, $C_i$ is a polynomial in $t$ variables, and
\item for each $i \in [T]$, the algebraic rank of the set $\{Q_{ij} : j \in [t]\}$ of polynomials is at most $k$. 
\end{itemize}
For each $i \in [T]$ and $j \in [t]$, let $S_{ij}$ be the set of monomials with nonzero coefficients in $Q_{ij}$. If 
$$\left|\bigcup_{i \in [T], j \in [t]} S_{ij} \right| \leq N^{\frac{\delta s}{2}} $$
then, with a probability at least $1-o(1)$ over ${V \leftarrow {\cal D}_p}$\footnote{This is the distribution defined in \expref{Section}{sec:hard poly}, where every variable is kept alive with a probability $N^{-\delta}$ for a constant $\delta \in (0,1)$.} for all subsets $V'$ of $V$ of size at most $N$
$$\Phi(C|_{V'}) \leq TN\binom{k(n+1) + r}{r}\binom{N}{m + rs}\,.$$
\end{lemma}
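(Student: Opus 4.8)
The plan is to combine the subadditivity of the measure $\Phi$ with the structural reduction provided by \expref{Lemma}{lem:expressing as functions of the basis} and the composition bound of \expref{Lemma}{lem:proj shifted partials upper bound for functions of polynomials of low support}. First I would note that since $C$ computes a homogeneous degree-$n$ polynomial, and the $\Phi$ measure only cares about the degree-$n$ part up to the choice of $\cal M$ and $m$, I may work with a fixed homogeneous target and use \expref{Lemma}{lem:measure of homogeneous components} to pass between $C$ and its homogeneous components. The key preparatory step is the random restriction: I would show that with probability $1-o(1)$ over $V \leftarrow {\cal D}_p$, \emph{every} monomial in \emph{every} $Q_{ij}|_V$ has support at most $s$. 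This is a union bound — a monomial of support $s'$ in some $Q_{ij}$ survives the restriction with probability $p^{s'} = N^{-\delta s'}$, and since the total number of distinct monomials across all the $Q_{ij}$ is at most $N^{\delta s/2}$, the expected number of surviving monomials of support $> s$ is at most $N^{\delta s/2} \cdot N^{-\delta(s+1)} = o(1)$, so by Markov all surviving monomials have support $\le s$ with probability $1-o(1)$.

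Conditioned on that event, I would apply \expref{Lemma}{lem:expressing as functions of the basis} to the restricted circuit $C|_V$ (a $\qspgeneral$ circuit with the bottom polynomials now of small support): there is a translation $\overline{a} \in \F^N$ such that
\[
C|_V(\overline{X} + \overline{a}) = \sum_{i=1}^T F_i'\big(\h[Q_{i1}|_V(\overline{X}+\overline{a})], \ldots, \h[Q_{ik}|_V(\overline{X}+\overline{a})]\big)\,,
\]
where each $F_i'$ has arity at most $k(d+1) \le k(n+1)$ (using $d \le n$, since homogeneous pieces of degree $> n$ can be discarded from a degree-$n$ computation). The subtle point is that translation by $\overline{a}$ can destroy the small-support property of the $Q_{ij}$; the fix is the standard observation that $\Phi$ is translation-invariant on the relevant homogeneous slice, OR — more in the spirit of the paper — that the homogeneous components $\h^\ell[Q_{ij}|_V(\overline{X}+\overline{a})]$ are still obtained from $Q_{ij}|_V$ by substitution and interpolation, hence each is an $\F$-linear combination of the $Q_{ij}|_V$ evaluated at scaled points, so each still has all monomials of support $\le s$. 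Thus $C|_V(\overline{X}+\overline{a})$ is a sum of $T$ terms, each a polynomial function of $k(n+1)$ polynomials of support $\le s$.

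Now I would invoke \expref{Lemma}{lem:subadditive} to get $\Phi(C|_V(\overline{X}+\overline{a})) \le \sum_{i=1}^T \Phi(F_i'(\cdots))$, and bound each summand by \expref{Lemma}{lem:proj shifted partials upper bound for functions of polynomials of low support} with $t$ replaced by $k(n+1)$: each term contributes at most $N\binom{k(n+1)+r}{r}\binom{N}{m+rs}$, giving the claimed bound $TN\binom{k(n+1)+r}{r}\binom{N}{m+rs}$ — the hypothesis $m+rs \le N/2$ is exactly what that lemma needs. Finally, since $\Phi$ is invariant under the invertible translation $\overline{X} \mapsto \overline{X}+\overline{a}$ (derivatives, shifts, and multilinear projection all behave well; more carefully, one argues the dimension of the shifted-partial span is preserved since translation is a linear automorphism of the polynomial ring), the same bound holds for $\Phi(C|_V)$, and since restricting further to any $V' \subseteq V$ only sets more variables to zero — which cannot increase $\Phi$ — the bound holds for all $V' \subseteq V$ with $|V'| \le N$. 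I expect the main obstacle to be handling the translation cleanly: one must argue either that $\Phi$ is translation-invariant on the homogeneous degree-$n$ component, or that the small-support structure survives the interpolation/translation process, and reconcile this with the fact that we ultimately need a bound on $\Phi(C|_{V'})$ for the \emph{untranslated} restricted circuit.
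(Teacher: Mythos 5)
Your overall architecture matches the paper's: random restriction to kill high-support monomials (union bound $\Rightarrow$ small-support bottoms with probability $1-o(1)$), then \expref{Lemma}{lem:expressing as functions of the basis} to trade the arity-$t$ inner polynomials for arity-$k(n+1)$ ones built on homogeneous components of the transcendence basis, then \expref{Lemma}{lem:subadditive} plus \expref{Lemma}{lem:proj shifted partials upper bound for functions of polynomials of low support} to get the quantitative bound. The degree cap $d\le n$ by truncating components above degree $n$, and the inheritance of small support by any $V'\subseteq V$, also agree with the paper's proof.

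The gap is the final step, where you pass from a bound on $\Phi\big(C|_V(\overline{X}+\overline{a})\big)$ back to $\Phi(C|_V)$ by asserting that ``$\Phi$ is invariant under the invertible translation $\overline{X}\mapsto\overline{X}+\overline{a}$.'' This is false: the projected shifted partials operator involves the multilinearization $\mult[\cdot]$, which does not commute with translation. For a concrete counterexample take $P=X_1^2$ with $r=0$ and $m=0$; then $\mult[P]=0$ so $\Phi(P)=0$, while $\mult[P(X_1+1)]=\mult[X_1^2+2X_1+1]=2X_1+1\ne 0$, so the translated measure is $1$. Neither derivatives nor the degree-$m$ shifts nor the multilinear projection ``behave well'' under translation, and the span dimension genuinely changes. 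The paper does not rely on any translation invariance of $\Phi$. Instead, it uses the homogeneity you dismissed at the outset: because $C|_V$ computes a homogeneous degree-$n$ polynomial, one has identically $C|_V(\overline{X})=\h^{n}\big[C|_V(\overline{X}+\overline{a})\big]$, and then \expref{Lemma}{lem:measure of homogeneous components} — a lemma you even cite earlier for a different purpose — gives $\Phi(C|_V)\le \Phi\big(C|_V(\overline{X}+\overline{a})\big)$. Swapping your translation-invariance claim for this two-line argument closes the hole.

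A smaller point: your handling of why the small-support property survives the translation is more elaborate than needed. The observation the paper uses is elementary: substituting $X_j\mapsto X_j+a_j$ into a monomial with variable set $S$ produces only monomials with variable set $\subseteq S$, so the maximum support of any monomial in $Q'_{ij}(\overline{X}+\overline{a})$ is at most that of $Q'_{ij}(\overline{X})$. You don't need the interpolation/``evaluations at scaled points'' detour, and the proposed ``translation invariance on a homogeneous slice'' is again not a fact you can safely lean on.
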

\begin{proof}
We prove the lemma by first using random restrictions to simplify the circuit into one with bounded bottom support, and then utilizing the tools tools developed in \expref{Section}{sec: alg dep} and \expref{Section}{sec:shifted partials prelims}  to conclude that the dimension of the space of projected shifted partial derivatives of the resulting circuit is small.

\paragraph{Step (1): Random restrictions.}
{}From the definition of random restrictions, every variable is kept alive independently with a probability $p = N^{-\delta}$. So, the probability that a monomial of support at least $s$ survives the restrictions is at most $N^{-\delta s}$. Therefore, by  linearity of expectations, the expected number of monomials of support at least $s$ in $\bigcup_{i \in [T], j \in [t]} S_{ij} $ which survive the random restrictions is at most 
$$\left|\bigcup_{i \in [T], j \in [t]} S_{ij} \right|\cdot N^{-\delta s} \leq N^{-\frac{\delta s}{2}}\,.$$
So, by Markov's inequality, the probability that at least one monomial of support at least $s$ in $\bigcup_{i \in [T], j \in [t]} S_{ij} $ survives the random restrictions is $o(1)$. Let $V'$ be any subset of the surviving set of variables of size $N$. For the rest of the proof, we assume that all the variables outside the set $V'$ are set to zero.
Restrictions which set all  monomials of support at least $s$ in $\bigcup_{i \in [T], j \in [t]} S_{ij}$ to zero are said to be good.  
\paragraph{Step (2): Using low algebraic rank.}
In this step, we assume that we are given a good restriction $C'$ of the circuit $C$. Let 
$$C' = \sum_{i = 1}^T  C_{i}'(Q_{i1}', Q_{i2}', \ldots, Q_{it}') $$
where for every $i \in [T], j \in [t]$, all monomials of $Q_{ij}'$ have support at most $s$. Observe that random restrictions cannot increase the algebraic rank of a set of polynomials. Therefore, for every $i \in [T]$, the algebraic rank of the set $\{Q'_{ij} : j \in [t]\}$ of polynomials is at most $k$. For ease of notation, let us assume that the algebraic rank is equal to $k$. Without loss of generality, let the set ${\cal B}_i = \{Q'_{i1}, Q'_{i2}, \ldots, Q'_{ik}\}$ be the set guaranteed by  \expref{Lemma}{lem:expressing as functions of the basis}. We know that there exists an $\overline{a}\in \F^{N}$ and polynomials $\{F_i ': i \in [T]\}$ such that 
\begin{equation}
 C'(\overline{X} + \overline{a}) = \sum_{i = 1}^T  F_{i}'(\h\left[ Q_{i1}'(\overline{X} + \overline{a})\right], \h\left[ Q_{i2}'(\overline{X} + \overline{a})\right], \ldots, \h\left[ Q_{ik}'(\overline{X} + \overline{a})\right])\,.
 \end{equation} 
Moreover, since $C(\overline{X})$ (and hence $C'(\overline{X})$) is a homogeneous polynomial of degree-$n$, the following is
true.  %
\begin{equation}\label{eqn: 1 in lower bound}
C'(\overline{X}) = \h^{n} \left[\sum_{i = 1}^T  F_{i}'(\h\left[ Q_{i1}'(\overline{X} + \overline{a})\right], \h\left[ Q_{i2}'(\overline{X} + \overline{a})\right], \ldots, \h\left[ Q_{ik}'(\overline{X} + \overline{a})\right])\right]\,.
\end{equation}
An important observation here is that for the rest of the argument, we can assume that the degree of every polynomial $Q_{ij}'(\overline{X} + \overline{a})$ is at most $n$. If not, we can simply replace any such high degree $Q_{ij}'(\overline{X} + \overline{a})$ by 
\[
  \h^{\leq n}\left[Q_{ij}'(\overline{X} + \overline{a})\right]\,.
\]
We claim that the equality~\ref{eqn: 1 in lower bound} continues to hold. 
This is because the higher degree monomials of $Q_{ij}$ do not participate in the computation of the lower degree monomials. The only monomials which could potentially change by this substitution are the ones with degree strictly larger than  $n$. 
\paragraph{Step (3): Upper bound on $\Phi_{{\cal M}, m}(C'(\overline{X}))$.}
Let $R$ be defined the polynomial 
\begin{equation}
R = \sum_{i = 1}^T  F_{i}'(\h\left[ Q_{i1}'(\overline{X} + \overline{a})\right], \h\left[ Q_{i2}'(\overline{X} + \overline{a})\right], \ldots, \h\left[ Q_{ik}'(\overline{X} + \overline{a})\right])\,.
\end{equation}
Note that if the support of every monomial in a polynomial $Q_{ij}'(\overline{X})$ is at most $s$, then for every translation $\overline{a} \in \F^N$ the support of every monomial in $Q_{ij}'(\overline{X} + \overline{a})$ is also at most $s$. {}From \expref{Lemma}{lem:proj shifted partials upper bound for functions of polynomials of low support} and from \expref{Lemma}{lem:subadditive}, it is easy to see that 

$$\Phi_{{\cal M}, m}(R) \leq TN \binom{k(n+1) + r}{r}\binom{N}{m + rs}\,.$$
{}From \expref{Lemma}{lem:measure of homogeneous components}, it follows that 
$$\Phi_{{\cal M}, m}(C'(\overline{X})) \leq \Phi_{{\cal M}, m}(R) \leq TN\binom{k(n+1) + r}{r}\binom{N}{m + rs}\,.$$
Observe that steps (2) and (3) of the proof are always successful if the restriction in step 1 is good, which happens with a probability at least $1-o(1)$. So, the lemma follows. 
\end{proof}

We now complete the proof of \expref{Theorem}{thm:lower bound}.  
\begin{proof}[Proof of \expref{Theorem}{thm:lower bound}]
If the size of the circuit $C$ is at least $N^{({\delta}/{2})\sqrt{n}}$, then we are done. Else, the size of $C$ is at most $N^{({\delta}/{2})\sqrt{n}}$.  This implies that the total number of monomials in all the polynomials $Q_{ij}$ together is at most $N^{({\delta}/{2})\sqrt{n}}$. 
{}From \expref{Lemma}{lem: circuit complexity bound} and \expref{Lemma}{lem: robustness under random restrictions}, it follows that  there exists a subset $V'$ of variables of size $N$ such that both the following inequalities are
true.  %
\begin{equation}
\Phi_{{\cal M}, m}(C|_{V'}) \leq TN\binom{k(n+1) + r}{r}\binom{N}{m + rs}
\end{equation}
and
\begin{equation}
\Phi_{{\cal M}, m}(\gH|_{V'}) \geq \binom{N}{m + n-r} \cdot \exp(-\log^2 n)\,. 
\end{equation}
Since $C$ computes $\gH$, it must be the case that 
$$T \geq \frac{\binom{N}{m + n-r} \cdot \exp(-\log^2 n)}{N\binom{k(n+1) + r}{r}\binom{N}{m + rs}}\,.$$
Plugging in the value of the parameters from \expref{Section}{sec:hard poly}, and  approximating using \expref{Lemma}{lem:approx-new}, we immediately get  
\[
\binom{N}{m + n -r} = \binom{N}{m} \cdot (1 + \epsilon)^{2(n-r)} \cdot \exp(O((n-r)\cdot \epsilon^2))
\]
and
\[
\binom{N}{m + rs} = \binom{N}{m} \cdot (1 + \epsilon)^{2rs} \cdot \exp(O(rs\cdot \epsilon^2))\,.
\]
Moreover, $\binom{k(n+1) + r}{r} \leq (enk)^r \leq \exp(2\sqrt{n}\cdot \log n)$. Taking the ratio and substituting the values of the parameters, we get 
\[
  T \geq \exp{(\Omega(\sqrt{n}\log N}))\,.\qedhere
\]
\end{proof}

\section{Application to polynomial identity testing}~\label{sec:PIT}
In this section we give an application of the ideas developed in \expref{Section}{sec: alg dep} to the question of polynomial identity testing and prove \expref{Theorem}{thm:PIT}. We start by  formally defining the notion of a hitting set. 
\paragraph{Hitting set.} Let ${\cal S}$ be a set of polynomials in $N$ variables over a field $\F$. Then, a set  ${\cal H} \subseteq \F^{N}$ is said to be a \emph{hitting set} for the class ${\cal S}$, if for every polynomial $P \in {\cal S}$ such that $P$ is not identically zero, there exists a $p \in {\cal H}$ such that $P(p) \neq 0$.  

For our PIT result, we show that any nonzero polynomial $P$ in the circuit class we consider, has a monomial of low support. A hitting set can then be constructed by the standard techniques using the Shpilka-Volkovich generator~\cite{SV09}.  

\begin{lemma}[Shpilka-Volkovich generator~\cite{SV09}\footnote{See Corollary 3.15 in~\cite{Forbes-personal}.}]~\label{lem:SV gen}
Let $\F$ be a field of characteristic zero. For every $\ell, d, N$, there exists a set ${\cal H} \subseteq \F^N$ of size at most $(O(Nd))^{\ell}$ such that for every nonzero polynomial $P$ of degree at most $d$ in $N$ variables which contains a monomial of support at most $\ell$, there exists an $h \in {\cal H}$ such that $P(h) \neq 0$. Moreover, the set ${\cal H}$ can be constructed in time  $\poly(N, d, \ell)\cdot (O(Nd))^{\ell}$. 
\end{lemma}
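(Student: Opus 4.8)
The plan is to exhibit the hitting set explicitly as the image of the Shpilka--Volkovich generator over a suitable grid, and then verify the hitting property by the standard ``restrict to the support, then interpolate'' argument. First I would fix $N$ distinct scalars $\alpha_1,\dots,\alpha_N\in\F$ (possible since a field of characteristic zero is infinite) together with the Lagrange polynomials $u_i(w)=\prod_{j\ne i}(w-\alpha_j)/(\alpha_i-\alpha_j)$, so that $u_i(\alpha_j)$ equals $1$ if $i=j$ and $0$ otherwise, and $\deg u_i=N-1$. Define the map $\mathsf{SV}_\ell:\F^{2\ell}\to\F^N$ by declaring its $i$-th coordinate to be $\mathsf{SV}_\ell(\bar y,\bar z)_i=\sum_{k=1}^\ell u_i(y_k)\,z_k$. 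Picking any $T\subseteq\F$ with $|T|=d+1$, I would set
\[
  {\cal H}\;=\;\bigl\{\,\mathsf{SV}_\ell(\bar\beta,\bar\gamma)\;:\;\bar\beta\in\{\alpha_1,\dots,\alpha_N\}^\ell,\ \bar\gamma\in T^\ell\,\bigr\},
\]
so that $|{\cal H}|\le (N(d+1))^\ell=(O(Nd))^\ell$. One may assume $\ell\le N$, since otherwise every monomial has support at most $\ell$ and the claim reduces to hitting all nonzero degree-$d$ polynomials in $N$ variables, for which the grid $T^N$ (of size at most $(O(Nd))^\ell$) already works.

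The heart of the proof is the following claim: if $P\ne 0$ has a monomial $\mu$ of support $S$ with $|S|=m\le\ell$, then there is a choice of $\bar\beta$ for which $g(\bar\gamma):=P(\mathsf{SV}_\ell(\bar\beta,\bar\gamma))$ is a nonzero polynomial in $\bar\gamma$ of degree at most $d$. To see this, choose distinct indices $i_1,\dots,i_\ell\in[N]$ with $\{i_1,\dots,i_m\}=S$ (using $\ell\le N$) and set $\bar\beta=(\alpha_{i_1},\dots,\alpha_{i_\ell})$. By the defining property of the $u_i$, the $i$-th coordinate of $\mathsf{SV}_\ell(\bar\beta,\bar\gamma)$ equals $\gamma_k$ when $i=i_k$ for some $k$ and equals $0$ otherwise; hence $g$ is obtained from $P$ by setting every variable outside $\{i_1,\dots,i_\ell\}$ to $0$ and renaming $X_{i_k}\mapsto\gamma_k$. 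On the monomials of $P$ supported inside $\{i_1,\dots,i_\ell\}$ this renaming is injective, and it kills every other monomial, so no cancellation can occur; in particular the coefficient of $\prod_k\gamma_k^{e_k}$ in $g$ (where $e_k=\deg_{X_{i_k}}\mu$) equals the nonzero coefficient of $\mu$ in $P$, so $g\ne 0$. Finally $\bar\gamma\mapsto\mathsf{SV}_\ell(\bar\beta,\bar\gamma)$ is linear in $\bar\gamma$, so $\deg g\le\deg P\le d$.

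Given the claim, a nonzero polynomial of degree at most $d$ in $\ell$ variables cannot vanish identically on $T^\ell$ when $|T|=d+1$ (by \expref{Lemma}{lem: comb nulls}, or the combinatorial nullstellensatz), so some $\bar\gamma\in T^\ell$ satisfies $g(\bar\gamma)\ne 0$, i.e. $P(h)\ne 0$ for $h=\mathsf{SV}_\ell(\bar\beta,\bar\gamma)\in{\cal H}$. The set ${\cal H}$ is constructed by enumerating the at most $(O(Nd))^\ell$ pairs $(\bar\beta,\bar\gamma)$ and evaluating the $N$ coordinates of $\mathsf{SV}_\ell$ (each a sum of $\ell$ products involving degree-$(N-1)$ univariate polynomials) at each pair, which takes time $\poly(N,d,\ell)\cdot(O(Nd))^\ell$. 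The only real obstacle is the claim above, specifically the observation that substituting $\bar\beta$ into the ``$y$-block'' of the generator realizes exactly the restriction that zeroes out the variables outside $S$ while keeping (at least) the support coordinates as independent free variables; once that is in hand, the absence of cancellation and the degree bound are immediate, and the rest is bookkeeping.
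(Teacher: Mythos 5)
The paper does not prove this lemma---it is invoked as a known result, citing Shpilka--Volkovich and Forbes (Corollary 3.15). Your proof is a correct, self-contained derivation of it via the standard Shpilka--Volkovich generator: the construction of $\mathsf{SV}_\ell$ via Lagrange interpolants, the observation that substituting $\bar\beta=(\alpha_{i_1},\dots,\alpha_{i_\ell})$ realizes the restriction ``zero out coordinates outside $\{i_1,\dots,i_\ell\}$ and rename $X_{i_k}\mapsto\gamma_k$,'' the injectivity-of-renaming argument that rules out cancellation, and the grid/Schwartz--Zippel step to find a good $\bar\gamma\in T^\ell$ are all exactly the standard argument, and the size and construction-time bookkeeping is right. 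One small simplification you could have noted: since $\bar\beta$ always ranges over $\{\alpha_1,\dots,\alpha_N\}^\ell$, the values $u_i(\beta_k)$ are just Kronecker deltas, so each coordinate of $\mathsf{SV}_\ell(\bar\beta,\bar\gamma)$ is read off directly without actually evaluating any degree-$(N-1)$ polynomial---but this only improves constants and does not affect correctness.
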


The following lemma is our main technical claim. 
\begin{lemma}~\label{lem:low support monomial}
Let $\F$ be a field of characteristic zero. Let $P$ be a  homogeneous polynomial of degree-$\Delta$ in $N$ variables such that  $P$ can be represented as $$P = \sum_{i = 1}^T  C_{i}(Q_{i1}, Q_{i2}, \ldots, Q_{it})    $$
such that the following are true.  %
\begin{itemize}
\item For each $i \in[T]$, $C_i$ is a polynomial in $t$ variables. 
\item For each $i \in [T]$ and $j \in [t]$, $Q_{ij}$ is a polynomial of degree at most $d$ in $N$ variables. 
\item For each $i \in [T]$, the algebraic rank of the set $\{Q_{ij} : j \in [t]\}$ of polynomials  is at most $k$. 
\end{itemize}
Then, the trailing monomial of $P$ has support at most $$2\eee^3d\cdot(\ln \left({T(\Delta + 1)}\right) + (d+1)k\ln {\left(2(d+1)k\right)} + 1)\,.$$ Here, $\eee$ is
Euler's constant. 
\end{lemma}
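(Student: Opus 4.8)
The plan is to reduce, via the machinery of Section~\ref{sec: alg dep}, the problem of finding a low-support monomial in $P$ to the problem of finding a low-support monomial in a polynomial that is a sum of $T(\Delta+1)$ terms, each of which is an \emph{arbitrary} polynomial function of only $k(d+1)$ polynomials of degree at most $d$. Concretely, first I would apply \expref{Lemma}{lem:expressing as functions of the basis} to the given representation $P = \sum_i C_i(Q_{i1},\dots,Q_{it})$: after a suitable translation $\overline a$, we can write $P(\overline X + \overline a) = \sum_{i=1}^T F_i'(\h[Q_{i1}(\overline X + \overline a)],\dots,\h[Q_{ik}(\overline X + \overline a)])$, where each $F_i'$ has arity at most $k(d+1)$ and each $\h[Q_{ij}(\overline X + \overline a)]$ is a tuple of homogeneous components (each of degree at most $d$) of a translate of $Q_{ij}$. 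However, a translation can destroy the low-support structure of the trailing monomial, so I would be careful here: rather than translating $P$ itself, note that what we really want is a monomial-support bound for $P$, and the trailing monomial of $P$ is invariant under... actually, this is the subtle point (see below). The cleaner route is: since $P$ is homogeneous of degree $\Delta$, and since the trailing (minimal) monomial of $P$ under a fixed monomial order is what we want to bound, I would instead avoid translating and argue directly.

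The core combinatorial engine should be a result of Forbes (as the proof overview of the paper indicates — ``We then use a result of Forbes to show that the polynomial computed by $C$ has a monomial of small support''). The shape of that result is: if a nonzero polynomial can be written as a sum of $m$ terms, each term being a polynomial (of some degree) in at most $r$ polynomials of degree at most $d$, then it has a monomial (indeed its trailing monomial, for an appropriate order) of support bounded by roughly $O(d \cdot (\log m + r \log r))$. So the key steps in order are: (1) use \expref{Lemma}{lem:expressing as functions of the basis} / \expref{Corollary}{cor:using algebraic independence new} to pass to a representation of $P$ (or a translate of $P$) as a sum of $T' = T(\Delta+1)$ terms, each an arbitrary polynomial in $r = k(d+1)$ polynomials of degree $\le d$ — the factor $(\Delta+1)$ coming from splitting each $\h[\cdot]$ into its at most $d+1$ homogeneous components, or from \expref{Lemma}{lem:interpolation nonhomogeneous}-style bookkeeping that converts the functional form into $\Delta+1$ pieces so that homogeneity of $P$ can be exploited; (2) apply Forbes's low-support-monomial lemma to this representation with $m = T(\Delta+1)$ and $r = (d+1)k$, yielding the support bound $2\eee^3 d\cdot(\ln(T(\Delta+1)) + (d+1)k\ln(2(d+1)k) + 1)$; (3) deal with the translation: if we worked with $P(\overline X + \overline a)$, observe that a low-support \emph{monomial} of $P(\overline X+\overline a)$ need not directly give one of $P$, but a low-support monomial of $P(\overline X+\overline a)$ of degree $e$ implies $P$ has a monomial of degree $\ge e$... hmm — I expect the resolution is that one takes the \emph{trailing} monomial with respect to an order that makes translation harmless, e.g.\ by choosing coordinates so that the relevant structure is preserved, or by noting the trailing monomial of $P$ is unchanged when we substitute $X_j \mapsto X_j$ (no translation on $P$, only on the $Q_{ij}$ which is absorbed into the arbitrary function $C_i$).

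The step I expect to be the main obstacle is precisely this interaction between the translation $\overline a$ coming out of \expref{Lemma}{lem:expressing as functions of the basis} and the notion of ``trailing monomial of $P$''. One direction of resolving it: since we only need \emph{a} low-support monomial in $P$ to run the Shpilka–Volkovich hitting-set argument, and since $P$ itself is what the circuit computes, it is legitimate to instead prove that $P(\overline X + \overline a)$ has a low-support trailing monomial — but then the hitting set must be built for the \emph{class} of such polynomials, so one must check that $P(\overline X + \overline a)$ lies in a class that is still handled (it does, since after translation $P(\overline X+\overline a) = \sum_i F_i'(\dots)$ still has the required form, just with different arbitrary top functions). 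The remaining routine work is: verifying the arity count $k(d+1)$, verifying that splitting into homogeneous components multiplies the number of summands by at most $\Delta+1$ (using that $P$ is homogeneous of degree $\Delta$, so only the degree-$\Delta$ homogeneous part survives), and tracking the constants in Forbes's bound to land on exactly $2\eee^3 d\cdot(\ln(T(\Delta+1)) + (d+1)k\ln(2(d+1)k) + 1)$. None of these should present a conceptual difficulty once the translation issue is correctly framed.
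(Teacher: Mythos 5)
Your proposal identifies the right ingredients --- \expref{Lemma}{lem:expressing as functions of the basis}, extraction of homogeneous components via \expref{Lemma}{lem:interpolation nonhomogeneous}, and Forbes's low-support-monomial bound (\expref{Lemma}{lem:Forbes small support monomial}) with parameters $T(\Delta+1)$ and $(d+1)k$ --- and the overall shape matches the paper's proof. But you explicitly flag the interaction between the translation $\overline{a}$ and the trailing monomial of $P$ as unresolved, and the workarounds you float (a clever monomial order; building the hitting set for the class of translates $P(\overline{X}+\overline{a})$ instead) are neither necessary nor how the paper proceeds, so this is a genuine gap rather than a detail.

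The resolution you are missing --- and which you brush past with the phrase ``only the degree-$\Delta$ homogeneous part survives'' --- is that translation does not touch the top homogeneous component: since $P$ is homogeneous of degree $\Delta$, one has $\h^{\Delta}[P(\overline{X}+\overline{a})] = P(\overline{X})$ \emph{exactly}, because expanding a degree-$\Delta$ monomial in $\overline{X}+\overline{a}$ only introduces lower-degree terms. So after writing $P(\overline{X}+\overline{a})=\sum_i F_i'(\h[Q_{i1}(\overline{X}+\overline{a})],\ldots,\h[Q_{ik}(\overline{X}+\overline{a})])$ via \expref{Lemma}{lem:expressing as functions of the basis} and then using \expref{Lemma}{lem:interpolation nonhomogeneous} to extract the degree-$\Delta$ component of the right-hand side (blowing the top fan-in up to $T(\Delta+1)$ while keeping arity $(d+1)k$ and the degree bound $d$ on the inner polynomials), the left-hand side becomes $P(\overline{X})$ itself, untranslated. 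One then applies \expref{Lemma}{lem:Forbes small support monomial} directly to $P$, with $T \to T(\Delta+1)$ and $t \to (d+1)k$, yielding the stated bound $2\eee^3 d\cdot(\ln(T(\Delta+1)) + (d+1)k\ln(2(d+1)k) + 1)$ on the support of the trailing monomial. There is no need to reason about trailing monomials of translates, to choose special coordinates, or to enlarge the class that the hitting set must handle.
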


In order to prove \expref{Lemma}{lem:low support monomial}, we follow the outline of proving \emph{robust} lower bounds for arithmetic circuits, described and used by Forbes~\cite{Forbes-personal}. This essentially amounts to showing that the trailing monomial of $P$ has small support. We use the following result of Forbes~\cite{Forbes-personal} in a blackbox manner which greatly simplifies our proof. 

\begin{lemma}[Proposition 4.18 in Forbes~\cite{Forbes-personal}]\label{lem:Forbes small support monomial}
Let $R(\overline{X})$ be a polynomial in $\F[\overline{X}]$ such that $$R(\overline{X}) = \sum_{i=1}^{T} F_{i}(Q_{i1}, Q_{i2}, \ldots, Q_{it})$$
and for each $i\in [T]$ and $j \in [t]$, the degree of $Q_{ij}$ is at most $d$. Let $\alpha$ be the trailing monomial of $R$. Then, the support of $\alpha$ is at most  $2\eee^3d(\ln {T} + t\ln {2t} + 1)$, where $\eee$ is
Euler's constant.
\end{lemma}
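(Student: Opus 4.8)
The plan is to reshape $P$ into a sum of compositions of \emph{few}, \emph{low-degree} polynomials and then invoke Forbes's bound (\expref{Lemma}{lem:Forbes small support monomial}) as a black box. We may assume $P \not\equiv 0$, and (running the argument below with the largest of the algebraic ranks in place of $k$ if necessary) that for each $i$ the algebraic rank of $\{Q_{ij} : j \in [t]\}$ is exactly $k$. After relabelling, let $\{Q_{i1}, \dots, Q_{ik}\}$ be a transcendence basis of $\{Q_{ij} : j \in [t]\}$ and apply \expref{Lemma}{lem:expressing as functions of the basis}: this yields a single translation $\overline{a} \in \F^N$ and polynomials $F_i'$, each in at most $k(d+1)$ variables, such that
\[
P(\overline{X}+\overline{a}) = \sum_{i=1}^T F_i'\bigl(\h[Q_{i1}(\overline{X}+\overline{a})], \dots, \h[Q_{ik}(\overline{X}+\overline{a})]\bigr),
\]
where every argument $\h^{\ell}[Q_{ij}(\overline{X}+\overline{a})]$ is a homogeneous polynomial of degree $\ell \le d$.

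Next I would recover $P(\overline{X})$ by extracting the top homogeneous component. Since $P$ is homogeneous of degree $\Delta$, translating it does not change its degree-$\Delta$ part, so $P(\overline{X}) = \h^{\Delta}[P(\overline{X}+\overline{a})] = \sum_{i=1}^{T} \h^{\Delta}[F_i'(\dots)]$ by linearity of $\h^{\Delta}[\cdot]$. The crucial point is that $\h^{\Delta}$ of a composition of \emph{homogeneous} polynomials is again a composition of the \emph{same} polynomials: if $\hat{F}_i'$ denotes the sub-polynomial of $F_i'$ formed by its monomials $\prod_{\nu} Y_{\nu}^{\mu_{\nu}}$ of ``weighted degree'' $\sum_{\nu} \ell_{\nu}\mu_{\nu}$ equal to $\Delta$ (where $\ell_{\nu}$ is the degree of the $\nu$-th argument), then $\h^{\Delta}[F_i'(\dots)] = \hat{F}_i'(\dots)$, since once the homogeneous arguments are substituted each monomial of $F_i'$ becomes homogeneous of degree exactly its weighted degree. (This is the weighted-degree analogue of the interpolation argument underlying \expref{Lemma}{lem:interpolation nonhomogeneous} and \expref{Corollary}{cor:using algebraic independence new}.) Therefore
\[
P = \sum_{i=1}^{T} \hat{F}_i'\bigl(\h[Q_{i1}(\overline{X}+\overline{a})], \dots, \h[Q_{ik}(\overline{X}+\overline{a})]\bigr),
\]
which exhibits $P$ itself as a sum of $T$ compositions, each of arity at most $(d+1)k$ (pad with constant polynomials so each arity is exactly $(d+1)k$), of polynomials of degree at most $d$.

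Finally I would apply \expref{Lemma}{lem:Forbes small support monomial} to this last representation, with the parameters $T$, $t$, $d$ there set to $T$, $(d+1)k$, $d$ respectively: the trailing monomial of $P$ then has support at most $2\eee^3 d\bigl(\ln T + (d+1)k\ln(2(d+1)k) + 1\bigr)$, which is at most the claimed quantity since $\ln T \le \ln\bigl(T(\Delta+1)\bigr)$. (This is in fact slightly stronger than stated; the factor $\Delta+1$ is what one would pay if one extracted $\h^{\Delta}$ through \expref{Lemma}{lem:interpolation nonhomogeneous} instead of by the weighted-degree truncation above.) The main obstacle is the middle step: juggling the translation, the passage to homogeneous components, and the arity bound at once, and in particular verifying that taking $\h^{\Delta}$ does not reintroduce many inner polynomials. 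It does not, precisely because \expref{Lemma}{lem:expressing as functions of the basis} supplies \emph{homogeneous} arguments, so truncating the outer functions $F_i'$ by weighted degree suffices; the rest is a direct appeal to \expref{Lemma}{lem:expressing as functions of the basis} and to Forbes's lemma.
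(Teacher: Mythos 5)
What you have written is not a proof of the stated lemma. The target statement is \expref{Lemma}{lem:Forbes small support monomial} itself---Forbes's bound that the trailing monomial of \emph{any} polynomial of the form $\sum_{i=1}^{T} F_i(Q_{i1},\ldots,Q_{it})$ with $\deg(Q_{ij})\le d$ has support at most $2\eee^3 d(\ln T + t\ln 2t + 1)$. Your argument instead proves \expref{Lemma}{lem:low support monomial} (the variant with an algebraic-rank hypothesis and the bound involving $k$ and $\Delta$), and it does so by invoking \expref{Lemma}{lem:Forbes small support monomial} ``as a black box'' in its very first sentence. As a proof of the target statement this is circular: the entire content of the lemma is assumed rather than established. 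Note also that a correct proof cannot use any of the machinery you deploy (translations, transcendence bases, \expref{Lemma}{lem:expressing as functions of the basis}), since the lemma makes no rank assumption at all and the arity $t$ appears explicitly in the bound.

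For the record, the paper does not prove this lemma either; it is imported verbatim as Proposition 4.18 of Forbes~\cite{Forbes-personal}, whose argument is a ``robust lower bound'': one shows that a polynomial admitting such a representation has small complexity under a suitable measure, while a trailing monomial of large support would force that measure to be large, and balancing the two gives the $\ln T + t\ln 2t$ dependence. None of that analysis appears in your write-up. That said, your reduction is a sensible---and in fact slightly sharper---route to \expref{Lemma}{lem:low support monomial} than the paper's own, since extracting $\h^{\Delta}$ by weighted-degree truncation of the outer polynomials avoids the extra $\Delta+1$ blow-up in the top fan-in that the paper incurs via \expref{Lemma}{lem:interpolation nonhomogeneous}; but that is a proof of a different statement.
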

We now proceed to prove \expref{Lemma}{lem:low support monomial}. 
\begin{proof}[Proof of \expref{Lemma}{lem:low support monomial}]
Recall that  our goal is to show that the polynomial $P$, which can be represented as
$$P = \sum_{i = 1}^T  C_{i}(Q_{i1}, Q_{i2}, \ldots, Q_{it})\, ,    $$
has a trailing monomial of small support. 

For every $i \in [T]$, let ${\cal Q}_i = \{Q_{i1}, Q_{i2}, \ldots, Q_{it}\}$ and let ${\cal Q}_i$ be of algebraic rank $k_i$. 
Without loss of generality, let us assume  the sets ${\cal B}_i = \{Q_{i1}, Q_{i2},\ldots, Q_{i{k_i}}\}$ are the sets guaranteed by 
 \expref{Lemma}{lem:expressing as functions of the basis}. This implies that there exist polynomials $F_1, F_2, \ldots, F_T$ and $\overline{a}\in \F^N$ such that
\begin{equation}
P(\overline{X}+\overline{a}) = \left[\sum_{i=1}^T F_{i}(\h[Q_{i1}(\overline{X} + \overline{a})], \h[Q_{i2}(\overline{X} + \overline{a})], \ldots, \h[Q_{ik_i}(\overline{X} + \overline{a})])\right]\,.
\end{equation}
Since each $k_i \leq k$, for the ease of notation, we assume that each $k_i = k$.
Observe that if $P$ is a homogeneous polynomial of degree  $ \deg(P) \leq \Delta$, then, $$\h^{\deg(P)}[P(\overline{X}+\overline{a})] \equiv P(\overline{X})\,.$$ 
So, from \expref{Lemma}{lem:interpolation nonhomogeneous}, it follows that there exist
$k(d+1)$-variate  %
polynomials $F_1', F_2', \ldots, F_{T(\Delta+1)}'$ and a set   $\{Q'_{ij}: i \in [T(\Delta+1)], j \in [k] \}$  of polynomials such that 
$$P(\overline{X}) = \sum_{i=1}^{T(\Delta+1)} F'_{i}(\h[Q'_{i1}(\overline{X} + \overline{a})], \h[Q'_{i2}(\overline{X} + \overline{a})], \ldots, \h[Q'_{ik}(\overline{X} + \overline{a})])\,.$$
Moreover, every polynomial in the set $\{Q'_{ij}: i \in [T(\Delta+1)], j \in [k] \}$ has degree at most $d$. Now,  \expref{Lemma}{lem:Forbes small support monomial} implies that the trailing monomial $\alpha$ of $P(\overline{X})$ has support at most 
\[
  2\eee^3d\cdot(\ln {\left(T(\Delta + 1)\right)} + (d+1)k\ln {\left(2(d+1)k\right)} + 1)\,.\qedhere
\]
\end{proof}
We are now ready to complete the proof of \expref{Theorem}{thm:PIT}. 
\begin{proof}[Proof of \expref{Theorem}{thm:PIT}]
{}From \expref{Definition}{def:lb-model}, it follows there could be non-homogeneous polynomials $P \in {\cal C}$. So, we cannot directly use \expref{Lemma}{lem:low support monomial} to say something about them, since the proof relies on homogeneity. But, this is not a problem, since a polynomial is identically zero if and only if all its homogeneous components are identically zero. Moreover, by applying \expref{Lemma}{lem:interpolation nonhomogeneous} to every summand feeding into the top sum gate of the circuit, we get that every homogeneous component of $P$\footnote{Only the top fan-in increases by a factor of $\Delta+1$, all other parameters in \expref{Definition}{def:lb-model} remain the same.} can also be computed by a circuit similar in structure to that of $P$ at the cost of a blow up by a factor $\Delta+1$ in the top fan-in. We can then apply \expref{Lemma}{lem:low support monomial} to each of these homogeneous components to conclude that if $P$ is not identically zero, then it contains a monomial of support at most 
$$ 2\eee^3d\cdot(\ln {\left(T(\Delta + 1)^2\right)} + (d+1)k\ln {\left(2(d+1)k\right)} + 1)\,.$$ \expref{Theorem}{thm:PIT} immediately follows by detecting the low support monomial using \expref{Lemma}{lem:low support monomial} and \expref{Lemma}{lem:SV gen}.  
\end{proof}

\section{Open questions}\label{sec:open questions}
We
conclude %
with some open
questions.  %
\begin{itemize}
\item
Prove the lower bounds in the paper for a polynomial in $\VP$. We believe this is true, but it seems that we need a strengthening of the bounds proved in~\cite{KS-full}. In particular, it needs to be shown that the lower bound for $IMM$ (Iterated matrix multiplication) continues to hold when a depth-$4$ circuit is not homogeneous but the formal degree is at most the square of the degree of the polynomial itself. 

\item It would be interesting to see if there are other applications of \expref{Lemma}{lem:using algebraic dependence-intro} to questions in complexity theory. The Jacobian characterization of algebraic independence has several very interesting  applications~\cite{ASSS12, DGW09}.
\end{itemize}

\section*{Acknowledgements}
Many thanks to Ramprasad Saptharishi for answering numerous questions regarding the results and techniques in~\cite{ASSS12}. We are also thankful to Michael Forbes for sharing a draft of his paper~\cite{Forbes-personal} with us and to anonymous reviewers for comments which helped us in improving the presentation of the paper.

\bibliographystyle{tocplain}   %
\bibliography{v013a006}

\begin{tocauthors}
\begin{tocinfo}[kumar]
Mrinal Kumar \\
Rutgers University, New Brunswick, NJ\\
 mrinalkumar08\tocat{}gmail\tocdot{}com \\
\url{https://mrinalkr.bitbucket.io/}
\end{tocinfo}

\begin{tocinfo}[saraf]
Shubhangi Saraf\\
Rutgers University, New Brunswick, NJ\\
 shubhangi\tocdot{}saraf\tocat{}gmail\tocdot{}com \\
 \url{http://sites.math.rutgers.edu/~ss1984/}
\end{tocinfo}

\end{tocauthors}

\begin{tocaboutauthors}

\begin{tocabout}[kumar]
 \textsc{Mrinal Kumar} received his \phd\ in Computer Science
in May 2017 from \href{http://www.rutgers.edu/}{Rutgers University}
where he was advised by \href{http://www.math.rutgers.edu/~sk1233/}{Swastik Kopparty} and \href{https://www.math.rutgers.edu/~ss1984/}{Shubhangi Saraf}. His research interests are in Arithmetic and Boolean circuit complexity and Error Correcting Codes. 
Mrinal spent his undergrad years at \href{https://www.iitm.ac.in/}{IIT Madras} and owes his interest in Complexity Theory to a delightful class on the topic taught by \href{http://www.cse.iitm.ac.in/~jayalal/}{Jayalal Sarma}. Apart from theory, he finds great joy in test cricket and in the adventures of Calvin \& Hobbes.   
\end{tocabout}

\begin{tocabout}[saraf]
\textsc{Shubhangi Saraf} grew up in Pune, India. She received her \phd\
in computer science from the Massachusetts Institute of Technology
in 2011 under the guidance of
\href{http://madhu.seas.harvard.edu/}{Madhu Sudan}.
Shubhangi is broadly interested in complexity theory, coding theory
and pseudorandomness. Recently she has been captivated
by questions related to understanding the power and limitations
of algebraic computation, as well as to understanding the
potential of \emph{locality} in algorithms for codes.

\noindent
Shubhangi discovered her love for mathematics in her
high school years at the 
\href{https://www.bprim.org}{Bhaskara\-charya Pratishthana},
an educational and research institute in mathematics in Pune,
under the guidance and mentoring of her teacher
Mr. Prakash Mulabagal.  Mr. Prakash ran an amazing program
aimed at getting high school students from across Pune
introduced to the  joy of math and the sciences beyond what
any school curriculum in Pune could possibly attempt to do.
Shubhangi owes a great deal of her enthusiasm for
math problem solving to Mr. Prakash, and also to being able,
through the Bhaskaracharya Pratishthana program,
to make close friends in Pune who were into the same thing.

\noindent
Thanks to this nurturing environment, Shubhangi got involved
in math competitions and represented India twice at the
International Mathematical Olympiad (IMO), once winning a
bronze medal (2002) and once a silver (2003).

\noindent
She went on to do her undergraduate studies in Mathematics
at MIT, graduating in 2007.  She did not really know that
she wanted to stay on in academia until her junior year
when she spent a year abroad as a \emph{mathmo} at
Cambridge University in the UK where she took fantastic
courses by Tim Gowers and Imre Leader. Once back at MIT,
in summer 2006, she did a research project with Igor Pak
at MIT, which gave her a lot of confidence and encouragement.
She was also fortunate to take some more great courses at MIT;
``Randomized algorithms'' by David Karger and
``Complexity theory'' by Madhu Sudan were particularly
influential. The support and encouragement from her MIT
mentors eventually got her on the path to theoretical
computer science.

\noindent
In her spare time Shubhangi enjoys reading, cooking,
long walks, and exploring caf\'es and restaurants. Her
little toddler is a constant source of joy and amazement,
and she also makes sure there isn't much time to spare.
\end{tocabout}
\end{tocaboutauthors}

\end{document}